\newcommand{\fun}[1]{\ensuremath{\textsl{#1}}}
\newcommand{\eps}{\varepsilon}
\DeclareMathOperator{\alp}{\rm alph}
\DeclareMathOperator{\A}{\mathcal A}
\DeclareMathOperator{\B}{\mathcal B}
\DeclareMathOperator{\D}{\mathcal D}
\DeclareMathOperator{\M}{\mathcal M}
\theoremstyle{plain}
\newtheorem{fact}[theorem]{Fact}
\newcommand{\newreptheorem}[2]{\newtheorem*{rep@#1}{\rep@title}\newenvironment{rep#1}[1]{\def\rep@title{#2 \ref*{##1}}\begin{rep@#1}}{\end{rep@#1}}}
\title{Piecewise Testable Languages and Nondeterministic Automata\footnote{This work was supported by the DFG in project DIAMOND (Emmy Noether grant KR~4381/1-1).}}
\author{Tom\'{a}\v{s} Masopust}
\affil{Fakult\"at Informatik, Technische Universit\"at Dresden, Germany and\\
    Institute of Mathematics CAS, Czech Republic\\
  \texttt{tomas.masopust@tu-dresden.de}}
\authorrunning{T. Masopust} 
\subjclass{F.1.1 Models of Computation, F.4.3 Formal Languages}
\keywords{Automata, Logics, Languages, $k$-piecewise testability, Nondeterminism}
\begin{document}

\maketitle

\begin{abstract}
  A regular language is $k$-piecewise testable if it is a finite boolean combination of languages of the form $\Sigma^* a_1 \Sigma^* \cdots \Sigma^* a_n \Sigma^*$, where $a_i\in\Sigma$ and $0\le n \le k$. Given a DFA $\A$ and $k\ge 0$, it is an NL-complete problem to decide whether the language $L(\A)$ is piecewise testable and, for $k\ge 4$, it is coNP-complete to decide whether the language $L(\A)$ is $k$-piecewise testable. It is known that the depth of the minimal DFA serves as an upper bound on $k$. Namely, if $L(\A)$ is piecewise testable, then it is $k$-piecewise testable for $k$ equal to the depth of $\A$. In this paper, we show that some form of nondeterminism does not violate this upper bound result. Specifically, we define a class of NFAs, called ptNFAs, that recognize piecewise testable languages and show that the depth of a ptNFA provides an (up to exponentially better) upper bound on $k$ than the minimal DFA. We provide an application of our result, discuss the relationship between $k$-piecewise testability and the depth of NFAs, and study the complexity of $k$-piecewise testability for ptNFAs.
\end{abstract}

\section{Introduction}
  A regular language $L$ over an alphabet $\Sigma$ is {\em piecewise testable\/} if it is a finite boolean combination of languages of the form 
  \[
    L_{a_1a_2\ldots a_n} = \Sigma^* a_1 \Sigma^* a_2 \Sigma^* \cdots \Sigma^* a_n \Sigma^*
  \]
  where $a_i\in \Sigma$ and $n\ge 0$. If $L$ is piecewise testable, then there exists a nonnegative integer $k$ such that $L$ is a finite boolean combination of languages $L_u$, where the length of $u\in\Sigma^*$ is at most $k$. In this case, the language $L$ is called {\em $k$-piecewise testable}.

  Piecewise testable languages are studied in semigroup theory~\cite{Almeida2008486,AlmeidaZ-ita97,PerrinPin} and in logic over words~\cite{DiekertGK08,mfcsPlaceRZ13} because of their close relation to first-order logic $\textrm{FO}(<)$. They actually form the first level of the Straubing-Th\'erien hierarchy~\cite{PerrinP86,Thomas82}. This hierarchy is closely related to the dot-depth hierarchy~\cite{CohenB71}, see more in~\cite{KufleitnerL12}. They are indeed studied in formal languages and automata theory~\cite{KlimaP13}, recently mainly in the context of separation~\cite{mfcsPlaceRZ13,lvanrooijen}. Although the separability of context-free languages by regular languages is undecidable, separability by piecewise testable languages is decidable~\cite{CzerwinskiM14} (even for some non-context-free families). Piecewise testable languages form a strict subclass of star-free languages, that is, of the limit of the above-mentioned hierarchies or, in other words, of the languages definable by LTL logic. They are investigated in natural language processing~\cite{FuHT2011,Rogers:2007}, in cognitive and sub-regular complexity~\cite{RogersHFHLW12}, in learning theory~\cite{GarciaR04,Kontorovich2008}, and in databases in the context of XML schema languages~\cite{icalp2013,HofmanM15,mfcs2014}. They have been extended from words to trees~\cite{Bojanczyk:2012,GarciaV90}.
  
  Recently, the complexity of computing the minimal $k$ and/or bounds on $k$ for which a piecewise testable language is $k$-piecewise testable was studied in~\cite{HofmanM15,KKP,KlimaP13}, motivated by applications in databases and algebra and logic. However, the knowledge of such a $k$ that is either minimal or of reasonable size is of interest in many other applications as well. The complexity to test whether a piecewise testable language is $k$-piecewise testable was shown to be coNP-complete for $k\ge 4$ if the language is given as a DFA~\cite{KKP} and PSPACE-complete if the language is given as an NFA~\cite{dlt15}. The complexity for DFAs and $k<4$ is discussed in detail in~\cite{dlt15}. The best upper bound on $k$ known so far is given by the depth of the minimal DFA~\cite{KlimaP13}.
  
  In this paper, we define a class of NFAs, called ptNFAs, that characterizes piecewise testable languages. This characterization is based on purely structural properties, therefore it is NL-complete to check whether an NFA is a ptNFA (Theorem~\ref{thmMainNL}). We show that the depth of ptNFAs also provides an upper bound on $k$-piecewise testability (Theorem~\ref{thmMain}) and that this new bound is up to exponentially lower than the one given by minimal DFAs (Section~\ref{demo} and Theorem~\ref{thmApp}). We further show that this property does not hold for general NFAs, and that the gap between $k$-piecewise testability and the depth of NFAs can be arbitrarily large (Theorem~\ref{propMain}). The opposite implication of Theorem~\ref{thmMain} does not hold. We give a non-trivial application of our result in Section~\ref{sec5}, where we also provide more discussion. Finally, in Section~\ref{sec4}, we discuss the complexity of $k$-piecewise testability for ptNFAs.
  
  The paper is organized as follows. Section~\ref{sec2} presents basic notions and definitions, fixes the notation, and defines the ptNFAs. Section~\ref{sec3} motivates and demonstrates Theorem~\ref{thmMain} on a simple example. Section~\ref{secDepth} then proves Theorem~\ref{thmMain} and the related results. Section~\ref{sec5} provides a non-trivial application and further discussion. Section~\ref{sec4} recalls the known complexity results and discusses the complexity of the related problems for ptNFAs. Section~\ref{conclusion} concludes the paper.

\section{Preliminaries and Definitions}\label{sec2}
  We assume that the reader is familiar with automata theory, see, e.g., \cite{AhoHU74}. The cardinality of a set $A$ is denoted by $|A|$ and the power set of $A$ by $2^A$. An alphabet, $\Sigma$, is a finite nonempty set; the elements of an alphabet are called symbols or letters. The free monoid generated by $\Sigma$ is denoted by $\Sigma^*$. A word over $\Sigma$ is any element of $\Sigma^*$; the empty word is denoted by $\eps$. For a word $w\in\Sigma^*$, $\alp(w)\subseteq\Sigma$ denotes the set of all letters occurring in $w$, and $|w|_a$ denotes the number of occurrences of letter $a$ in $w$. A language over $\Sigma$ is a subset of $\Sigma^*$. For a language $L$ over $\Sigma$, let $\overline{L}=\Sigma^*\setminus L$ denote the complement of $L$.

  A {\em nondeterministic finite automaton\/} (NFA) is a quintuple $\A = (Q,\Sigma,\cdot,I,F)$, where $Q$ is a finite nonempty set of states, $\Sigma$ is an input alphabet, $I\subseteq Q$ is a set of initial states, $F\subseteq Q$ is a set of accepting states, and $\cdot : Q\times\Sigma \to 2^Q$ is the transition function that can be extended to the domain $2^Q \times \Sigma^*$ by induction. The language {\em accepted\/} by $\A$ is the set $L(\A) = \{w\in\Sigma^* \mid I \cdot w \cap F \neq \emptyset\}$. In what follows, we usually omit $\cdot$ and write simply $Iw$ instead of $I\cdot w$. 
  
  A {\em path\/} $\pi$ from a state $q_0$ to a state $q_n$ under a word $a_1a_2\cdots a_{n}$, for some $n\ge 0$, is a sequence of states and input symbols $q_0 a_1 q_1 a_2 \ldots q_{n-1} a_{n} q_n$ such that $q_{i+1} \in q_i\cdot a_{i+1}$, for all $i=0,1,\ldots,n-1$. The path $\pi$ is {\em accepting\/} if $q_0\in I$ and $q_n\in F$. We use the notation $q_0 \xrightarrow{a_1a_2\cdots a_{n}} q_{n}$ to denote that there exists a path from $q_0$ to $q_n$ under the word $a_1a_2\cdots a_{n}$. A path is {\em simple\/} if all states of the path are pairwise distinct. The number of states on the longest simple path of $\A$, starting in an initial state, decreased by one (i.e., the number of transitions on that path) is called the {\em depth\/} of the automaton $\A$, denoted by $\fun{depth}(\A)$. 
  
  The NFA $\A$ is {\em complete\/} if for every state $q$ of $\A$ and every letter $a$ in $\Sigma$, the set $q\cdot a$ is nonempty, that is, in every state, a transition under every letter is defined.
  
  Let $\A=(Q,\Sigma,\cdot,I,F)$ be an NFA, and let $p$ be a state of $\A$. The sub-automaton of $\A$ induced by state $p$ is the automaton $\A_p = (\fun{reach}(p),\Sigma,\cdot_p,p,F\cap \fun{reach}(p))$ with state $p$ being the sole initial state and with only those states of $\A$ that are reachable from $p$; formally, $\fun{reach}(p)$ denotes the set of all states reachable from state $p$ in $\A$ and $\cdot_p$ is a restriction of $\cdot$ to $\fun{reach}(p)\times\Sigma$.

  The NFA $\A$ is {\em deterministic\/} (DFA) if $|I|=1$ and $|q\cdot a|=1$ for every state $q$ in $Q$ and every letter $a$ in $\Sigma$. Then the transition function $\cdot$ is a map from $Q\times\Sigma$ to $Q$ that can be extended to the domain $Q\times\Sigma^*$ by induction. Two states of a DFA are {\em distinguishable\/} if there exists a word $w$ that is accepted from one of them and rejected from the other. A DFA is {\em minimal\/} if all its states are reachable and pairwise distinguishable. 

  Let $\A=(Q,\Sigma,\cdot,I,F)$ be an NFA. The reachability relation $\le$ on the set of states is defined by $p\le q$ if there exists a word $w$ in $\Sigma^*$ such that $q\in p\cdot w$. The NFA $\A$ is {\em partially ordered\/} if the reachability relation $\le$ is a partial order. For two states $p$ and $q$ of $\A$, we write $p < q$ if $p\le q$ and $p\ne q$. A state $p$ is {\em maximal\/} if there is no state $q$ such that $p < q$. Partially ordered automata are sometimes also called acyclic automata.
  
  An NFA $\A=(Q,\Sigma,\cdot,I,F)$ can be turned into a directed graph $G(\A)$ with the set of vertices $Q$, where a pair $(p,q)$ in $Q \times Q$ is an edge in $G(\A)$ if there is a transition from $p$ to $q$ in $\A$. For $\Gamma \subseteq \Sigma$, we define the directed graph $G(\A,\Gamma)$ with the set of vertices $Q$ by considering all those transitions that correspond to letters in $\Gamma$. For a state $p$, let $\Sigma(p)=\{a\in\Sigma \mid p\in p\cdot a\}$ denote the set of all letters under which the NFA $\A$ has a self-loop in state $p$. Let $\A$ be a partially ordered NFA. If for every state $p$ of $\A$, state $p$ is the unique maximal state of the connected component of $G(\A,\Sigma(p))$ containing $p$, then we say that the NFA satisfies the {\em unique maximal state (UMS) property}.
  
  An equivalent notion to the UMS property for minimal DFAs has been introduced in the literature. A DFA $\A$ over $\Sigma$ is {\em confluent\/} if, for every state $q$ of $\A$ and every pair of letters $a, b$ in $\Sigma$, there exists a word $w$ in $\{a, b\}^*$ such that $(q a) w = (q b) w$. 
  
  We adopt the notation $L_{a_1 a_2 \cdots a_n} = \Sigma^* a_1 \Sigma^* a_2 \Sigma^* \cdots \Sigma^* a_n \Sigma^*$ from~\cite{KlimaP13}. For two words $v = a_1 a_2 \cdots a_n$ and $w \in L_{v}$, we say that $v$ is a {\em subsequence\/} of $w$ or that $v$ can be {\em embedded\/} into $w$, denoted by $v \preccurlyeq w$. For $k\ge 0$, let $\fun{sub}_k(v) =\{u\in\Sigma^* \mid u\preccurlyeq v,\, |u|\le k\}$. For two words $w_1,w_2$, define $w_1 \sim_k w_2$ if and only if $\fun{sub}_k(w_1)=\fun{sub}_k(w_2)$. Note that $\sim_k$ is a congruence with finite index.
  
  The following is well known.
  \begin{fact}[\cite{Simon1972}]\label{mainProperty}
    Let $L$ be a regular language, and let $\sim_L$ denote the Myhill congruence~\cite{Myhill}. A language $L$ is $k$-piecewise testable if and only if $\sim_k\subseteq \sim_L$. Moreover, $L$ is a finite union of $\sim_k$ classes. 
  \end{fact}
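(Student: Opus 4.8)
The plan is to prove the two implications directly from the definitions, using only that $\sim_k$ is a congruence of finite index (as already noted above).

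First I would establish the forward direction: if $L$ is $k$-piecewise testable, then $\sim_k\subseteq\sim_L$. The key observation is that for a word $u$ with $|u|\le k$ one has $w\in L_u \iff u\preccurlyeq w \iff u\in\fun{sub}_k(w)$, so membership in $L_u$, and hence in any boolean combination of such languages, is constant on each $\sim_k$-class. Writing $L$ as such a boolean combination, $L$ is therefore a union of $\sim_k$-classes. To upgrade this to $\sim_k\subseteq\sim_L$ I would invoke that $\sim_k$ is a congruence: if $u\sim_k v$, then $xuy\sim_k xvy$ for all $x,y\in\Sigma^*$, hence $xuy\in L \iff xvy\in L$, which is exactly $u\sim_L v$.

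Next I would prove the converse together with the additional claim that $L$ is a finite union of $\sim_k$-classes. Since $L$ is a union of $\sim_L$-classes (take $x=y=\eps$ in the definition of the Myhill congruence), the hypothesis $\sim_k\subseteq\sim_L$ forces $L$ to be a union of $\sim_k$-classes, and finiteness of the index of $\sim_k$ makes this union finite; this gives the additional claim. It then suffices to show that a single $\sim_k$-class is $k$-piecewise testable. For a word $w$ I would verify the identity
\[
  \{\,v\in\Sigma^* \mid v\sim_k w\,\} = \Bigl(\bigcap_{u\in\fun{sub}_k(w)} L_u\Bigr) \cap \Bigl(\bigcap_{u\notin\fun{sub}_k(w),\ |u|\le k} \overline{L_u}\Bigr),
\]
observing that a word $v$ lies in the first intersection exactly when $\fun{sub}_k(w)\subseteq\fun{sub}_k(v)$ and in the second exactly when $\fun{sub}_k(v)\subseteq\fun{sub}_k(w)$, so that $v$ lies in the intersection iff $\fun{sub}_k(v)=\fun{sub}_k(w)$. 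Because $\Sigma$ is finite there are only finitely many words $u$ with $|u|\le k$, so the right-hand side is a finite boolean combination of languages $L_u$ with $|u|\le k$; hence every $\sim_k$-class, and therefore the finite union $L$, is $k$-piecewise testable.

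I do not expect a genuine obstacle here: the argument is elementary once one notices that the $\sim_k$-class of $w$ is pinned down by the finite set $\fun{sub}_k(w)$ both positively (which short subsequences occur in $w$) and negatively (which do not). The only points demanding care are the use of the congruence property of $\sim_k$ in the forward direction and the finite index of $\sim_k$---which guarantees that the union describing $L$ and the intersections describing a class are all finite---in the converse.
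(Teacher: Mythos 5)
Your proof is correct. Note that the paper does not prove this statement at all---it is quoted as a known fact with a citation to Simon's work---so there is no internal proof to compare against; your argument is the standard, self-contained one: the forward direction uses that membership in each $L_u$ with $|u|\le k$ depends only on $\fun{sub}_k(\cdot)$ plus the congruence property of $\sim_k$ to pass from ``union of $\sim_k$-classes'' to $\sim_k\subseteq\sim_L$, and the converse writes each $\sim_k$-class as $\bigl(\bigcap_{u\in\fun{sub}_k(w)} L_u\bigr)\cap\bigl(\bigcap_{u\notin\fun{sub}_k(w),\,|u|\le k}\overline{L_u}\bigr)$, which is a finite boolean combination since $\Sigma$ is finite, with the finite index of $\sim_k$ giving the ``finite union'' claim. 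Both steps check out, including the subtle point that the negative intersection exactly enforces $\fun{sub}_k(v)\subseteq\fun{sub}_k(w)$.
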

  
  We will use this fact in several proofs in the form that if $L$ is not $k$-piecewise testable, then there exist two words $u$ and $v$ such that $u \sim_k v$ and $|L\cap\{u,v\}|=1$.
  
  \begin{fact}\label{thm:characterization}
    Let $L$ be a language recognized by the minimal DFA $\A$. The following is equivalent.
    \begin{enumerate}
      \item The language $L$ is piecewise testable.
      \item The minimal DFA $\A$ is partially ordered and confluent~\cite{KlimaP13}.
      \item The minimal DFA $\A$ is partially ordered and satisfies the UMS property~\cite{Trahtman2001}.
    \end{enumerate}
  \end{fact}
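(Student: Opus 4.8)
The plan is to prove the four implications $(1)\Rightarrow(2)$, $(1)\Rightarrow(3)$, $(2)\Rightarrow(1)$, and $(3)\Rightarrow(1)$; the last two are the characterisations of Kl\'ima and Pol\'ak~\cite{KlimaP13} and of Trahtman~\cite{Trahtman2001}, and $(2)\Leftrightarrow(3)$ then follows at once. For the two forward directions assume $L$ is $k$-piecewise testable, so that $\sim_k\subseteq\sim_L$ by Fact~\ref{mainProperty}; these directions are routine. For $(1)\Rightarrow(2)$, partial order: if there were distinct states with $q_1\cdot s=q_2$ and $q_2\cdot t=q_1$, distinguished by a word $z$, then with $q_0\cdot x=q_1$ one checks $x(st)^kz\sim_k x(st)^ksz$---the middle factors $(st)^k$ and $(st)^ks$ have the same subsequences of length at most $k$, since such a subsequence of $(st)^ks$ must skip one of the $k+1$ blocks $st,\dots,st,s$---while exactly one of the two words lies in $L$ (they reach $q_1\cdot z$ and $q_2\cdot z$), contradicting $\sim_k\subseteq\sim_L$. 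Confluence: for letters $a,b$ the words $a(ab)^k$ and $b(ab)^k$ have the same subsequences of length at most $k$, namely all words of that length over $\{a,b\}$; hence $a(ab)^k\sim_k b(ab)^k$, so $a(ab)^k\sim_L b(ab)^k$, which in the minimal DFA says $(q\cdot a)\cdot(ab)^k=(q\cdot b)\cdot(ab)^k$ for every state $q$, so $w=(ab)^k\in\{a,b\}^*$ witnesses confluence at $q$.

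For $(1)\Rightarrow(3)$, fix a state $p$, put $\Gamma=\Sigma(p)$, and pick a word $u$ with $\alp(u)=\Gamma$, so that $p\cdot u=p$. The key point is that whenever $q_0\cdot x=t$ and $v,v'\in\Gamma^*$, then $xvu^k\sim_k xv'u^k$: after the common prefix $x$, both tails belong to $\Gamma^*$ and end with $u^k$, so every word of length at most $k$ over $\Gamma$ is a subsequence of either tail. Hence $t\cdot v\cdot u^k=t\cdot v'\cdot u^k$. Applying this with $v$ a single letter of $\Gamma$ and $v'$ a word leading $t$ to $p$ shows that the set of states from which $p$ is reachable by a word over $\Gamma$ is closed under $\Gamma$-transitions, both forwards and backwards; as it contains $p$, it contains the whole connected component of $p$ in $G(\A,\Gamma)$, and since all its members reach $p$, the state $p$ is the maximum of that component, hence its unique maximal state.

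The substantive part is the reverse direction $(2)\Rightarrow(1)$ (equivalently $(3)\Rightarrow(1)$), which moreover gives the quantitative statement that $L$ is $\fun{depth}(\A)$-piecewise testable. By Fact~\ref{mainProperty} it suffices to show $\sim_d\subseteq\sim_L$ for $d=\fun{depth}(\A)$, i.e.\ that $u\sim_d v$ implies $q\cdot u=q\cdot v$ for every state $q$. I would prove this by induction on the length of the longest simple path leaving $q$: a maximal state has only self-loops and is fixed by every word, and for a non-maximal $q$ one peels off the initial run of self-loops at $q$, uses the UMS property (or confluence) to argue that the first transition actually leaving $q$, and the state it reaches, are determined by $\fun{sub}_d(u)$, and then applies the induction hypothesis to the strictly deeper target state and the remaining suffix. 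I expect this inductive step to be the main obstacle: making precise that a fixed $\sim_d$-class forces the same continuation out of the self-loop region of $q$ requires carefully tracking which subsequences of length at most $d$ remain available in the suffix and invoking the UMS property at the right state, and it is the only point at which the bound $\fun{depth}(\A)$, rather than a larger one, is needed.
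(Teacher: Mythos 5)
Your two forward directions are correct and essentially complete. For $(1)\Rightarrow(2)$ the word pairs $x(st)^kz$ versus $x(st)^ksz$ and $a(ab)^k$ versus $b(ab)^k$ do the job (the greedy block-by-block embedding shows $\fun{sub}_k\bigl((st)^ks\bigr)=\fun{sub}_k\bigl((st)^k\bigr)$, and minimality turns $a(ab)^k\sim_L b(ab)^k$ into equality of the induced state transformations). For $(1)\Rightarrow(3)$, the observation that $\fun{sub}_k(xvu^k)$ does not depend on $v\in\Gamma^*$, combined with the (trivial) backward closure of the set of states that reach $p$ by a word over $\Gamma=\Sigma(p)$, correctly yields that $p$ is the unique maximal state of its component in $G(\A,\Gamma)$. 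Note that the paper offers no proof of this Fact at all --- it is quoted from~\cite{KlimaP13} and~\cite{Trahtman2001} --- so there is nothing to compare against; your forward arguments stand on their own.

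The genuine gap is the converse direction $(2)\Rightarrow(1)$ (equivalently $(3)\Rightarrow(1)$), which you only sketch and whose inductive step you yourself flag as ``the main obstacle.'' This is not a routine verification: it is precisely the content of Theorem~\ref{thmDFAs}, and the crux is the case where $u\sim_\ell v$ but $u$ and $v$ first leave the self-loop region $\Sigma(q)$ of the current state $q$ under \emph{different} letters $a\neq b$. There one must (i) use confluence (or the UMS property) to produce $w\in\{a,b\}^*$ with $(qa)w=(qb)w$, and (ii) prove that prepending $w$ does not change the $\sim_{\ell-1}$-class of the two suffixes, which requires the chain of inclusions extracted from Lemma~\ref{lemma2} applied to the decompositions $u=u'au_0''bu_1''$ and $v=v'bv_0''av_1''$; only then can the induction hypothesis be applied at the common deeper state. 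Your induction should also be parameterised by the depth of the sub-automaton $\A_q$ rather than the global $d$, since Lemma~\ref{lemma2} degrades $\sim_\ell$ to $\sim_{\ell-1}$ on the suffixes, and the target state is strictly deeper. All of this is carried out in the paper's proof of Lemma~\ref{lemmaMain} for ptNFAs; specialising that proof to a minimal, partially ordered, confluent DFA (single initial state, $|q\cdot a|=1$, completeness automatic) would close your gap, but as written the hard half of the equivalence is missing.
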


  We now define a special class of nondeterministic automata called ptNFAs. The name comes from piecewise testable, since, as we show below, they characterize piecewise testable languages. And indeed include all minimal DFAs recognizing piecewise testable languages.
  \begin{definition}\label{ptNFA}
    An NFA $\A$ is called a {\em ptNFA\/} if it is partially ordered, complete and satisfies the UMS property.
  \end{definition}

  The reason why we use the UMS property in the definition of ptNFAs rather than the notion of confluence is simply because confluence does not naturally generalize to NFAs, as shown in Example~\ref{exUMSnotConf} below.
  \begin{example}\label{exUMSnotConf}
    Consider the automaton depicted in Figure~\ref{fig1}. The notion of confluence is not clear for NFAs. If we consider the point of view that whenever the computation is split, a common state can be reached under a word over the splitting alphabet, then this automaton is confluent. However, it does not satisfy the UMS property and its language is not piecewise testable; there is an infinite sequence $a,ab,aba,abab,\ldots$ that alternates between accepted and non-accepted words, which implies that there is a non-trivial cycle in the corresponding minimal DFA and, thus, it proves non-piecewise testability by Fact~\ref{thm:characterization}.
  \end{example}

  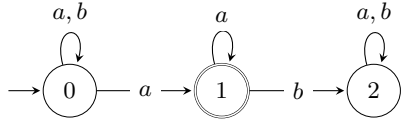
\begin{figure}
    \centering
    \begin{tikzpicture}[baseline,->,>=stealth,shorten >=1pt,node distance=2cm,
      state/.style={circle,minimum size=7mm,very thin,draw=black,initial text=},
      every node/.style={fill=white,font=\small},
      bigloop/.style={shift={(0,0.01)},text width=1.6cm,align=center}]
      \node[state,initial]    (0) {$0$};
      \node[state,accepting]  (1) [right of=0] {$1$};
      \node[state]            (2) [right of=1] {$2$};
      \path
        (0) edge node {$a$} (1)
        (1) edge node {$b$} (2)
        (0) edge[loop above] node[bigloop] {$a, b$} (0)
        (1) edge[loop above] node[bigloop] {$a$} (1)
        (2) edge[loop above] node[bigloop] {$a, b$} (2)
        ;
    \end{tikzpicture}
    \caption{Confluent automaton accepting a non-piecewise testable language}
    \label{fig1}
  \end{figure}

  Note that to check whether an NFA is a ptNFA requires to check whether the automaton is partially ordered, complete and satisfies the UMS property. The violation of these properties can be tested by several reachability tests, hence its complexity belongs to coNL=NL. On the other hand, to check the properties is NL-hard even for minimal DFAs~\cite{ChoH91}. Thus, we have the following.
  
  \begin{theorem}\label{thmMainNL}
    It is NL-complete to check whether an NFA is a ptNFA.
  \end{theorem}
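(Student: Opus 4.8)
The plan is to prove membership in NL by placing the \emph{complement} problem---deciding whether a given NFA $\A$ is not a ptNFA---in NL, and then applying the Immerman--Szelepcs\'enyi theorem (coNL = NL); NL-hardness will then be inherited from the corresponding problem for minimal DFAs.

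For membership, note that $\A = (Q,\Sigma,\cdot,I,F)$ fails to be a ptNFA precisely when it violates at least one of the three conditions of Definition~\ref{ptNFA}, so a nondeterministic logarithmic-space machine can first guess which condition fails and then certify the failure. The automaton $\A$ is not partially ordered iff there are two distinct states $p,q$ with $p \le q$ and $q \le p$; the machine guesses $p$ and $q$ and runs two directed reachability queries in $G(\A)$, each of which is an NL computation. The automaton $\A$ is not complete iff $q\cdot a = \emptyset$ for some state $q$ and some letter $a$, which is decidable even in deterministic logarithmic space. For the UMS property, note that whenever $\A$ is partially ordered (the remaining case having already been handled) the reachability relation $\le$ restricted to any set of states is a partial order, and in a finite poset a state is the unique maximal element exactly when it is the maximum; hence $p$ is the unique maximal state of the connected component $C$ of $G(\A,\Sigma(p))$ containing $p$ if and only if $q \le p$ for every $q \in C$. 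Therefore the UMS property fails iff there are a state $p$ and a state $q$ lying in the same connected component of $G(\A,\Sigma(p))$ as $p$ with $q \not\le p$; the machine computes $\Sigma(p)$ from $p$, guesses $q$ together with an undirected walk of length at most $|Q|$ from $p$ to $q$ in $G(\A,\Sigma(p))$ (certifying $q \in C$), and then checks that $q$ does not reach $p$ in $G(\A)$, which is non-reachability and hence in coNL = NL. Composing these logarithmic-space subroutines keeps the whole computation in NL, so the complement of our problem is in NL, and therefore the problem itself is in coNL = NL.

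For hardness, recall from Fact~\ref{thm:characterization} that a minimal DFA recognizes a piecewise testable language if and only if it is partially ordered and satisfies the UMS property; since every DFA considered here is complete, this is equivalent to the minimal DFA being a ptNFA. Deciding whether a given minimal DFA has these structural properties---equivalently, whether its language is piecewise testable---is NL-hard~\cite{ChoH91}, and a minimal DFA is in particular an NFA, so deciding whether an NFA is a ptNFA is NL-hard as well.

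The only genuinely delicate point is the reformulation of the UMS condition: one has to check that, under partial-orderedness, ``$p$ is the unique maximal state of its $\Sigma(p)$-component'' is equivalent to ``every state of that component reaches $p$'', so that this condition and its negation reduce to a bounded number of ordinary graph-reachability queries. Everything else is a routine assembly of NL subroutines together with coNL = NL.
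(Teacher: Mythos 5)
Your overall route is the same as the paper's: for membership you certify a violation of one of the three conditions of Definition~\ref{ptNFA} by reachability queries and invoke $\mathrm{NL}=\mathrm{coNL}$, and for hardness you use the NL-hardness for minimal DFAs from~\cite{ChoH91} together with Fact~\ref{thm:characterization}. The partial-order branch, the completeness branch, and the hardness argument are fine.

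However, the step you yourself call delicate has a real gap: your reformulation of the UMS property reads ``maximal'' with respect to the \emph{global} reachability order $\le$, whereas the property is meant---and is later used in the paper, e.g.\ in Lemma~\ref{lemmaMain}, where it yields a word $w_2\in\{a,b\}^*\subseteq\Sigma(r)^*$ with $qw_1w_2=\{r\}$, and in the discussion after Lemma~\ref{thm1ptNFAs}, where it yields that $r_1$ is reachable from every $r_i$ \emph{under} $\Sigma(r_1)$---with maximality taken inside $G(\A,\Sigma(p))$, i.e.\ with respect to reachability by $\Sigma(p)$-transitions within the component. Under that reading your per-state equivalence is false. Example: $\Sigma=\{a,b\}$, states $p,q,r,z$ with $p\cdot a=p$, $p\cdot b=z$, $q\cdot a=q$, $q\cdot b=p$, $r\cdot a=\{p,q\}$, $r\cdot b=r$, $z\cdot a=z\cdot b=z$. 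This NFA is complete and partially ordered, every state of the $\{a\}$-component $\{p,q,r\}$ of $p$ reaches $p$ in $G(\A)$, yet $p$ is not the unique maximal state of that component in the intended sense (since $q$ cannot reach $p$ under $a$, both $p$ and $q$ are maximal there). So ``some $q$ in the component of $p$ with $q\not\le p$'' does not characterize failure of UMS at $p$. Your algorithm happens to decide the correct set of automata anyway, because for partially ordered \emph{complete} NFAs the two conditions, quantified over all states $p$, coincide (in the example the violation is caught at the state $q$ instead of at $p$); but this equivalence needs a separate argument using completeness, which you neither state nor prove. The clean repair is simply to run the non-reachability test inside $G(\A,\Sigma(p))$ instead of $G(\A)$: then the finite-poset argument, applied to $\Sigma(p)$-reachability restricted to the component (which is antisymmetric because $\A$ is partially ordered), gives your iff verbatim, and the complexity analysis is unchanged.
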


\section{Motivation and an Example}\label{demo}\label{sec3} 
  Considering applications, such as XML, where the alphabet can hardly be considered as fixed, the results of~\cite{KKP} (cf. Theorem~\ref{thmconp} below) say that it is intractable to compute the minimal $k$ for which a piecewise testable language is $k$-piecewise testable, unless coNP=P. This leads to the investigation of reasonably small upper bounds. Recall that the result of~\cite{KlimaP13} says that $k$ is bounded by the depth of the minimal DFA. However, applications usually require to work with NFAs, which motivates the research of this paper. Another motivation comes from a simple observation that, given several DFAs, a result of an operation can lead to an NFA that in some sense still have the DFA-like properties, see more discussion below. Moreover, it seems to be a human nature to use a kind of nondeterminism, for instance, to reuse already defined parts as demonstrated here on a very simple example.

  Let $L_0=\{\eps\}$ be a language over the alphabet $\Sigma_0=\{a_0\}$. Assume that the language $L_i$ over $\Sigma_i$ is defined, and let $L_{i+1} = L_i \cup \Sigma_i^* a_{i+1} L_i$ over $\Sigma_{i+1}=\Sigma_i\cup\{a_{i+1}\}$, where $a_{i+1}$ is a new symbol not in $\Sigma_i$. We now construct the NFAs for the languages $L_i$,
  \[
    \A_i=(\{0,1,\ldots,i\},\{a_0,a_1,\ldots,a_i\},\cdot,\{0,1,\ldots,i\},\{0\})
  \]
  where $\ell \cdot a_j = \ell$ if $i\ge \ell > j \ge 0$ and $\ell \cdot a_\ell = \{0,1,\ldots,\ell-1\}$ if $i\ge \ell\ge 1$. The automaton $\A_3$ is depicted in Figure~\ref{fig6}. The dotted transitions are to ``complete'' the NFA in the meaning that $\ell\cdot a \neq \emptyset$ for any state $\ell$ and letter $a$.
  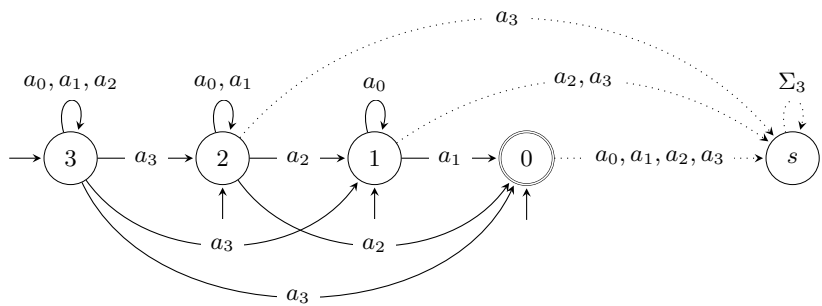
\begin{figure}
    \centering
    \begin{tikzpicture}[baseline,->,>=stealth,shorten >=1pt,node distance=2cm,
      state/.style={circle,minimum size=7mm,very thin,draw=black,initial text=},
      every node/.style={fill=white,font=\small},
      bigloop/.style={shift={(0,0.01)},text width=1.6cm,align=center}]
      \node[state,initial below,accepting]    (0) {$0$};
      \node[state,initial below]              (1) [left of=0] {$1$};
      \node[state,initial below]              (2) [left of=1] {$2$};
      \node[state,initial]                    (3) [left of=2] {$3$};
      \node[state]                            (s) [right of=0,node distance=3.5cm] {$s$};
      \path
        (3) edge node {$a_3$} (2)
        (2) edge node {$a_2$} (1)
        (1) edge node {$a_1$} (0)
        (3) edge[bend right=55] node {$a_3$} (1)
        (3) edge[bend right=65] node {$a_3$} (0)
        (2) edge[bend right=55] node {$a_2$} (0)
        (3) edge[loop above] node[bigloop] {$a_0, a_1, a_2$} (3)
        (2) edge[loop above] node[bigloop] {$a_0, a_1$} (2)
        (1) edge[loop above] node[bigloop] {$a_0$} (1)
        (2) edge[dotted,bend left=50] node {$a_3$} (s)
        (1) edge[dotted,bend left=35] node {$a_2,a_3$} (s)
        (0) edge[dotted] node {$a_0,a_1,a_2,a_3$} (s)
        (s) edge[dotted,loop above] node {$\Sigma_3$} (s)
        ;
    \end{tikzpicture}
    \caption{Automaton $\A_3$; the dotted transitions depict the completion of $\A_3$}
    \label{fig6}
  \end{figure}
    
  Although the example is very simple, the reader can see the point of the construction in nondeterministically reusing the existing parts. 
    
  Now, to decide whether the language is piecewise testable and, if so, to obtain an upper bound on its $k$-piecewise testability, the known results for DFAs say that we need to compute the minimal DFA. Doing so shows that $L_i$ is piecewise testable. However, the minimal DFA for the language $L_i$ is of exponential size and its depth is $2^{i+1}-1$, cf.~\cite{dlt15}, which implies that $L_i$ is $(2^{i+1}-1)$-piecewise testable. Another way is to use the PSPACE algorithm of~\cite{dlt15} to compute the minimal $k$. Both approaches are basically of the same complexity.
    
  This is the place, where our result comes into the picture. According to Theorem~\ref{thmMain} proved in the next section, the easily testable structural properties say that the language $L_i$ is $(i+1)$-piecewise testable. This provides an exponentially better upper bound for every language $L_i$ than the technique based on minimal DFAs. Finally, we note that it can be shown that $L_i$ is not $i$-piecewise testable, so the bound is tight.

\section{Piecewise Testability and Nondeterminism}\label{secDepth}
  In this section, we establish a relation between piecewise testable languages and nondeterministic automata and generalize the bound given by the depth of DFAs to ptNFAs. We first recall the know result for DFAs.
  
  \begin{theorem}[\cite{KlimaP13}]\label{thmDFAs}
    Let $\A$ be a partially ordered and confluent DFA. If the depth of $\A$ is $k$, then the language $L(\A)$ is $k$-piecewise testable.
  \end{theorem}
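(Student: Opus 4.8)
The plan is to invoke Simon's theorem (Fact~\ref{mainProperty}): the language $L=L(\A)$ is $k$-piecewise testable exactly when $\sim_k\,\subseteq\,\sim_L$, where $\sim_L$ is the Myhill congruence. Since $\A$ is deterministic, this reduces to proving that $u\sim_k v$ implies $p\cdot u = p\cdot v$ for every state $p$ reachable from an initial state; and since $\A$ is partially ordered, it suffices to prove the one-sided inequality $p\cdot u\le p\cdot v$, because applying it to both $(u,v)$ and $(v,u)$ then forces equality.

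The heart of the argument is a structural lemma with no reference to the depth: \emph{in every partially ordered and confluent DFA, if $s$ is a subsequence of $w$ then $p\cdot s\le p\cdot w$ for every state $p$} --- i.e., the transition action is monotone with respect to the subsequence order. I would prove it by induction on $|w|$. The base case $w=\eps$ is trivial. Otherwise write $w=w'c$ with $c\in\Sigma$: either $s\preccurlyeq w'$, and then $p\cdot s\le p\cdot w'\le p\cdot w$ by the induction hypothesis and partial order; or every embedding of $s$ into $w$ uses the last position, so $s=s'c$ with $s'\preccurlyeq w'$, the induction hypothesis gives $p\cdot s'\le p\cdot w'$, and it remains to show that $q\le q'$ implies $q\cdot c\le q'\cdot c$ (take $q=p\cdot s'$, $q'=p\cdot w'$). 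Inducting on the minimal length of a path from $q$ to $q'$ reduces this further to the single elementary claim: if $q\cdot d\ne q$ then $q\cdot c\le q\cdot d\cdot c$, for letters $c,d$. This claim is the only place where confluence (equivalently, the UMS property) is used, and it is indispensable --- for merely partially ordered DFAs the lemma fails (cf.\ Theorem~\ref{propMain}). Here I would use confluence on the two-letter alphabet $\{c,d\}$ above $q$ (concretely: every state reachable from $q$ along a $\{c,d\}$-labelled path lies below one common state that loops on both $c$ and $d$) to produce a path from $q\cdot c$ to $q\cdot d\cdot c$.

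Given the lemma, the depth enters only through a combinatorial observation about runs. Fix $u$ and a reachable state $p$, and factor the run of $\A$ on $u$ from $p$ at its strictly ascending transitions: $u=u_0 a_1 u_1 a_2\cdots a_m u_m$ with $p_0:=p$, $\alp(u_i)\subseteq\Sigma(p_i)$, and $p_{i+1}:=p_i\cdot a_{i+1}\ne p_i$. Then $p\cdot u = p_m = p\cdot(a_1 a_2\cdots a_m)$, since each $u_i$ merely loops at $p_i$. The states $p_0<p_1<\cdots<p_m$ lie on a simple path from $p$ (each step being a single edge $p_i\xrightarrow{a_{i+1}}p_{i+1}$), and prepending a simple path from an initial state to $p$ --- which, by partial order, avoids every $p_i$ with $i\ge 1$ --- yields a simple path from an initial state using at least $m$ transitions, so $m\le\fun{depth}(\A)=k$. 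Hence the ``skeleton'' $a_1 a_2\cdots a_m$ is a subsequence of $u$ of length at most $k$; since $u\sim_k v$, it is also a subsequence of $v$, and the lemma gives $p\cdot u = p\cdot(a_1 a_2\cdots a_m)\le p\cdot v$. By symmetry $p\cdot v\le p\cdot u$, hence $p\cdot u = p\cdot v$, which completes the proof. I expect the elementary claim inside the structural lemma --- pinning down exactly how confluence makes the action monotone under subsequences --- to be the only real obstacle; everything else is bookkeeping.
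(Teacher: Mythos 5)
Your overall scaffolding is fine: reducing $k$-piecewise testability to ``$u\sim_k v$ implies $p\cdot u=p\cdot v$ for reachable $p$'', the one-sided-inequality trick, and the skeleton/depth bookkeeping (the ascending factorization $u=u_0a_1u_1\cdots a_mu_m$ with $p\cdot u=p\cdot(a_1\cdots a_m)$ and $m\le k$) are all correct. The fatal problem is the structural lemma itself: in a partially ordered and confluent DFA the action is \emph{not} monotone with respect to the subsequence order, and your ``elementary claim'' is false. Take the minimal DFA of $L_{aba}\cup L_{bab}$ over $\{a,b\}$, with states $i$, $A=i\cdot a$, $B=i\cdot b$, $AB=i\cdot ab$, $BA=i\cdot ba$, and the sink $F$ (self-loops $A\cdot a=A$, $B\cdot b=B$, $AB\cdot b=AB$, $BA\cdot a=BA$, $F\cdot a=F\cdot b=F$; ascending moves $A\cdot b=AB$, $B\cdot a=BA$, $AB\cdot a=BA\cdot b=F$). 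This DFA is partially ordered and confluent (it is the minimal DFA of a $3$-piecewise testable language, so Fact~\ref{thm:characterization} applies), yet $b\preccurlyeq ab$ while $i\cdot b=B$ and $i\cdot ab=AB$ are incomparable: from $B$ one can only reach $B,BA,F$. Equivalently, the elementary claim fails at $q=i$, $d=a$, $c=b$: $q\cdot d\ne q$ but $q\cdot c\not\le q\cdot dc$. Confluence (or the UMS property) does not give what you want here: it only guarantees a \emph{common} state reachable from both $q\cdot c$ and $q\cdot dc$ under $\{c,d\}$-words (the unique maximal state of the $\{c,d\}$-component, here $F$), not a path from $q\cdot c$ to $q\cdot dc$.

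This cannot be patched locally, because your lemma deliberately discards the hypothesis $u\sim_k v$ and keeps only the embeddability of a short skeleton, and that is strictly too weak (in the example, $b$ embeds into $ab$ although $b\not\sim_1 ab$); the equality $p\cdot u=p\cdot v$ really needs the full equality $\fun{sub}_k(u)=\fun{sub}_k(v)$ propagated through the run. The actual proof (Kl\'ima--Pol\'ak, generalized to ptNFAs in Lemma~\ref{lemmaMain}) has a different shape: it inducts on the depth, writes $u=u'au''$ and $v=v'bv''$ at the first letter not looping at the initial state, uses Lemma~\ref{lemma2} to get $u''\sim_{\ell-1}v''$, and, when $a\ne b$, uses confluence/UMS to produce a word $w\in\{a,b\}^*$ merging the two branches \emph{together with} the combinatorial facts $wu''\sim_{\ell-1}u''$ and $wv''\sim_{\ell-1}v''$, before applying the induction hypothesis to the sub-automata of smaller depth. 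So the $\sim_\ell$ hypothesis is re-used at every level of the recursion, not spent once to transfer a length-$\le k$ skeleton; you would need to replace your monotonicity lemma by an induction of this kind.
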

  
  This result is currently the best known structural upper bound on $k$-piecewise testability. The opposite implication of the theorem does not hold and we have shown in~\cite{dlt15} (see also Section~\ref{demo}) that this bound can be exponentially far from the minimal value of $k$. This observation has motivated our investigation of the relationship between piecewise testability and the depth of NFAs. We have already generalized a structural automata characterization for piecewise testability from DFAs to NFAs as follows.
  
  \begin{theorem}[\cite{dlt15}]\label{thm10}
    A regular language is piecewise testable if and only if it is recognized by a ptNFA.
  \end{theorem}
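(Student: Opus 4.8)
The plan is to prove the two implications of the equivalence separately, the forward one being immediate. If $L$ is piecewise testable, then by Fact~\ref{thm:characterization} its minimal DFA is partially ordered and satisfies the UMS property, and it is complete since every DFA in our setting is total; hence it is a ptNFA recognizing $L$.

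For the converse, let $\A=(Q,\Sigma,\cdot,I,F)$ be a ptNFA, and pass to the accessible part $\D$ of its subset automaton, so that $L(\D)=L(\A)$ and $\D$ is complete because $\A$ is. The aim is to show that $\D$ is partially ordered and confluent, so that Theorem~\ref{thmDFAs} applies and yields that $L(\A)$ is (in fact $\fun{depth}(\D)$-)piecewise testable. The one structural property of ptNFAs I would isolate first is that $a\in\Sigma(p)$ implies $p\cdot a=\{p\}$: any $q\in p\cdot a$ satisfies $p\le q$ and lies in the connected component of $G(\A,\Sigma(p))$ containing $p$, so $q=p$ by the UMS property, and hence $p\cdot w=\{p\}$ whenever $\alp(w)\subseteq\Sigma(p)$. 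Using this, reading a letter $a$ from a subset $S$ fixes the states of $S$ carrying an $a$-self-loop and, by completeness, replaces every other state $q\in S$ by a nonempty set of states strictly above $q$; in particular $S\cdot a=S$ only if no state of $S$ is forced to move, so a rank argument on the multiset of heights of the states of $S$ (the height of a state being the length of a longest $\le$-chain from it up to a maximal state) shows that $\D$ has no cycles other than self-loops, i.e., $\D$ is partially ordered.

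The main obstacle is confluence of $\D$, and this is precisely where the UMS property is indispensable, the naive notion of confluence being ruled out by Example~\ref{exUMSnotConf}. Fix a subset $S$ and letters $a,b$, and for a state $q$ write $R_q$ for the set of states reachable from $q$ by words over $\{a,b\}$. A maximal state $p$ of $R_q$ must carry $a$- and $b$-self-loops, since otherwise completeness forces a state strictly above $p$ inside $R_q$; and if $p_1,p_2$ are both maximal in $R_q$, then, as $\{a,b\}\subseteq\Sigma(p_1)$, the whole $\{a,b\}$-component of $q$ (which contains $p_2$) lies in the component of $G(\A,\Sigma(p_1))$ containing $p_1$, so $p_2\le p_1$ by UMS and thus $p_1=p_2$. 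Hence $R_q$ has a unique maximal state $p_q^*$, and one checks moreover that $p_r^*=p_q^*$ for all $r\in R_q$. Since $\D$ is partially ordered, the sequence $q\cdot(ab)^N$ stabilizes to a set all of whose states carry $a$- and $b$-self-loops, and uniqueness of $p_q^*$ forces this set to be exactly $\{p_q^*\}$; consequently $Y\cdot(ab)^N=\{p_q^*:q\in Y\}$ for every subset $Y$ and all large $N$. Applying this to $Y=S\cdot a$ and $Y=S\cdot b$, and using that $q\cdot a$ and $q\cdot b$ are nonempty subsets of $R_q$ for each $q\in S$, we get $(S\cdot a)(ab)^N=(S\cdot b)(ab)^N=\{p_q^*:q\in S\}$, the required confluence witness. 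The delicate point is exactly this funnelling of each subset, along $\{a,b\}$-words, to the set of $\le$-tops $\{p_q^*:q\in S\}$: that is how the UMS property of $\A$ is converted into confluence of $\D$, after which Theorem~\ref{thmDFAs} completes the proof.
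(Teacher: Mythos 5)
Your argument is correct, but note that the paper itself gives no proof of Theorem~\ref{thm10}: it is imported from~\cite{dlt15}, so there is no internal proof to compare against. What you supply is a self-contained alternative: the forward direction via Fact~\ref{thm:characterization} (the minimal DFA is complete, partially ordered and UMS, hence a ptNFA) is exactly as one would expect, and for the converse you determinize and reduce to the Kl\'ima--Pol\'ak result (Theorem~\ref{thmDFAs}), which is legitimate and not circular. Your key structural observation, $a\in\Sigma(p)\Rightarrow p\cdot a=\{p\}$, is sound (any $q\in p\cdot a$ lies in the $\Sigma(p)$-component of $p$, sits below its unique maximal state $p$, and $p\le q$ plus antisymmetry gives $q=p$), and the ``funnelling'' of every $\{a,b\}$-reachable set to the unique $\{a,b\}$-maximal state $p_q^*$ is essentially the same device the paper uses inside its proof of Lemma~\ref{lemmaMain} (the word $w=w_1w_2$ with $pw=qw=r$), so your confluence argument is very much in the spirit of the paper's techniques. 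The one place you should write out in full is the claim that the subset automaton $\D$ is partially ordered: completeness and partial order of $\A$ alone do \emph{not} suffice (one can build a complete partially ordered NFA, violating UMS, whose subset automaton has a genuine $2$-cycle), so the argument must use $p\cdot a=\{p\}$ for $a\in\Sigma(p)$; with that, your multiset-of-heights rank argument does go through (each letter either fixes a subset or strictly decreases the height multiset in the Dershowitz--Manna order), but as written it is only a sketch. Also note that your route yields $\fun{depth}(\D)$-piecewise testability, i.e.\ a bound via the determinized automaton, which is much weaker than Theorem~\ref{thmMain}'s bound by the depth of the ptNFA itself --- harmless here, since only piecewise testability is claimed.
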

  
  We now generalize Theorem~\ref{thmDFAs} to ptNFAs and discuss the relation between the depth of NFAs and $k$-piecewise testability in more detail. An informal idea behind the proof is that every ptNFA can be ``decomposed'' into a finite number of partially ordered and confluent DFAs. We now formally prove the theorem by generalizing the proof of Theorem~\ref{thmDFAs} given in~\cite{KlimaP13}.
  
  \begin{theorem}\label{thmMain}
    If the depth of a ptNFA $\A$ is $k$, then the language $L(\A)$ is $k$-piecewise testable.
  \end{theorem}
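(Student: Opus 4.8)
The plan is to reduce the statement to Simon's characterisation (Fact~\ref{mainProperty}) and then prove the resulting claim by a direct simulation of runs, lifting the DFA argument behind Theorem~\ref{thmDFAs} to nondeterministic automata. Since $\sim_k$ is a congruence of finite index, it suffices to show that it saturates $L(\A)$, that is, that $u\sim_k v$ and $u\in L(\A)$ imply $v\in L(\A)$: this gives $\sim_k\subseteq\sim_{L(\A)}$, and then Fact~\ref{mainProperty} yields that $L(\A)$ is $k$-piecewise testable. We may also assume $k=\fun{depth}(\A)\ge1$, since $\fun{depth}(\A)=0$ forces $L(\A)\in\{\emptyset,\Sigma^*\}$, which is $0$-piecewise testable.

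Two observations about a ptNFA $\A$ with $\fun{depth}(\A)=k$ are used throughout. First, because $\A$ is partially ordered, the states along any path are non-decreasing with respect to $\le$; hence every run visits at most $k+1$ distinct states, any factor read while the run sits at a fixed state $p$ has its alphabet contained in $\Sigma(p)$, and conversely, whenever $\alp(y)\subseteq\Sigma(p)$ there is a self-looping path $p\xrightarrow{y}p$, so $p\in p\cdot y$. Second, completeness together with the UMS property makes the self-loop letters behave as in a confluent DFA: if a state $r$ is reachable from $p$ by a word over $\Sigma(p)$, then $r$ lies in the connected component of $p$ in $G(\A,\Sigma(p))$, of which $p$ is --- by the UMS property --- the unique maximal, hence greatest, state, and (using completeness and the partial order) $p$ is then reachable from $r$ by a word over $\Sigma(p)$.

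Now fix an accepting run of $u$ from some $i_0\in I$ to some $f\in F$, and let $i_0=p_0<p_1<\cdots<p_m=f$ be the list of its distinct states. By the first observation this is a strictly increasing chain realised by a simple path starting in $i_0$, so $m\le k$, and $u$ factors as $u=z_0b_1z_1b_2\cdots b_mz_m$ where $\alp(z_j)\subseteq\Sigma(p_j)$ and there is a transition $p_{j-1}\xrightarrow{b_j}p_j$ for each $j$. (If the run never leaves $i_0$, then $i_0\in F$ and $\alp(u)\subseteq\Sigma(i_0)$, hence $\alp(v)=\alp(u)\subseteq\Sigma(i_0)$ since $k\ge1$, so $i_0\in i_0\cdot v$ and $v\in L(\A)$.) Since $b_1b_2\cdots b_m$ is a subsequence of $u$ of length $m\le k$ and $u\sim_k v$, we have $b_1b_2\cdots b_m\preccurlyeq v$, so $v=s_0b_1s_1b_2\cdots b_ms_m$ for suitable factors $s_j$. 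Suppose this embedding can be chosen so that $\alp(s_j)\subseteq\Sigma(p_j)$ for every $j$. Then, reading $v$, the automaton can stay at $p_j$ while reading $s_j$ and take $b_{j+1}$ from $p_j$ to $p_{j+1}$, producing an accepting run $i_0\xrightarrow{v}f$; hence $v\in L(\A)$, as required.

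Everything therefore reduces to the combinatorial core: the embedding of $b_1b_2\cdots b_m$ into $v$ can be chosen so that each block $s_j$ has alphabet contained in $\Sigma(p_j)$. I expect this to be the main obstacle, and the place where the UMS property --- not merely partial order and completeness --- is essential (recall Example~\ref{exUMSnotConf}, where UMS fails and the statement is already false). The idea is that a letter $c\in\alp(s_j)$ with $c\notin\Sigma(p_j)$ lies in $\alp(u)$, hence occurs in $u$ either as some $b_\ell$ or inside some $z_\ell$ with $c\in\Sigma(p_\ell)$; exploiting $u\sim_k v$, the $\le$-order of the states $p_\ell$, and the confluence-type structure from the second observation, one re-routes the embedding so that every occurrence of $c$ in $v$ lands in a block $s_\ell$ with $c\in\Sigma(p_\ell)$, and shows that an occurrence which cannot be so placed would witness a strictly increasing chain of states of length $k+1$ in $\A$, contradicting $\fun{depth}(\A)=k$. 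Should this direct case analysis become unwieldy, the same plan can be carried out as an induction on $\fun{depth}(\A)$ (first reducing to a single initial state via $L(\A)=\bigcup_{i\in I}L(\A_i)$, each $\A_i$ being a ptNFA of depth at most $k$): peel off the bottom level --- the $\Sigma(i_0)$-self-loops of $u$ and its first transition $b_1$ to a state $p_1>i_0$ --- and pass to the sub-automaton $\A_{p_1}$, which is again a ptNFA with $\fun{depth}(\A_{p_1})\le k-1$, applying the induction hypothesis to the residual words after checking that they remain $\sim_{k-1}$-equivalent; the same ``misplaced letter'' bookkeeping is what makes that check non-trivial.
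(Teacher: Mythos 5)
Your reduction to Fact~\ref{mainProperty} and your two structural observations are fine, but the ``combinatorial core'' you defer everything to is false, so the plan as stated cannot be completed. You claim that the accepting run of $u$, with its chain $p_0<p_1<\cdots<p_m$ and transition letters $b_1\cdots b_m$, can always be reused for $v$: i.e.\ that some embedding of $b_1\cdots b_m$ into $v$ leaves blocks $s_j$ with $\alp(s_j)\subseteq\Sigma(p_j)$. Already the minimal DFA for the $1$-piecewise testable language of all words over $\{a,b\}$ containing both letters refutes this: states $i,p,q,r$ with $i\cdot a=p$, $i\cdot b=q$, $p\cdot b=q\cdot a=r$, self-loops $\Sigma(i)=\emptyset$, $\Sigma(p)=\{a\}$, $\Sigma(q)=\{b\}$, $\Sigma(r)=\{a,b\}$, $F=\{r\}$ is a ptNFA of depth $k=2$. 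Take $u=abba$ and $v=baab$; then $u\sim_2 v$, and the run of $u$ uses the chain $i<p<r$ with $b_1b_2=ab$, but every embedding of $ab$ into $v=baab$ forces the letter $b$ into the block $s_0$, while $\Sigma(i)=\emptyset$. The word $v$ is of course accepted, but only along a \emph{different} chain (through $q$), so no amount of re-routing the embedding inside $v$ while keeping the states $p_0,\ldots,p_m$ can work; your hoped-for contradiction with $\fun{depth}(\A)=k$ never materialises because nothing forces a chain of length $k+1$.

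This is exactly the difficulty the paper's proof is built around. There one argues by induction on the depth (after reducing to a single initial state via Lemma~\ref{lemmaAi}), splits $u=u'au''$ and $v=v'bv''$ at the first letters outside $\Sigma(i)$, and in the genuinely hard case $a\neq b$ first proves $u''\sim_{\ell-1}v''$ by a chain of $\fun{sub}_{\ell-1}$ inclusions, and then, when the target states $p\in i\cdot a$ and $q\in i\cdot b$ differ, uses completeness and the UMS property to produce a word $w\in\{a,b\}^*$ with $pw=qw=\{r\}$ and shows $wu''\sim_{\ell-1}u''$ and $wv''\sim_{\ell-1}v''$, so that the induction hypothesis applied to $\A_p$, $\A_r$ and $\A_q$ transfers acceptance of $u''$ from $p$ to acceptance of $v''$ from $q$ --- i.e.\ the accepting run for $v$ deliberately changes states. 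Your closing fallback (``peel off the bottom level and pass to $\A_{p_1}$, checking the residual words remain $\sim_{k-1}$-equivalent'') is indeed the right skeleton, but the two steps you wave at as ``misplaced letter bookkeeping'' --- establishing $u''\sim_{\ell-1}v''$ when $a\neq b$, and the UMS-based merging of the two runs at a common state with the $\sim_{\ell-1}$-invariance of prepending $w$ --- are the entire content of the theorem and are missing from your argument.
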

  
  The proof of Theorem~\ref{thmMain} follows directly from Lemmas~\ref{lemmaAi} and~\ref{lemmaMain} proved below.
  
  \begin{lemma}\label{lemmaAi}
    Let $\A$ be a ptNFA with $I$ denoting the set of initial states. Then the language $L(\A)=\bigcup_{i\in I} L(\A_i)$, where every sub-automaton $\A_i$ is a ptNFA.
  \end{lemma}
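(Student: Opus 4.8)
The plan is to verify the two assertions of the lemma separately: the language identity, which is little more than unfolding the definition of acceptance, and the structural claim that each induced sub-automaton $\A_i$ is again partially ordered, complete and satisfies the UMS property. For the language identity, a word $w$ lies in $L(\A)$ precisely when there is an accepting path $i\xrightarrow{w}f$ with $i\in I$ and $f\in F$; every state on such a path belongs to $\fun{reach}(i)$, so $f\in F\cap\fun{reach}(i)$ and the same path witnesses $w\in L(\A_i)$, while the reverse inclusion $L(\A_i)\subseteq L(\A)$ is immediate. Hence $L(\A)=\bigcup_{i\in I}L(\A_i)$.

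For the structural part I would first record the closure observation that for every $p\in\fun{reach}(i)$ and every $a\in\Sigma$ the $a$-successor set of $p$ is the same in $\A$ and in $\A_i$, since it is reachable from $p$ and hence lies in $\fun{reach}(i)$; inductively the same holds for all words, so the reachability order of $\A_i$ is exactly the restriction of $\le$ to $\fun{reach}(i)$. Partial-orderedness and completeness of $\A_i$ then follow immediately: a restriction of a partial order is a partial order, and nonemptiness of $p\cdot a$ in $\A$ carries over to $\A_i$.

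The step I expect to require actual care is the UMS property. Fix $p\in\fun{reach}(i)$; the self-loop set $\Sigma(p)$ is unchanged when passing to $\A_i$, and $G(\A_i,\Sigma(p))$ is the subgraph of $G(\A,\Sigma(p))$ induced on $\fun{reach}(i)$, so a priori the connected component of $p$ in $\A_i$ could only be a subset of the component $C$ of $p$ in $\A$. The key point to prove is that in fact $C\subseteq\fun{reach}(p)\subseteq\fun{reach}(i)$: because $\A$ is partially ordered, starting from an arbitrary $q\in C$ and repeatedly passing to a strictly $\le$-larger element of $C$ (one exists as long as the current state is not $\le$-maximal in $C$) must terminate, at a state that is $\le$-maximal in $C$ and hence, by the UMS property of $\A$, equal to $p$; by transitivity of $\le$ this gives $q\le p$. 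Consequently $G(\A_i,\Sigma(p))$ still contains every vertex of $C$ and every $\Sigma(p)$-transition of $\A$ among them, so the component of $p$ in $\A_i$ is again exactly $C$; and since the order of $\A_i$ is the restriction of $\le$, state $p$ remains the unique $\le$-maximal state of this component. Thus every $\A_i$ is a ptNFA, which together with the language identity proves the lemma.
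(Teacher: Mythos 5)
Your language identity and the preservation of completeness and of the partial order are fine, and they match the paper's (much terser) argument. The gap is in the UMS step. Your chain argument in $\A$ (repeatedly passing to a strictly $\le$-larger element of $C$) correctly shows that every $q\in C$ satisfies $q\le p$, i.e.\ $p\in\fun{reach}(q)$ --- but you then record this as $C\subseteq\fun{reach}(p)$, which is the opposite direction ($p\le q$). In fact, since $\A$ is partially ordered, $q\le p$ and $q\in\fun{reach}(p)$ together force $q=p$, so $C\subseteq\fun{reach}(p)$ could only hold when $C=\{p\}$. The downstream conclusion that the component of $p$ in $G(\A_i,\Sigma(p))$ is exactly $C$ is also false in general, because a state of $C$ may be reachable only from a different initial state: take $\Sigma=\{a\}$, states $i,j,p,q$ with initial states $\{i,j\}$ and transitions $i\xrightarrow{a}p$, $j\xrightarrow{a}q$, $q\xrightarrow{a}p$, $p\xrightarrow{a}p$. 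This is a ptNFA with $C=\{i,j,p,q\}$, yet $\fun{reach}(i)=\{i,p\}$, so $j$ and $q$ disappear in $\A_i$ and the component genuinely shrinks.

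Fortunately the lemma does not need the component to be preserved, and your chain argument is exactly the right ingredient, just aimed at the wrong target. Let $C_i$ be the component of $p$ in $G(\A_i,\Sigma(p))$; it contains $p$, and $C_i\subseteq C$ because $\Sigma(p)$ is unchanged and $G(\A_i,\Sigma(p))$ is a subgraph of $G(\A,\Sigma(p))$. Every $q\in C_i$ lies in $C$, hence $q\le p$ by your argument; since $q,p\in\fun{reach}(i)$ and, as you showed, the reachability order of $\A_i$ is the restriction of $\le$, the relation $q\le p$ also holds in $\A_i$. Thus $p$ is the greatest element of $C_i$, hence its unique maximal state, which is the UMS property for $\A_i$. (The paper's proof merely asserts that a UMS violation in $\A_i$ would yield one in $\A$; the corrected version of your argument is precisely the justification that assertion leaves implicit.)
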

  
  Based on the previous lemma, it is sufficient to show the theorem for ptNFAs with a single initial state. We make use of the following lemma.
  \begin{lemma}[\cite{KlimaP13}]\label{lemma2}
    Let $\ell \ge 1$, and let $u, v \in \Sigma^*$ be such that $u \sim_\ell v$. Let $u = u'au''$ and $v = v'av''$ such that $a\notin \alp(u'v')$. Then $u'' \sim_{\ell-1} v''$.
  \end{lemma}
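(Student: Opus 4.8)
The plan is to prove the equality $\fun{sub}_{\ell-1}(u'')=\fun{sub}_{\ell-1}(v'')$ by two symmetric inclusions. Since the hypotheses are symmetric in $u$ and $v$ (indeed $u\sim_\ell v$ is equivalent to $v\sim_\ell u$, and the condition $a\notin\alp(u'v')$ is itself symmetric), it suffices to establish $\fun{sub}_{\ell-1}(u'')\subseteq\fun{sub}_{\ell-1}(v'')$. The decisive structural observation I would exploit is that, because $a\notin\alp(u')$ and $a\notin\alp(v')$, the given factorizations cut $u$ and $v$ exactly at their \emph{first} occurrence of the letter $a$; thus $u''$ and $v''$ are precisely the suffixes following the leftmost $a$ in $u$ and in $v$, respectively.

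First I would take an arbitrary $x\in\fun{sub}_{\ell-1}(u'')$, so that $x\preccurlyeq u''$ and $|x|\le \ell-1$. Prepending the letter $a$ yields $ax\preccurlyeq au''\preccurlyeq u'au''=u$, where the first embedding maps the leading $a$ onto the explicit $a$ of the factorization and maps $x$ into $u''$. Since $|ax|=|x|+1\le\ell$, this shows $ax\in\fun{sub}_\ell(u)$. Invoking the hypothesis $u\sim_\ell v$ then gives $ax\in\fun{sub}_\ell(v)$, that is, $ax\preccurlyeq v=v'av''$.

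The heart of the argument, and the step I expect to require the most care, is to convert this embedding of $ax$ back into an embedding of $x$ into $v''$. An embedding is order preserving, so the leading $a$ of $ax$ must be matched to some occurrence of $a$ in $v$; but $a\notin\alp(v')$, so the leftmost $a$ in $v$ is exactly the one displayed in $v=v'av''$. Consequently the leading $a$ of $ax$ is matched at a position no earlier than this displayed $a$, whence every remaining letter of $x$ is matched at a position strictly to its right, that is, inside $v''$. This forces $x\preccurlyeq v''$, and since $|x|\le\ell-1$ we conclude $x\in\fun{sub}_{\ell-1}(v'')$.

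Having shown $\fun{sub}_{\ell-1}(u'')\subseteq\fun{sub}_{\ell-1}(v'')$ for an arbitrary such $x$, the symmetric argument (swapping the roles of $u$ and $v$) yields the reverse inclusion, so $\fun{sub}_{\ell-1}(u'')=\fun{sub}_{\ell-1}(v'')$, which is precisely $u''\sim_{\ell-1}v''$. The only genuine subtlety is the alignment claim in the third paragraph, where the hypothesis $a\notin\alp(v')$ is indispensable: without it the leading $a$ of $ax$ could be absorbed into $v'$, and the embedding of $x$ would no longer be confined to the suffix $v''$, breaking the conclusion.
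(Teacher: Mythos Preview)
Your proof is correct and is the standard argument for this fact. Note, however, that the paper does not actually prove Lemma~\ref{lemma2}: it is quoted without proof from Kl\'ima and Pol\'ak~\cite{KlimaP13}, so there is no in-paper proof to compare against. Your approach---prepend the letter $a$ to any short subsequence of $u''$, transfer it via $u\sim_\ell v$, and then use $a\notin\alp(v')$ to force the match of the leading $a$ past $v'$---is exactly the intended one and is essentially how the lemma is proved in the cited source.
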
  
  
  \begin{lemma}\label{lemmaMain}
    Let $\A$ be a ptNFA with a single initial state and depth $k$. Then the language $L(\A)$ is $k$-piecewise testable.
  \end{lemma}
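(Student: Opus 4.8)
The plan is to prove the lemma by induction on $k=\fun{depth}(\A)$, using Fact~\ref{mainProperty}: it suffices to show that $u\sim_k v$ implies $u\in L(\A)\iff v\in L(\A)$. For $k=0$, no state other than the initial state $q_0$ is reachable from $q_0$ (a transition to another state would be a simple path of length~$1$), so completeness makes every letter a self-loop at $q_0$; hence $L(\A)\in\{\emptyset,\Sigma^*\}$, which is $0$-piecewise testable.

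For the inductive step, set $\Gamma=\Sigma(q_0)$. The first ingredient is the observation that if $\alp(w)\subseteq\Sigma(p)$ then $p\cdot w=\{p\}$: a path for $w$ from $p$ uses only edges of $G(\A,\Sigma(p))$, hence stays in the connected component of $p$, and every state it visits is $\ge p$, so by the UMS property (uniqueness of the maximal state $p$) it can only be $p$. Thus, for a word $w$ whose first letter outside $\Gamma$ is $c$, writing $w=w_1cw_2$ with $w_1\in\Gamma^*$, we get $q_0\cdot w=(q_0\cdot c)\cdot w_2$, and every state of $q_0\cdot c$ lies strictly above $q_0$ (since $c\notin\Sigma(q_0)$). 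Let $\A_{[c]}$ be the sub-automaton induced by the state set $q_0\cdot c$, with $q_0\cdot c$ as its set of initial states. I would argue, as for Lemma~\ref{lemmaAi}, that $\A_{[c]}$ is again a ptNFA, and that its depth is at most $k-1$: prepending the edge $q_0\xrightarrow{\,c\,}p$ to a longest simple path from a state $p\in q_0\cdot c$ gives a strictly longer simple path from $q_0$ (note $q_0\notin\fun{reach}(p)$, as $q_0<p$ and $\le$ is a partial order). By the induction hypothesis, each $L(\A_{[c]})$ is $(k-1)$-piecewise testable, and $L(\A)$ decomposes as $\bigcup_{c\in\Sigma\setminus\Gamma}\Gamma^*c\,L(\A_{[c]})$ together with $\Gamma^*$ in case $q_0\in F$.

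Now take $u\sim_k v$; since $k\ge 1$, $\alp(u)=\alp(v)$. If this common alphabet is contained in $\Gamma$, both words stay at $q_0$ and we are done. Otherwise let $c$ and $d$ be the first letters of $u$ and $v$ outside $\Gamma$. If $c=d$, write $u=u_1cu_2$ and $v=v_1cv_2$ with $u_1,v_1\in\Gamma^*$, so that $c\notin\alp(u_1v_1)$; Lemma~\ref{lemma2} yields $u_2\sim_{k-1}v_2$, and since $L(\A_{[c]})$ is $(k-1)$-piecewise testable, $u_2\in L(\A_{[c]})\iff v_2\in L(\A_{[c]})$, which by the displayed decomposition is exactly $u\in L(\A)\iff v\in L(\A)$. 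The case $c\neq d$ is the crux and, I expect, the main obstacle: the two runs leave $q_0$ through different letters, so the peeling desynchronizes. The plan is to decompose both $u$ and $v$ at the first occurrence of $c$ (legitimate since $c\in\alp(v)$ as well), apply Lemma~\ref{lemma2} to the tails, and then invoke the UMS property to show that the state set reached in $v$ right after this first $c$ has the same accepted language as $q_0\cdot c$ — intuitively, the confluence packaged into the UMS property forces the branch through $d$ to reconverge with the branch through $c$. Making this reconvergence precise, i.e. pinning down how UMS governs the interaction of the distinct first transitions out of $q_0$, is where the real work lies; the remainder is bookkeeping with $\sim_k$ and Lemma~\ref{lemma2}.
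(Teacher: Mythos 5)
Your setup (base case, the observation that $\alp(w)\subseteq\Sigma(p)$ forces $p\cdot w=\{p\}$, the decomposition of $L(\A)$ according to the first letter read outside $\Sigma(q_0)$, and the case where $u$ and $v$ leave $q_0$ by the same letter via Lemma~\ref{lemma2}) matches the paper's argument, up to the minor point that your $\A_{[c]}$ has the whole set $q_0\cdot c$ as initial states, so you need Lemma~\ref{lemmaAi} (or a union over the single-state sub-automata $\A_p$, $p\in q_0\cdot c$) before the induction hypothesis applies. But the case you yourself identify as the crux --- $u$ and $v$ leaving $q_0$ through different letters $c\neq d$ --- is exactly where the proof's real content lies, and you leave it unproved. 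Worse, the claim you propose to establish there, namely that the state set reached in $v$ right after its first occurrence of $c$ ``has the same accepted language as $q_0\cdot c$'', is false in general: the run of $v$ has already passed through $d\notin\Sigma(q_0)$, so that state set sits strictly higher in the partial order and typically accepts a smaller language. Only a relativized statement holds --- acceptance from the two state sets agrees on the particular $\sim_{k-1}$-class at hand --- and proving that is the missing work.

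For comparison, the paper resolves this case as follows. Write $u=u'au_0''bu_1''$ and $v=v'bv_0''av_1''$ with $a,b\notin\Sigma(i)$ the first non-self-loop letters of $u$ and $v$, and first show $u''=u_0''bu_1''\sim_{k-1}v_0''av_1''=v''$ by a chain of $\fun{sub}_{k-1}$ inclusions obtained from Lemma~\ref{lemma2}. Then fix $p\in i\cdot a$ on an accepting path of $u$ and any $q\in i\cdot b$, and use the UMS property to produce a word $w\in\{a,b\}^*$ with $p\cdot w=q\cdot w=\{r\}$, where $r$ is the unique maximal state of the component with respect to $\{a,b\}$. The key technical step is that prepending $w$ does not change the $\sim_{k-1}$-class: $wu''\sim_{k-1}u''$ and $wv''\sim_{k-1}v''$, proved by induction on $|w|$ from the relations $av_1''\sim_{k-1}v_1''\sim_{k-1}u''$ and $bu_1''\sim_{k-1}u_1''\sim_{k-1}u''$. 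With these in hand, the induction hypothesis applied three times (to $\A_p$, $\A_r$ and $\A_q$, all ptNFAs of depth at most $k-1$) transfers acceptance: $u''$ accepted by $\A_p$ iff $wu''$ is, iff $u''$ accepted by $\A_r$, iff $v''$ accepted by $\A_r$, iff $wv''$ accepted by $\A_q$, iff $v''$ accepted by $\A_q$, whence $v$ is accepted by $\A$. Your proposal contains none of this merging-word construction nor the $\sim_{k-1}$-invariance under prepending $w$, so as it stands the proof is incomplete precisely at its decisive step.
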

  \begin{proof}
    Let $\A=(Q,\Sigma,\cdot,i,F)$. If the depth of $\A$ is 0, then $L(\A)$ is either $\emptyset$ or $\Sigma^*$, which are both $0$-piecewise testable by definition. Thus, assume that the depth of $\A$ is $\ell \ge 1$ and that the claim holds for ptNFAs of depth less than $\ell$. Let $u, v\in \Sigma^*$ be such that $u \sim_{\ell} v$. We prove that $u$ is accepted by $\A$ if and only if $v$ is accepted by $\A$.
    
    Assume that $u$ is accepted by $\A$ and fix an accepting path of $u$ in $\A$. If $\alp(u)\subseteq \Sigma(i)$, then the UMS property of $\A$ implies that $i\in F$. Therefore, $v$ is also accepted in $i$. If $\alp(u)\not\subseteq\Sigma(i)$, then $u = u'au''$ and $v = v'bv''$, where $u',v'\in\Sigma(i)^*$, $a,b\in \Sigma\setminus\Sigma(i)$, and $u'',v''\in \Sigma^*$. Let $p\in i\cdot a$ be a state on the fixed accepting path of $u$. Let $\A_p=(\fun{reach}(p),\Sigma,\cdot_p,p,F\cap \fun{reach}(p))$ be a sub-automaton of $\A$ induced by state $p$. Note that $\A_p$ is a ptNFA. By assumption, $\A_p$ accepts $u''$ and the depth of $\A_p$ is at most $\ell-1$.

    If $a = b$, Lemma~\ref{lemma2} implies that $u'' \sim_{\ell-1} v''$. By the induction hypothesis, $u''$ is accepted by $\A_p$ if and only if $v''$ is accepted by $\A_p$. Hence, $v=v'av''$ is accepted by $\A$.

    If $a\neq b$, then $u = u'au_0''bu_1''$ and $v = v'bv_0''av_1''$, where $b \notin \alp(u'au_0'')$ and $a \notin \alp(v'bv_0'')$. Then 
    \[
      u'' = u_0''bu_1'' \sim_{\ell-1} v_0''av_1'' = v''
    \]
    because, by Lemma~\ref{lemma2}, 
    \begin{align}\tag{*}\label{eq01}
      \fun{sub}_{\ell-1}(u_0'' b u_1'') = \fun{sub}_{\ell-1} (v_1'') \subseteq \fun{sub}_{\ell-1} (v_0'' av_1'') = \fun{sub}_{\ell-1} (u_1'') \subseteq \fun{sub}_{\ell-1} (u_0'' bu_1'')\,. 
    \end{align}
    
    If $p\in i\cdot b$, the induction hypothesis implies that $v''$ is accepted by $\A_p$, hence $v=v'bv''$ is accepted by $\A$. 
    
    If $p\notin i\cdot b$, let $q\in i\cdot b$. By the properties of $\A$, there exists a word $w \in \{a, b\}^*$ such that $pw = qw = r$, for some state $r$. Indeed, there exists $w_1$ and a unique maximal state $r$ with respect to $\{a,b\}$ such that $p w_1 = \{r\}$ and $a,b\in\Sigma(r)$. By the UMS property, there exists $w_2$ such that $q w_1w_2=\{r\}$. Let $w=w_1w_2$. We now show that $wu'' \sim_{\ell-1} u''$ by induction on the length of $w$. There is nothing to show for $w=\eps$. Thus, assume that $w=xw'$, for $x\in\{a,b\}$, and that $w'u'' \sim_{\ell-1} u''$.
    Notice that (\ref{eq01}) shows that $u'' \sim_{\ell-1} v_1'' \sim_{\ell-1} v'' \sim_{\ell-1} u_1''$.
    This implies that $\fun{sub}_{\ell-1}(v_1'') \subseteq \fun{sub}_{\ell-1}(av_1'') \subseteq \fun{sub}_{\ell-1} (v_0''av_1'') = \fun{sub}_{\ell-1}(v'') = \fun{sub}_{\ell-1}(v_1'')$, which shows that $av_1'' \sim_{\ell-1} v_1''$. Similarly we can show that $bu_1'' \sim_{\ell-1} u_1''$. If $x=a$, then $w'u'' \sim_{\ell-1} u'' \sim_{\ell-1} v_1''$ implies that $aw'u'' \sim_{\ell-1} av_1'' \sim_{\ell-1} v_1'' \sim_{\ell-1} u''$. Similarly, if $x=b$, then $w'u'' \sim_{\ell-1} u'' \sim_{\ell-1} u_1''$ implies that $bw'u'' \sim_{\ell-1} bu_1'' \sim_{\ell-1} u_1'' \sim_{\ell-1} u''$. Therefore, $wu'' \sim_{\ell-1} u''$. Analogously, $wv'' \sim_{\ell-1} v''$.
    
    Finally, using the induction hypothesis (of the main statement) on $\A_p$, we get that $u''$ is accepted by $\A_p$ if and only if $wu''$ is accepted by $\A_p$, which is if and only if $u''$ is accepted by $\A_r$. Since $u'' \sim_{\ell-1} v''$, the induction hypothesis applied on $\A_r$ gives that $u''$ is accepted by $\A_r$ if and only if $v''$ is accepted by $\A_r$. However, this is if and only if $wv''$ is accepted by $\A_q$. Using the induction hypothesis on $\A_q$, we obtain that $wv''$ is accepted by $\A_q$ if and only if $v''$ is accepted by $\A_q$. Together, the assumption that $u''$ is accepted by $\A_p$ implies that $v''$ is accepted by $\A_q$. Hence $v=v'bv''$ is accepted by $\A$, which completes the proof.
  \end{proof}

  In other words, the previous theorem says that if $k$ is the minimum number for which a piecewise testable language $L$ is $k$-piecewise testable, then the depth of any ptNFA recognizing $L$ is at least $k$.
  
  It is natural to ask whether this property holds for any NFA recognizing the language $L$. The following result shows that it is not the case. Actually, for any natural number $\ell$, there exists a piecewise testable language such that the difference between its $k$-piecewise testability and the depth of an NFA is at least $\ell$.
  \begin{theorem}\label{propMain}
    For every $k\ge 3$, there exists a $k$-piecewise testable language that is recognized by an NFA of depth at most $\left\lfloor\frac{k}{2}\right\rfloor$.
  \end{theorem}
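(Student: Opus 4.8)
The plan is to exhibit, for each $k\ge 3$, a concrete piecewise testable language $L_k$ together with an NFA $\A_k$ with $L(\A_k)=L_k$, and then to establish three facts: (a) $\fun{depth}(\A_k)\le\lfloor k/2\rfloor$; (b) $L_k$ is $k$-piecewise testable; (c) $L_k$ is \emph{not} $(k-1)$-piecewise testable. The idea is to push the nondeterministic ``reuse'' of Section~\ref{demo} further, so that a single state of $\A_k$ accounts for two consecutive positions of the distinguishing pattern. The mechanism that makes this possible is that a general NFA may carry a very wide self-loop at a state, or lack transitions altogether, whereas a ptNFA cannot: any ptNFA for $L_k$ is forced to ``unroll'' those wide self-loops and missing transitions into an extra chain of states, roughly doubling its longest simple path. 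So $\A_k$ would consist of a chain of about $\lfloor k/2\rfloor+1$ states carrying wide self-loops together with a few nondeterministic ``drop'' transitions on fresh letters, arranged so that every simple path from an initial state is confined to that chain; this gives fact~(a).

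For fact~(b) I would not argue with $\sim_k$ directly. Instead I would build a ptNFA $\B_k$ with $L(\B_k)=L_k$ and $\fun{depth}(\B_k)=k$, and invoke Theorem~\ref{thmMain}. The key observation is that the naive completion of $\A_k$ — adding one rejecting sink and sending every undefined transition to it — is \emph{not} a ptNFA: it violates the UMS property at the states carrying wide self-loops, because the sink then enters the self-loop component of such a state from below. To obtain a genuine ptNFA one must route the undefined transitions through an auxiliary chain rather than into a single sink, which is exactly what unrolls each ``doubled'' state into two levels and raises the depth to $k$. As an independent sanity check one can also read off from the partially ordered minimal DFA an explicit description of $L_k$ as a finite Boolean combination of languages $L_u$ with $|u|\le k$.

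The heart of the argument is fact~(c). By Fact~\ref{mainProperty}, used in the contrapositive form recorded immediately after it, it suffices to produce words $u,v$ with $u\sim_{k-1}v$ and $|L_k\cap\{u,v\}|=1$. I would take $u$ and $v$ differing only deep inside the pattern — by duplicating once, respectively deleting once, the single letter that corresponds to the innermost position of the unrolled length-$k$ pattern — in the spirit of the elementary identity $a^{k}\sim_{k-1}a^{k-1}$: membership in $L_k$ flips, but $\fun{sub}_{k-1}$ does not change. To verify $u\sim_{k-1}v$ I would peel off the first occurrences of the relevant fresh letters, one ``double level'' at a time, applying Lemma~\ref{lemma2} and a computation of the same shape as~(\ref{eq01}); to verify that exactly one of $u,v$ lies in $L_k$ I would trace both words through $\A_k$, which is straightforward thanks to the wide ``drop'' transitions.

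The step I expect to be the main obstacle is precisely fact~(c), together with the bookkeeping that couples it to fact~(a): one must choose $L_k$ so that its piecewise-testability index is provably exactly $k$ — that is, prove the ``not $(k-1)$-piecewise testable'' lower bound — while simultaneously ensuring that the nondeterministic reuse really does compress two pattern positions into one state without lengthening the longest simple path. A secondary but essential point, without which the gap would be only a constant as in Section~\ref{demo}, is the completion argument of fact~(b): one has to verify that the one-sink completion of $\A_k$ fails the UMS property, so that the only structural certificate of $k$-piecewise testability is a ptNFA of depth $k$, twice the depth of $\A_k$.
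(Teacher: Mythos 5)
There is a genuine gap: your text is a strategy outline, not a proof. You never define the language $L_k$ nor the NFA $\A_k$ --- the ``chain of about $\lfloor k/2\rfloor+1$ states carrying wide self-loops together with a few nondeterministic drop transitions on fresh letters'' is left entirely unspecified --- and consequently none of your facts (a), (b), (c) is actually established. Fact (c), the lower bound showing the piecewise-testability index is really $k$ and not smaller, is precisely the mathematical content of the theorem (otherwise the statement is trivial, since any $0$-piecewise testable language is $k$-piecewise testable), and you explicitly defer it as ``the main obstacle.'' Moreover, the surrounding heuristic cannot replace it: arguing that the one-sink completion of $\A_k$ violates the UMS property, so that ``the only structural certificate of $k$-piecewise testability is a ptNFA of depth $k$,'' is a non sequitur --- the index of $L_k$ is a property of the language, and the failure of one particular completion to be a ptNFA yields no lower bound on the index; lower bounds must come from exhibiting $u\sim_{k-1}v$ with $|L_k\cap\{u,v\}|=1$, which you cannot do without a concrete $L_k$.

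For comparison, the paper's proof is short and concrete and takes a different route from the one you sketch: for $i\ge 1$ it takes the unary language $L_i=a^i+a^{2i+1}a^*$. The minimal DFA is a chain of depth $2i+1$ (partially ordered and confluent), so $L_i$ is $(2i+1)$-piecewise testable by the DFA depth bound, and $a^{2i}\sim_{2i}a^{2i+1}$ with $a^{2i}\notin L_i$ shows it is not $2i$-piecewise testable. The recognizing NFA of depth $i$ consists of two cycles of length $i+1$ through the initial state, with a self-loop on one of them; since depth counts only simple paths, the cycles --- not self-loops on a partially ordered chain --- are what compress the depth. This is worth noting because your plan implicitly confines itself to acyclic-like NFAs with self-loops, whereas the theorem is about general NFAs, and exploiting genuine cycles is exactly what makes the paper's two-line depth computation work. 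If you want to salvage your approach, you must first write down $L_k$ and $\A_k$ explicitly and then carry out the $\sim_{k-1}$-distinguishing-pair argument; as submitted, the proof is missing.
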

  \begin{proof}
    For every $i\ge 1$, let $L_i=a^i+a^{2i+1}\cdot a^*$. We show that the language $L_i$ is $(2i+1)$-piecewise testable and that there exists an NFA of depth at most $i$ recognizing it.
    
    The minimal DFA for $L_i$ consists of $2i+1$ states $\{0,1,\ldots,2i+1\}$, where $0$ is the initial state, $i$ and $2i+1$ are accepting, $p\cdot a = p+1$ for $p<2i+1$, and $(2i+1)\cdot a = 2i+1$. The depth is $2i+1$, which shows that $L_i$ is $(2i+1)$-piecewise testable. Notice that $a^{2i} \sim_{2i} a^{2i+1}$, but $a^{2i}$ does not belong to $L_i$, hence $L_i$ is not $2i$-piecewise testable.
    
    The NFA for $L_i$ consists of two cycles of length $i+1$, the structure is depicted in Figure~\ref{fig8}. The initial state is state 0 and the solely accepting state is state $i$. The automaton accepts $L_i$. Indeed, it accepts $a^i$ and no shorter word. After reading $a^i$, the automaton is in state $i$ or $i'$. In both cases, the shortest nonempty path to the single accepting state $i$ is of length $i+1$. Thus, the automaton accepts $a^{2i+1}$, but nothing between $a^i$ and $a^{2i+1}$. Finally, using the self-loop in state $i'$, the automaton accepts $a^ia^*a^{i+1} = a^{2i+1}a^*$. The depth of the automaton is $i$. 
    \begin{figure}
      \centering
      \begin{tikzpicture}[baseline,->,>=stealth,shorten >=1pt,node distance=1.6cm,
        state/.style={circle,minimum size=7mm,very thin,draw=black,initial text=},
        every node/.style={fill=white,font=\small}]
        \node[state,initial below]  (0) {$0$};
        \node[state]                (1') [above left of=0]  {$1'$};
        \node[state]                (2') [left of=1'] {$2'$};
        \node                       (3') [left of=2'] {$\ldots$};
        \node[state]                (i') [below left of=3'] {$i'$};
        \node[state]                (1) [above right of=0]  {$1$};
        \node[state]                (2) [right of=1]  {$2$};
        \node                       (3) [right of=2]  {$\ldots$};
        \node[state,accepting]      (i) [below right of=3]  {$i$};
        \path
          (0) edge[bend left] node {$a$} (1)
          (1) edge node {$a$} (2)
          (2) edge node {$a$} (3)
          (3) edge[bend left=50] node {$a$} (i)
          (i) edge node {$a$} (0)
          (0) edge[bend right] node {$a$} (1')
          (1') edge node {$a$} (2')
          (2') edge node {$a$} (3')
          (3') edge[bend right=50] node {$a$} (i')
          (i') edge node {$a$} (0)
          (i') edge[loop left] node {$a$} (i')
        ;
      \end{tikzpicture}
      \caption{The NFA of depth $i$ recognizing $L_i$}
      \label{fig8}
    \end{figure}
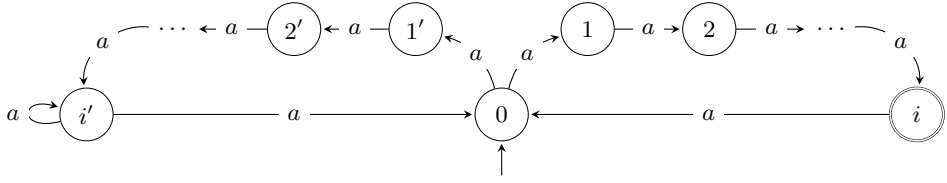
  \end{proof}

\subsection{Piecewise Testability and the Depth of NFAs}
  Theorem~\ref{thmMain} gives rise to a question, whether the opposite implication holds true.
  
  Notice that although the depth of ptNFAs is more suitable to provide bounds on $k$-piecewise testability, the depth is significantly influenced by the size of the input alphabet. For instance, for an alphabet $\Sigma$, the language $L = \bigcap_{a\in\Sigma} L_a$ of all words containing all letters of $\Sigma$ is a $1$-piecewise testable language such that any NFA recognizing it requires at least $2^{|\Sigma|}$ states and is of depth $|\Sigma|$, cf.~\cite{dlt15}.

  Considering the opposite direction of Theorem~\ref{thmMain}, it was independently shown in~\cite{KKP,dlt15} that, given a $k$-piecewise testable language over an $n$-letter alphabet, the tight upper bound on the depth of the minimal DFA recognizing it is $\binom{k+n}{k} - 1$. In other words, this formula gives the tight upper bound on the depth of the $\sim_k$-canonical DFA~\cite{dlt15} over an $n$ element alphabet. A related question on the size of this DFA is still open, see~\cite{Karandikar2015} for more details.

  We recall the result for DFAs.
  \begin{theorem}[\cite{KKP,dlt15}]\label{tightbound}
    For any natural numbers $k$ and $n$, the depth of the minimal DFA recognizing a $k$-piecewise testable language over an $n$-letter alphabet is at most $\binom{k+n}{k} - 1$. The bound is tight for any $k$ and $n$.
  \end{theorem}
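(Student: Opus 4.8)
The plan is to reduce the statement to a bound on the depth of a single canonical automaton and then prove that bound by a double induction on $k$ and $n$. Let $L$ be $k$-piecewise testable over $\Sigma$ with $|\Sigma|=n$, and let $\mathcal{C}_k$ be the canonical DFA of $\sim_k$: its states are the classes $[u]_k$, the initial state is $[\eps]_k$, the transitions are $[u]_k\cdot a=[ua]_k$, and the accepting states are $\{[u]_k : u\in L\}$ — well defined because $\sim_k$ is a congruence and, by Fact~\ref{mainProperty}, $L$ is a union of $\sim_k$-classes. Then $L(\mathcal{C}_k)=L$, and since $\sim_k\subseteq\sim_L$ (again Fact~\ref{mainProperty}) the map $[u]_k\mapsto[u]_{\sim_L}$ is a DFA morphism from $\mathcal{C}_k$ onto the minimal DFA $\A$ of $L$. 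Lifting a longest simple path from the initial state of $\A$ letter by letter to $\mathcal{C}_k$, and using that a DFA morphism sends the $i$-th state of the lift exactly to the $i$-th state of the original path, one sees that the lift is again simple; hence $\fun{depth}(\A)\le\fun{depth}(\mathcal{C}_k)$, and it suffices to prove $\fun{depth}(\mathcal{C}_k)\le\binom{k+n}{k}-1$.

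Write $D(k,n)=\binom{k+n}{k}-1$; Pascal's identity gives $D(k,n)=D(k,n-1)+D(k-1,n)+1$. I would prove $\fun{depth}(\mathcal{C}_k)\le D(k,n)$ by induction on $(n,k)$, the cases $n=0$ and $k=0$ being trivial since $\mathcal{C}_k$ then has a single state. Take a simple path $[\eps]_k=[w_0]_k\to\cdots\to[w_m]_k$ with $w_i=a_1\cdots a_i$. If $\alp(w_m)\neq\Sigma$, the whole path lies in the canonical automaton of $\sim_k$ over fewer than $n$ letters, so $m\le D(k,n-1)$. Otherwise let $t$ be least with $\alp(w_t)=\Sigma$ and put $x=a_t$, so $x\notin\alp(w_{t-1})$ and $\alp(w_{t-1}x)=\Sigma$. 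The prefix $[w_0]_k,\dots,[w_{t-1}]_k$ is a simple path in the canonical automaton over the $(n-1)$-letter alphabet $\alp(w_{t-1})$, hence $t-1\le D(k,n-1)$. For the suffix write $w_j=w_{t-1}xz_j$ with $z_j=a_{t+1}\cdots a_j$ (so $z_t=\eps$). The key point is a converse of Lemma~\ref{lemma2}, available because the pivot block $w_{t-1}x$ already contains every letter: $z_i\sim_{k-1}z_j$ implies $w_i\sim_k w_j$. Indeed, given $v$ with $|v|\le k$, greedily split $v=v_1v_2$ with $v_1$ the longest prefix embeddable in $w_{t-1}x$; the first letter of $v$ lies in $\alp(w_{t-1}x)=\Sigma$, so $v_1\neq\eps$ and $|v_2|\le k-1$, whence $v\preccurlyeq w_{t-1}xz_i$ iff $v_2\preccurlyeq z_i$ iff $v_2\preccurlyeq z_j$ iff $v\preccurlyeq w_{t-1}xz_j$. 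Consequently $[z_t]_{k-1},\dots,[z_m]_{k-1}$ are pairwise distinct and form a simple path in the canonical automaton of $\sim_{k-1}$, so $m-t\le D(k-1,n)$ by the induction hypothesis. Adding up, $m=(t-1)+1+(m-t)\le D(k,n-1)+1+D(k-1,n)=D(k,n)$.

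For tightness I would exhibit, for all $k$ and $n$, a $k$-piecewise testable language over $n$ letters whose minimal DFA has depth exactly $D(k,n)$. A natural candidate mirrors the recursion: over $\Sigma=\{a_1,\dots,a_n\}$ take the word $w^{*}_{k,n}=w^{*}_{k,n-1}\,a_n\,w^{*}_{k-1,n}$ with $w^{*}_{k,1}=a_1^{k}$ and $w^{*}_{0,n}=\eps$, whose prefixes form a simple path of length $D(k,n)$ through pairwise distinct $\sim_k$-classes; one then picks an accompanying $k$-piecewise testable language — e.g.\ a suitable union of $\sim_k$-classes, or $\bigcap_{v\in\Sigma^{k}}L_v$ — for which this chain does not collapse in the minimal DFA. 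For $n=1$ this is simply $L=a^{k}a^{*}$, whose minimal DFA has depth $k=D(k,1)$, matching the example in Section~\ref{demo}. I expect the main obstacle to be precisely this last step: verifying that no two classes on the extremal chain become Myhill-equivalent, which requires separating each relevant pair by an explicit continuation word and exploiting the recursive shape of $w^{*}_{k,n}$; by comparison the upper bound is routine, its only subtlety being the choice of the split point $t$ so that the converse of Lemma~\ref{lemma2} applies.
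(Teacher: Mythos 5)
This theorem is quoted from~\cite{KKP,dlt15}; the paper itself contains no proof of it, so your argument can only be judged on its own terms. The upper-bound half of your proposal is correct: the reduction to the $\sim_k$-canonical DFA via the morphism $[u]_k\mapsto[u]_{\sim_L}$ and the lifting of a longest simple path is sound, and the double induction --- splitting at the first prefix whose alphabet is all of $\Sigma$, bounding the prefix by the $(n-1)$-letter case, and bounding the suffix by the $\sim_{k-1}$ case via your greedy-split converse of Lemma~\ref{lemma2} (valid precisely because $\alp(w_{t-1}x)=\Sigma$, so $v_1\neq\eps$ and $|v_2|\le k-1$) --- correctly gives $\fun{depth}(\mathcal{C}_k)\le\binom{k+n}{k}-1$ through Pascal's identity. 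This is essentially the standard route to the cited bound.

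The genuine gap is the tightness claim, which is half of the statement and which you do not prove: you name candidate languages and explicitly concede that the key verification (no two chain states become Nerode-equivalent) is missing. This worry is not a formality, because fixing an accepting set can merge $\sim_k$-inequivalent prefixes under right concatenation, and your concrete fallback $\bigcap_{v\in\Sigma^{k}}L_v$ demonstrably fails. Already for $k=n=2$, let $L$ be the words over $\{a,b\}$ containing $aa,ab,ba,bb$ as subsequences: the words $a$ and $aa$ are Nerode-equivalent for $L$ (if $ba,bb\preccurlyeq aaw$, then since the leading $a$'s precede all of $w$ one gets $ba,bb\preccurlyeq w$, and hence $aw\in L$), so your extremal chain $\eps,a,aa,aab,aaba,aabab$ collapses; in fact the minimal DFA of this $L$ has depth $4<5=\binom{4}{2}-1$. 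The word $w^{*}_{k,n}$ itself is fine --- its prefixes are pairwise $\sim_k$-inequivalent by Lemma~\ref{lemma2} and induction, so the canonical DFA does have depth $\binom{k+n}{k}-1$ --- but the theorem asserts the existence of a $k$-piecewise testable \emph{language} whose minimal DFA attains this depth, and exhibiting a union of $\sim_k$-classes together with continuation words separating every pair on the chain (as is done in the cited papers) is exactly the content your proposal leaves out.
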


  It remains open whether this is also a lower bound for NFAs or ptNFAs.

\section{Application and Discussion}\label{sec5}
  The reader might have noticed that the reverse of the automaton $\A_i$ constructed in Section~\ref{demo} is deterministic and, when made complete, it satisfies the conditions of Fact~\ref{thm:characterization}. Since, by definition, a language is $k$-piecewise testable if and only if its reverse is $k$-piecewise testable, this observation provides the same upper bound $i+1$ on $k$-piecewise testability of the language $L(\A_i)$. However, this is just a coincidence and it is not difficult to find an example of a ptNFA whose reverse is not deterministic.

  Since both the minimal DFA for $L$ and the minimal DFA for $L^R$ provide an upper bound on $k$, it could seem reasonable to compute both DFAs in parallel with the hope that (at least) one of them will be computed in a reasonable (polynomial) time. Although this may work for many cases (including the case of Section~\ref{demo}), we now show that there are cases where both the DFAs are of exponential size.

  \begin{theorem}\label{thmApp}
    For every $n\ge 0$, there exists a $(2n+1)$-state ptNFA $\B$ such that the depth of both the minimal DFA for $L(\B)$ and the minimal DFA for $L(\B)^R$ are exponential with respect to $n$.
  \end{theorem}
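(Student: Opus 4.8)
The plan is to build $\B$ by gluing together two copies of the automaton of Section~\ref{demo}, one read forwards and one read backwards, so that determinisation blows up in both reading directions. Fix disjoint alphabets $A=\{a_0,\dots,a_{n-1}\}$ and $B=\{b_0,\dots,b_{n-1}\}$. Over $A$, take the complete automaton $\A_{n-1}$ of Section~\ref{demo} (its $n$ real states $0,\dots,n-1$ plus its sink); over $B$, take a renamed copy of the reverse automaton $\A_{n-1}^{R}$, which by the opening remark of Section~\ref{sec5} is a partial DFA on $n$ states completed by a fresh sink; and merge the two sinks into a single non-accepting state $s$. Complete the gluing by routing to $s$ every still-undefined transition (every $B$-transition out of an $A$-gadget state and every $A$-transition out of a $B$-gadget state), keeping the self-loops on all of $A\cup B$ at $s$; declare the initial states to be those of $\A_{n-1}$ together with the initial state of $\A_{n-1}^{R}$, and the accepting states to be those of $\A_{n-1}$ together with those of $\A_{n-1}^{R}$. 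Then $\B$ has exactly $(n+1)+(n+1)-1=2n+1$ states. (For $n=0$, $\B$ is a single state and the statement is vacuous, so assume $n\ge 1$.)

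First I would verify that $\B$ is a ptNFA. Completeness is immediate. $\B$ is partially ordered because each gadget is, because $s$ carries only self-loops, and because every edge leaving one gadget goes to $s$ while $s$ reaches only itself, so no cycle is created. For the UMS property the point is that $A$ and $B$ are disjoint: a state $p$ of the $A$-gadget has $\Sigma(p)\subseteq A$, and UMS for $\A_{n-1}$ already forces the self-loop component of $p$ to stay clear of the sink of $\A_{n-1}$; since the only new edges leaving the $A$-gadget go to $s$, the component of $p$ in $G(\B,\Sigma(p))$ is unchanged and still has $p$ as its unique maximal state. The $B$-gadget is symmetric, and $\Sigma(s)=A\cup B$ with $s$ the unique maximal state of $G(\B,A\cup B)$ since every state of $\B$ reaches $s$.

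Next I would identify the accepted language. A short analysis of the first point where the input leaves the ``right'' alphabet for the current gadget shows that every word using both an $A$-letter and a $B$-letter is driven into $s$ and rejected, that an $A$-word is accepted exactly when it lies in $L_{n-1}:=L(\A_{n-1})$, and that a $B$-word is accepted exactly when it lies in the $B$-copy of $L(\A_{n-1}^{R})$. Writing $\widetilde{L_{n-1}}$ for the copy of $L_{n-1}$ over $B$ and using $L(\A_{n-1}^{R})=(L_{n-1})^{R}$, this yields $L(\B)=L_{n-1}\uplus(\widetilde{L_{n-1}})^{R}$ and therefore $L(\B)^{R}=(L_{n-1})^{R}\uplus\widetilde{L_{n-1}}$, each a disjoint union of a language contained in $A^{*}$ and one contained in $B^{*}$, with $L(\B)\cap A^{*}=L_{n-1}$ and $L(\B)^{R}\cap B^{*}=\widetilde{L_{n-1}}$.

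Finally, the two depth lower bounds, which both follow from a single elementary observation: if $u,v\in A^{*}$ lead to distinguishable states in the minimal DFA of $L_{n-1}$, they also do in the minimal DFA of $L(\B)$, because a distinguishing word can be chosen in $A^{*}$, on which membership in $L(\B)$ coincides with membership in $L_{n-1}$. By Section~\ref{demo} (cf.~\cite{dlt15}) the minimal DFA of $L_{n-1}$ has depth $2^{n}-1$, i.e.\ a simple path $q_{0}\to q_{1}\to\cdots\to q_{2^{n}-1}$ from its initial state $q_{0}$; the words $u_{0}=\eps,u_{1},\dots,u_{2^{n}-1}\in A^{*}$ that reach these states (each $u_{i+1}$ extending $u_{i}$ by one $A$-letter) then reach $2^{n}$ pairwise distinguishable, hence pairwise distinct, states of the minimal DFA of $L(\B)$, forming a simple path from its initial state, so that minimal DFA has depth at least $2^{n}-1$. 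The same argument applied to $B$-words and $L(\B)^{R}=(L_{n-1})^{R}\uplus\widetilde{L_{n-1}}$ — using that $\widetilde{L_{n-1}}$ is a renamed copy of $L_{n-1}$ — gives depth at least $2^{n}-1$ for the minimal DFA of $L(\B)^{R}$. The only genuinely delicate points are making the state count come out to exactly $2n+1$ (hence the single shared sink and two copies of $\A_{n-1}$ rather than $\A_{n}$) and the UMS check on the glued structure, where disjointness of $A$ and $B$ is precisely what keeps each self-loop component inside a single gadget.
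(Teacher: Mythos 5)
Your construction is correct, but it takes a genuinely different route from the paper's. The paper builds its $(2i+1)$-state ptNFA $\B_i$ over the \emph{same} alphabet $\{a_0,\dots,a_i\}$ so that it recognizes the concatenation $L(\A_i)\cdot L(\A_i)^R$; reversal-invariance of that language handles both reading directions at once, and the exponential depth lower bound is obtained by exhibiting a word $w_i$ of length $2^{i+1}-1$ whose prefixes alternate between accepted and rejected, which forces a simple path of that length in the minimal DFA. You instead take a disjoint-alphabet union: a copy of $\A_{n-1}$ over $A$ and a copy of its (deterministic) reverse over $B$, sharing a single sink, so that $L(\B)$ restricted to $A^*$ is $L_{n-1}$, restricted to $B^*$ is a renamed copy of $L_{n-1}^R$, and every mixed word is rejected. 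This makes the delicate part of the paper's proof disappear: since membership in $L(\B)$ on $A^*$ coincides with membership in $L_{n-1}$, any two $A^*$-words distinguishable for $L_{n-1}$ remain distinguishable for $L(\B)$, so the known simple path of $2^n$ states in the minimal DFA of $L_{n-1}$ transfers verbatim, and symmetrically for $L(\B)^R$ on $B^*$. What your approach buys is a completely elementary depth argument that avoids analyzing the concatenation (whose subtlety the paper underlines in Example~\ref{notClosedLLR}); what it costs is a doubled alphabet, a slightly weaker exponent ($2^n-1$ instead of the paper's $2^{n+1}-1$ for the same number of states), and the side benefits the paper draws from its specific example (the discussion of $L^R\cdot L$ and Lemma~\ref{lemma17}). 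Your verification that the glued automaton is a ptNFA is sound; the essential point, as you identify, is that $\Sigma(p)$ stays inside a single gadget's alphabet and all cross-gadget edges target the shared sink, so each self-loop component is unchanged by the gluing.
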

  \begin{proof}[Proof sketch.]
    The idea of the proof is to make use of the automaton $\A_i$ constructed in Section~\ref{demo} to build a ptNFA $\B_i$ such that $L(\B_i) = L(\A_i) \cdot L(\A_i)^R$. Then $L(\B_i) = L(\B_i)^R$ and it can be shown that the minimal DFA recognizing the language $L(\B_i)$ requires an exponential number of states compared to $\B_i$. Namely, the depth of both the minimal DFA for $L(\B_i)$ and the minimal DFA for $L(\B_i)^R$ are of length at least $2^{i+1}-1$.
  \end{proof}
 
  The previous proof provides another motivation to investigate nondeterministic automata for piecewise testable languages. Given several DFAs, the result of a sequence of operations may result in an NFA that preserves some good properties. Namely, the language $L(\B_i)$ from the previous proof is a result of the operation concatenation of a language $L^R$ with $L$, where $L$ is a piecewise testable language given as a DFA.
  
  It immediately follows from Theorem~\ref{thmMain} that the language $L(\B_i)$ is $(2i+1)$-piecewise testable. This result is not easily derivable from known results, which are either in PSPACE or require to compute an exponentially larger minimal DFA, which anyway provides only the information that the language $L(\B_i)$ is $k$-piecewise testable for some $k\ge 2^{i+1}-1$.
  
  Even the information that the language $L(\B_i) = L^R \cdot L$, for a piecewise testable language $L$, does not seem very helpful, since, as we show in the example below, piecewise testable languages are not closed under the concatenation even with its own reverse.
 
  \begin{example}\label{notClosedLLR}
    Let $L$ be the language over the alphabet $\{a,b,c\}$ defined by the regular expression $ab^* + c(a+b)^*$. The reader can construct the minimal DFA for $L$ and check that the properties of Fact~\ref{thm:characterization} are satisfied. In addition, the depth of the minimal DFA is two, hence the language is 2-piecewise testable. Since the properties of Theorem~\ref{thm1pt} (see below) are not satisfied, the language $L$ is not 1-piecewise testable. 
    
    On the other hand, the reader can notice that the sequence $ca,cab,caba,cabab,cababa,\ldots$ is an infinite sequence where every word on the odd position belongs to $L\cdot L^R$, whereas every word on the even position does not. This means that there exists a cycle in the minimal DFA recognizing $L\cdot L^R$, which shows that $L\cdot L^R$ is not a piecewise testable language according to Fact~\ref{thm:characterization}. The reader can also directly compute the minimal DFA for $L\cdot L^R$ and notice a non-trivial cycle in it. 
  \end{example}
 
  To complete this part, we show that the language $L(\B_i)$ is not $(2i)$-piecewise testable. Thus, there are no ptNFAs recognizing the language $L(\B_i)$ with depth less then $2i+1$.
  \begin{lemma}\label{lemma17}
    For every $i\ge 0$, the language $L(\B_i)$ is not $2i$-piecewise testable. 
  \end{lemma}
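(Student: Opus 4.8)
The plan is to prove the lemma by induction on $i$, using a recursive description of $L(\B_i)=L(\A_i)\cdot L(\A_i)^R$ (recall this identity from the proof of Theorem~\ref{thmApp}) that parallels the recursive definition of the languages $L_i=L(\A_i)$ from Section~\ref{demo}. First I would record two facts about $L(\B_i)$. Since $L_i=L_{i-1}\cup\Sigma_{i-1}^*a_iL_{i-1}$ with $L_{i-1}\subseteq\Sigma_{i-1}^*$, every word of $L_i$ contains at most one occurrence of $a_i$, and a word $p'a_ip''$ with exactly one $a_i$ lies in $L_i$ iff $p''\in L_{i-1}$; the mirror statement holds for $L_i^R$. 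From this I obtain the base fact $L(\B_0)=\{\eps\}$, and the step fact: if $w=x\,a_i\,m\,a_i\,y$ with $a_i\notin\alp(xmy)$, then $w\in L(\B_i)$ iff $m\in L(\B_{i-1})$. The step fact follows because $w$ has exactly two $a_i$'s while each factor of any factorization $w=pq$ with $p\in L_i$, $q\in L_i^R$ carries at most one, forcing the cut to fall between the two $a_i$'s; so $p=x\,a_i\,m_1$ and $q=m_2\,a_i\,y$ with $m=m_1m_2$, and the descriptions above say $m_1\in L_{i-1}$, $m_2\in L_{i-1}^R$, i.e.\ $m\in L_{i-1}L_{i-1}^R=L(\B_{i-1})$; the converse is the same computation read backwards.

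For the base case $i=0$, the language $L(\B_0)=\{\eps\}$ is neither $\emptyset$ nor $\Sigma_0^*$, hence not $0$-piecewise testable. For the inductive step, assuming $L(\B_{i-1})$ is not $(2i-2)$-piecewise testable, Fact~\ref{mainProperty} supplies words $u',v'\in\Sigma_{i-1}^*$ with $u'\sim_{2i-2}v'$ and exactly one of them in $L(\B_{i-1})$; relabel so that $u'\in L(\B_{i-1})$ and $v'\notin L(\B_{i-1})$. I would then put $\Gamma=\alp(u')\cup\alp(v')$, choose a word $x=y$ over $\Gamma$ containing every word over $\Gamma$ of length at most $2i$ as a subsequence (for instance, all letters of $\Gamma$ concatenated and repeated $2i$ times), and set $u_i=x\,a_i\,u'\,a_i\,y$ and $v_i=x\,a_i\,v'\,a_i\,y$. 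Both have exactly two occurrences of $a_i$, so the step fact gives $u_i\in L(\B_i)$ and $v_i\notin L(\B_i)$, whence $|L(\B_i)\cap\{u_i,v_i\}|=1$.

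The remaining task, and the technical heart of the argument, is to show $u_i\sim_{2i}v_i$, i.e.\ $\fun{sub}_{2i}(u_i)=\fun{sub}_{2i}(v_i)$. By symmetry it suffices to take $t\preccurlyeq u_i$ with $|t|\le 2i$ and prove $t\preccurlyeq v_i$, splitting on how many $a_i$'s occur in $t$. If $t$ has no $a_i$ then $\alp(t)\subseteq\Gamma$, so $t\preccurlyeq x$, a prefix of $v_i$. If $t$ has one $a_i$, write $t=t_1a_it_2$; each $t_j$ has alphabet in $\Gamma$ and length at most $2i$, hence $t_1\preccurlyeq x$ and $t_2\preccurlyeq y$, and embedding $t_1$ into the leading $x$ of $v_i$, then the first $a_i$, then $t_2$ into the trailing $y$ shows $t\preccurlyeq v_i$. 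If $t$ has two $a_i$'s, write $t=t_1a_it_2a_it_3$; since $u_i$ has exactly those two $a_i$'s, an embedding of $t$ maps its two $a_i$'s to them, forcing $t_1\preccurlyeq x$, $t_2\preccurlyeq u'$, $t_3\preccurlyeq y$, and here $|t_2|\le 2i-2$, so $u'\sim_{2i-2}v'$ yields $t_2\preccurlyeq v'$ and therefore $t\preccurlyeq x\,a_i\,v'\,a_i\,y=v_i$. This gives $u_i\sim_{2i}v_i$, and Fact~\ref{mainProperty} then yields that $L(\B_i)$ is not $2i$-piecewise testable, completing the induction.

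The hard part is exactly this last verification. Wrapping the inductive witness inside $a_i(\cdot)a_i$ spends two letters of the length budget, so one must guarantee that no subsequence of $u_i$ or $v_i$ of length at most $2i$ which crosses an $a_i$ can ever expose more than $2i-2$ letters of the middle word; that is precisely what the saturated blocks $x,y$ provide, and it is what makes $u_i\sim_{2i}v_i$ reduce cleanly to the induction hypothesis $u'\sim_{2i-2}v'$. (The resulting bound matches the tightness already remarked after Theorem~\ref{thmApp}, since $L(\B_i)$ is $(2i+1)$-piecewise testable.)
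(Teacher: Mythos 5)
Your proof is correct, but it takes a genuinely different route from the paper's. The paper argues non-inductively: it reuses the word $w_i$ ($w_0=a_0$, $w_\ell=w_{\ell-1}a_\ell w_{\ell-1}$) from the proof of Theorem~\ref{thmApp}, takes the explicit witnesses $w_i'(w_i')^R\in L(\B_i)$ and $w_i'a_0(w_i')^R\notin L(\B_i)$, and proves $w_i'a_0(w_i')^R\sim_{2i}w_i'(w_i')^R$ by a counting argument showing (by induction on $i$) that any subsequence $ua_0v$ whose marked $a_0$ is forced onto the central $a_0$ must have $|u|\ge i$ and, by the palindrome structure, $|v|\ge i$, hence length at least $2i+1$. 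You instead induct on $i$ at the level of languages: from $L(\B_i)=L_i\cdot L_i^R$ and the recursion $L_i=L_{i-1}\cup\Sigma_{i-1}^*a_iL_{i-1}$ you derive the factorization fact that $xa_ima_iy\in L(\B_i)$ iff $m\in L(\B_{i-1})$ (when $a_i\notin\alp(xmy)$), then wrap the inductive witnesses $u'\sim_{2i-2}v'$ supplied by Fact~\ref{mainProperty} inside $xa_i(\cdot)a_iy$ with pads $x=y$ saturating all words of length at most $2i$ over $\alp(u'v')$; your three-way case analysis on the number of $a_i$'s in a subsequence correctly reduces $u_i\sim_{2i}v_i$ to $u'\sim_{2i-2}v'$, and the step fact settles the membership split. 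Your argument is more modular -- it replaces the paper's delicate embedding-length count with a generic padding trick and would apply to any family built by this two-sided wrapping -- at the price of producing non-explicit witnesses and of leaning on the identity $L(\B_i)=L(\A_i)\cdot L(\A_i)^R$, which the paper asserts in the proof of Theorem~\ref{thmApp} but does not verify in detail (its own proof of Lemma~\ref{lemma17} likewise only cites the membership of its two witnesses as an observation, so this reliance is comparable); the paper's concrete witnesses, by contrast, double as the words already used to lower-bound the DFA depth in Theorem~\ref{thmApp}.
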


\section{Complexity}\label{sec4}
  In this section, we first give an overview of known complexity results and characterization theorems for DFAs and then discuss the related complexity for ptNFAs.
  
  Simon~\cite{Simon1972} proved that piecewise testable languages are exactly those regular languages whose syntactic monoid is $\mathcal{J}$-trivial, which shows decidability of the problem whether a regular language is piecewise testable. Later, Stern proved that the problem is decidable in polynomial time for languages represented as minimal DFAs~\cite{Stern85a}, and Cho and Huynh~\cite{ChoH91} showed that it is NL-complete for DFAs. Trahtman~\cite{Trahtman2001} improved Stern's result by giving an algorithm quadratic in the number of states of the minimal DFA, and Kl\'ima and Pol\'ak~\cite{KlimaP13} presented an algorithm quadratic in the size of the alphabet of the minimal DFA. If the language is represented as an NFA, the problem is PSPACE-complete~\cite{mfcs2014ex} (see more details below).

  By definition, a regular language is piecewise testable if there exists $k$ such that it is $k$-piecewise testable. It gives rise to a question to find such a minimal $k$. The $k$-piecewise testability problem asks, given an automaton, whether it recognizes a $k$-piecewise testable language. The problem is trivially decidable because there are only finitely many $k$-piecewise testable languages over a fixed alphabet. The coNP upper bound on $k$-piecewise testability for DFAs was independently shown in~\cite{HofmanM15,dlt15}.\footnote{Actually, \cite{HofmanM15} gives the bound NEXPTIME for the problem for NFAs where $k$ is part of the input. The coNP bound for DFAs can be derived from the proof omitted in the conference version. The problem is formulated in terms of separability, hence it requires the NFA for the language and for its complement.} The coNP-completeness for $k\ge 4$ was recently shown in~\cite{KKP}. The complexity holds even if $k$ is given as part of the input. The complexity analysis of the problem for $k<4$ is provided in~\cite{dlt15}. We recall the results we need later.
  
  \begin{theorem}[\cite{KKP}]\label{thmconp}
    For $k\ge 4$, to decide whether a DFA represents a $k$-piecewise testable language is coNP-complete.
    It remains coNP-complete even if the parameter $k\ge 4$ is given as part of the input. 
    For a fixed alphabet, the problem is decidable in polynomial time.
  \end{theorem}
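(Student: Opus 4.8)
The plan is to prove the three assertions separately: the coNP upper bound (which serves both the fixed-$k$ and the input-$k$ cases), the coNP lower bound for every fixed $k\ge 4$ (which also yields hardness when $k$ is part of the input), and the polynomial-time algorithm for a fixed alphabet. For the upper bound I would work with the complementary problem, deciding that $L=L(\A)$ is \emph{not} $k$-piecewise testable, and show it lies in NP. By Fact~\ref{mainProperty} and the remark following it, $L$ fails to be $k$-piecewise testable if and only if there are words $u,v$ with $u\sim_k v$ and $|L\cap\{u,v\}|=1$. The NP procedure therefore guesses such a pair $(u,v)$, verifies $u\sim_k v$ by comparing $\fun{sub}_k(u)$ and $\fun{sub}_k(v)$ (testable in time polynomial in $|u|+|v|$ and $k$, without enumerating all pieces), and checks with $\A$ that exactly one of $u,v$ is accepted. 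Everything is polynomial once the guessed words have polynomial length.

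The first genuine difficulty is to guarantee that, whenever a separating pair exists, one exists whose two words are short, so that the guess is a legitimate NP certificate. I would prove a pumping/normal-form lemma: starting from any separating pair $(u,v)$, if either word exceeds a bound polynomial in the number $m$ of states of $\A$ and in $k$, then one can delete a synchronously chosen factor from both words so as to preserve simultaneously (i) the reachability in $\A$ of the two differing acceptance behaviours and (ii) the relation $\sim_k$. Preserving $\sim_k$ under such deletions is the delicate point, and is exactly where Lemma~\ref{lemma2} and the fact that $\sim_k$ is a congruence are used; iterating the deletion down to the bound yields a polynomial-size witness, placing the complement in NP and hence the problem in coNP, uniformly for fixed $k$ and for $k$ given in the input.

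For the lower bound I would reduce a standard NP-complete problem to the complement, i.e.\ to non-$k$-piecewise testability, for each fixed $k\ge 4$. The reduction builds from an instance $\varphi$ a DFA $\A_\varphi$ over an alphabet that encodes the combinatorial objects of $\varphi$ (its variables or vertices), designed so that a pair $u\sim_k v$ distinguished by $L(\A_\varphi)$ exists precisely when $\varphi$ is a positive instance; the constraints of $\varphi$ (clauses or edges) are realised as required or forbidden subsequences of length at most $k$, and the threshold $4$ is exactly the number of pieces needed to encode one constraint together with the separating bookkeeping letters. Since $\A_\varphi$ and its alphabet have size polynomial in $|\varphi|$ and the construction does not rely on $k$ being a constant, the same reduction shows coNP-hardness when $k\ge 4$ is part of the input. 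Designing the gadget so that no \emph{spurious} short $\sim_k$-witnesses appear — so that separating pairs correspond bijectively to solutions of $\varphi$ and to nothing else — is the main obstacle of the whole proof, and it is also the reason the hardness construction requires an unbounded alphabet.

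Finally, for a fixed alphabet $\Sigma$ I would decide the condition $\sim_k\subseteq\sim_L$ directly. Let $\mathcal{C}_k$ be the $\sim_k$-canonical DFA over $\Sigma$; with $\Sigma$ fixed its size depends only on $k$, being constant once $k$ is fixed, and its depth is $\binom{k+n}{k}-1$ with $n=|\Sigma|$ by Theorem~\ref{tightbound}, so it can be built explicitly. I would then form the product of the minimal DFA of $L$ with $\mathcal{C}_k$, compute its reachable states, and check for every class $s$ of $\mathcal{C}_k$ that all reachable product states with $\mathcal{C}_k$-component $s$ agree on acceptance in $\A$. This holds for every $s$ if and only if $\sim_k\subseteq\sim_L$, that is, iff $L$ is $k$-piecewise testable by Fact~\ref{mainProperty}; the reachability check is polynomial in $|\A|$ because $\mathcal{C}_k$ is of bounded size once $\Sigma$ is fixed, which gives the claimed polynomial-time algorithm.
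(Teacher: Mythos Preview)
This theorem is not proved in the paper; it is stated with attribution to~\cite{KKP} (the surrounding discussion also credits the coNP upper bound to~\cite{HofmanM15,dlt15}) and merely recalled for later use, so there is no argument in the paper to compare your proposal against.

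On the substance of your outline: the fixed-alphabet argument via the product with the $\sim_k$-canonical automaton is correct. For the coNP upper bound, however, the whole content is the polynomial length bound on a separating pair, and the pumping you sketch does not work as stated. Suppose $u=w_1w_2$ and $v=w_1aw_2$ with $u\sim_k v$, and you delete a factor $y$ from $w_1=xyz$ because $\A$ repeats a state across it; the resulting pair $(xzw_2,\,xzaw_2)$ keeps the $\A$-states but need not remain $\sim_k$-equivalent. Already for $k=1$, with $w_1=ba$, $w_2=\eps$ and the inserted letter $a$, deleting $y=a$ from $w_1$ turns the $\sim_1$-equivalent pair $(ba,baa)$ into $(b,ba)$, which is not $\sim_1$-equivalent. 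Neither Lemma~\ref{lemma2} (which peels off a prefix and \emph{lowers} the level from $\ell$ to $\ell{-}1$) nor the congruence property of $\sim_k$ (which lets you \emph{extend} context, not remove it) repairs this; the published upper-bound arguments rely on a more structured witness than a naively pumped pair. Your hardness paragraph is also too schematic to evaluate: it names neither the source NP-complete problem nor the encoding, and the ``no spurious witnesses'' property you correctly flag as the crux is precisely the construction that would have to be supplied.
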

  
  It is not difficult to see that, given a minimal DFA, it is decidable in constant time whether its language is 0-piecewise  testable, since it is either empty or $\Sigma^*$.
  \begin{theorem}[1-piecewise testability DFAs, \cite{dlt15}]\label{thm1pt}
    Let $\A=(Q,\Sigma,\cdot,i,F)$ be a minimal DFA. Then $L(\A)$ is 1-piecewise testable if and only if 
    (i) for every $p\in Q$ and $a\in\Sigma$, $paa=pa$ and
    (ii) for every $p\in Q$ and $a,b\in\Sigma$, $pab=pba$.
    The problem is in AC$^0$.
  \end{theorem}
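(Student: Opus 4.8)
The plan is to route the whole statement through Simon's criterion (Fact~\ref{mainProperty}) by first pinning down what the congruence $\sim_1$ is, and then to match each of the two conditions to an instance of the resulting criterion using minimality of $\A$. First I would observe that $\fun{sub}_1(w)=\{\eps\}\cup\alp(w)$, so $u\sim_1 v$ holds exactly when $\alp(u)=\alp(v)$. Since $\sim_1$ is a congruence, the form of Fact~\ref{mainProperty} recalled after it yields that $L=L(\A)$ is $1$-piecewise testable if and only if membership in $L$ depends only on the alphabet of the word, i.e.\ $\alp(u)=\alp(v)$ implies $u\in L \Leftrightarrow v\in L$. Thus the entire theorem reduces to proving that this alphabet-invariance is equivalent to conditions (i) and (ii).

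For the direction from (i) and (ii) to $1$-piecewise testability, I would show that $i\cdot w$ depends only on $\alp(w)$. Reading a word factors as a sequence of transitions, and condition (ii), applied at the intermediate state before any adjacent pair of letters, lets me transpose those two letters without changing the state reached; iterating, I may sort $w$ so that equal letters are adjacent. Condition (i), applied state-by-state, then collapses each run of a repeated letter to a single occurrence. Hence $i\cdot w=i\cdot w'$ for a word $w'$ listing each letter of $\alp(w)$ exactly once, and by (ii) the order among these distinct letters is immaterial, so $i\cdot w$ is determined by $\alp(w)$. As $\A$ has the single initial state $i$, the acceptance condition $w\in L \Leftrightarrow i\cdot w\in F$ then depends only on $\alp(w)$, giving the claim.

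For the converse, I would assume $L$ is alphabet-invariant and exploit that a minimal DFA identifies exactly the indistinguishable states. Fix any state $p$; since every state of a minimal DFA is reachable, pick $u$ with $i\cdot u=p$. For (i), for every continuation $w$ the words $uaaw$ and $uaw$ have the same alphabet, so alphabet-invariance gives $uaaw\in L \Leftrightarrow uaw\in L$, that is $paaw\in F \Leftrightarrow paw\in F$; as this holds for all $w$, the states $paa$ and $pa$ are indistinguishable, and minimality forces $paa=pa$. For (ii), the words $uabw$ and $ubaw$ likewise share an alphabet, so $pabw\in F \Leftrightarrow pbaw\in F$ for all $w$, whence $pab=pba$. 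This establishes the equivalence.

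It remains to justify the AC$^0$ bound. Deciding (i) and (ii) requires only verifying a conjunction, over all pairs $(p,a)$ and triples $(p,a,b)$, of equalities between state indices obtained from a constant number of transition-table lookups; crucially, no reachability computation or iteration is involved. A composed lookup such as $(p\cdot a)\cdot a$, whose inner argument is itself computed, can be evaluated in constant depth by selecting, over the polynomially many candidate states $r$, the entry $r\cdot a$ guarded by the equality test $p\cdot a=r$. The whole decision is then a single unbounded-fan-in conjunction of polynomially many constant-depth equality checks over the standard transition-table encoding, which lies in AC$^0$. I expect the alphabet-invariance reduction and both implications to be routine; the part demanding the most care is this complexity argument, where one must confirm that such data-dependent transition lookups can indeed be implemented at constant depth rather than sneaking in hidden iteration.
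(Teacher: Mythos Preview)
Your argument is correct. The equivalence part is the standard one: once you observe that $\sim_1$ is exactly the ``same alphabet'' relation, the forward direction follows because (i) and (ii) make the transition monoid commutative and idempotent, so $i\cdot w$ depends only on $\alp(w)$; the converse uses reachability and indistinguishability in the minimal DFA exactly as you wrote. The AC$^0$ argument is also sound, since each of the polynomially many checks involves only a bounded number of table lookups composed to depth two.

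However, there is nothing to compare against here: the paper does not prove Theorem~\ref{thm1pt}. It is quoted verbatim from \cite{dlt15} and used as a black box (in particular, the proof of Lemma~\ref{thm1ptNFAs} reduces the ptNFA case to the DFA case and then invokes Theorem~\ref{thm1pt}). So your proposal supplies a proof where the present paper offers none; it is in line with the proof one would expect in \cite{dlt15}, and no discrepancy in approach can be identified from this paper alone.
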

  
  It is not hard to see that this result does not hold for ptNFAs. Indeed, one can simply consider a minimal DFA satisfying the properties and add a nondeterministic transition that violates them, but not the properties of ptNFAs. On the other hand, the conditions are still sufficient.
  \begin{lemma}[1-piecewise testability ptNFAs]\label{thm1ptNFAs}
    Let $\A=(Q,\Sigma,\cdot,i,F)$ be a complete NFA. If 
    (i) for every $p\in Q$ and $a\in\Sigma$, $paa=pa$ and
    (ii) for every $p\in Q$ and $a,b\in\Sigma$, $pab=pba$,
    then the language $L(\A)$ is 1-piecewise testable.
  \end{lemma}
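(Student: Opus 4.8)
The plan is to reduce the statement to Fact~\ref{mainProperty}. Since $\fun{sub}_1(w)=\{\eps\}\cup\alp(w)$, the congruence $\sim_1$ is exactly equality of alphabets: $u\sim_1 v$ if and only if $\alp(u)=\alp(v)$. Hence it suffices to prove that $\alp(u)=\alp(v)$ implies $u\in L(\A)\iff v\in L(\A)$; then $\sim_1\subseteq\sim_{L(\A)}$ and Fact~\ref{mainProperty} gives that $L(\A)$ is $1$-piecewise testable.

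First I would lift the hypotheses from single states to arbitrary sets of states. Using $X\cdot w=\bigcup_{p\in X}p\cdot w$ for every $X\subseteq Q$ and $w\in\Sigma^*$, conditions (i) and (ii) immediately yield $X\cdot aa=X\cdot a$ and $X\cdot ab=X\cdot ba$ for all $X\subseteq Q$ and $a,b\in\Sigma$. Thus the assignment $w\mapsto(X\mapsto X\cdot w)$ is a monoid homomorphism from $\Sigma^*$ into the transformation monoid of $2^Q$ that satisfies the two defining relations ($ab=ba$ and $aa=a$) of the free commutative idempotent monoid on $\Sigma$, i.e.\ of the semilattice $(2^\Sigma,\cup)$. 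Consequently it factors through the map $w\mapsto\alp(w)$. Concretely, fixing an arbitrary total order on $\Sigma$ and letting $b_1<b_2<\dots<b_m$ enumerate $\alp(w)$, repeated applications of $X\cdot ab=X\cdot ba$ (to bring equal letters adjacent) and $X\cdot aa=X\cdot a$ (to delete repetitions) rewrite $X\cdot w$ to $X\cdot b_1b_2\cdots b_m$; I would establish this by a routine induction on $|w|$. In particular $X\cdot u=X\cdot v$ whenever $\alp(u)=\alp(v)$.

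Taking $X$ to be the set $I$ of initial states then gives $I\cdot u=I\cdot v$, so $u\in L(\A)$ iff $I\cdot u\cap F\ne\emptyset$ iff $I\cdot v\cap F\ne\emptyset$ iff $v\in L(\A)$, which completes the argument. Note that completeness of $\A$ plays no essential role: if $I\cdot u=\emptyset$ then also $I\cdot v=\emptyset$, and neither word is accepted.

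I do not expect a genuine obstacle. The only step needing a little care is the passage from the pointwise identities (i), (ii) to the statement that $X\cdot w$ depends only on $\alp(w)$ --- equivalently, that the image of $\Sigma^*$ in the transformation monoid of $2^Q$ is a commutative band, hence a semilattice, so that $w$ and the sorted list of its distinct letters induce the same transformation. Everything else is bookkeeping with $\fun{sub}_1$ and Fact~\ref{mainProperty}.
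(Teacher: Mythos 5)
Your argument is correct, but it takes a different route from the paper. The paper's proof determinizes: it passes to the minimal DFA $\D$ obtained from $\A$ by the subset construction, lifts the identities (i) and (ii) to subsets exactly as you do ($Xaa=\bigcup_{p\in X}paa=\bigcup_{p\in X}pa=Xa$ and $Xab=Xba$), and then simply invokes the DFA characterization of 1-piecewise testability (Theorem~\ref{thm1pt}) as a black box. You instead stay at the level of the action of $\Sigma^*$ on $2^Q$: after the same lifting step you observe that the induced transformations form a commutative monoid of idempotents, so $X\cdot w$ depends only on $\alp(w)$; since $u\sim_1 v$ is precisely $\alp(u)=\alp(v)$ and $\sim_1$ is a congruence, this gives $\sim_1\subseteq\sim_{L(\A)}$ and Fact~\ref{mainProperty} finishes the proof. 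Both proofs hinge on the identical lifting computation, so the mathematical core is shared; what differs is what you appeal to afterwards. The paper's version is shorter because Theorem~\ref{thm1pt} is already available, while yours is self-contained (modulo Simon's Fact~\ref{mainProperty}), avoids determinization and minimization altogether, in fact proves the stronger statement that $I\cdot u=I\cdot v$ whenever $\alp(u)=\alp(v)$, and correctly notes that completeness of $\A$ is never used -- in the paper it only serves to keep the subset-construction states nonempty. Your reduction of $\sim_1$ to alphabet equality and the bubble-sort rewriting via $X\cdot ab=X\cdot ba$ and $X\cdot aa=X\cdot a$ are both sound, so there is no gap.
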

  
  Note that any ptNFA $\A$ satisfying $(i)$ must have $|pa|=1$ for every state $p$ and letter $a$. If $pa=\{r_1,r_2,\ldots,r_m\}$ with $r_1 < r_2 < \ldots < r_m$, then $paa=pa$ implies that $\{r_1,\ldots,r_m\} a = \{r_1,\ldots,r_m\}$. Then $r_1 \in r_1 a$ and the UMS property says that $r_1 a = \{r_1\}$. By induction, we can show hat $r_i a = \{r_i\}$. Consider the component of $G(\A,\Sigma(r_1))$ containing $r_1$. Then $r_1,\ldots,r_m$ all belong to this component. Since $r_1$ is maximal, $r_1$ is reachable from every $r_i$ under $\Sigma(r_1)\supseteq \{a\}$. However, the partial order $r_1 < \ldots < r_m$ implies that $r_1$ is reachable from $r_i$ only if $r_i=r_1$. Thus, $|pa|=1$. However, $\A$ can still have many initial states, which can be seen as a finite union of piecewise testable languages rather then a nondeterminism.

  The 2-piecewise testability characterization for DFAs is as follows.
  \begin{theorem}[2-piecewise testability DFAs, \cite{dlt15}]\label{thm2ptNL}
    Let $\A=(Q,\Sigma,\cdot,i,F)$ be a minimal partially ordered and confluent DFA. The language $L(\A)$ is 2-piecewise testable if and only if for every $a\in\Sigma$ and every state $s$ such that $i w = s$ for some $w\in\Sigma^*$ with $|w|_a\ge 1$, $s ba = s aba$ for every $b\in\Sigma\cup\{\eps\}$.
    The problem is NL-complete.
  \end{theorem}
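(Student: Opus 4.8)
The plan is to reduce the whole statement to two manageable pieces: a clean reformulation of $2$-piecewise testability for minimal DFAs, and a combinatorial analysis of the congruence $\sim_2$. For the reformulation, write $u\approx v$ whenever $iu=iv$; since $\A$ is a minimal DFA, $\approx$ is exactly the Nerode right congruence and the (two-sided) syntactic congruence is contained in it. Combining this with Fact~\ref{mainProperty} gives that $L(\A)$ is $2$-piecewise testable if and only if $\sim_2\subseteq{\approx}$, that is, $u\sim_2 v$ implies $iu=iv$ for all words $u,v$. As $\approx$ is a right congruence, it suffices to understand it on a generating set. First I would observe that the stated condition is equivalent to $\approx$ containing all pairs $(wba,\,waba)$ with $|w|_a\ge 1$ and $b\in\Sigma\cup\{\eps\}$: the condition is precisely $s\,ba=s\,aba$ for $s=iw$, and closing under right multiplication by an arbitrary suffix $z$ yields $wbaz\approx wabaz$.

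For the \emph{forward} direction (if $L(\A)$ is $2$-piecewise testable then the condition holds) I would argue by contraposition through these generators, whose key feature is that each is already a pair of $\sim_2$-equivalent words: $waba$ is obtained from $wba$ by inserting one extra $a$ directly after $w$, and because $|w|_a\ge 1$ this new occurrence creates no new length-$2$ subsequence — every pair $(x,a)$ with $x\in\alp(w)$, the pair $(a,a)$, and every pair $(a,y)$ with $y$ after the insertion are already witnessed in $wba$ — so $\fun{sub}_2(wba)=\fun{sub}_2(waba)$. Hence if the condition fails at some state $s=iw$ with $sba\ne saba$, minimality supplies a word $z$ separating these two states, and then $wbaz\sim_2 wabaz$ while exactly one of them lies in $L(\A)$; by Fact~\ref{mainProperty} the language is not $2$-piecewise testable. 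The case $b=\eps$ (i.e.\ $sa\ne saa$) is handled identically with the witnesses $waz\sim_2 waaz$.

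The \emph{backward} direction is the heart of the proof and, given the reformulation, amounts to showing that the condition forces $\sim_2\subseteq{\approx}$. Since $\approx$ already contains all generator pairs $(wba,waba)$, the task is to connect any two $\sim_2$-equivalent words by a sequence of state-preserving steps built from these generators together with the identifications coming from confluence. I would proceed by induction on the excess multiplicity $\sum_{c\in\Sigma}\bigl(|w|_c-1\bigr)^+$ of a word, using a deletion step — remove an occurrence of $a$ that has an earlier $a$ and a further $a$ within the next two positions — to drive each word toward a representative determined by its $\sim_2$-class, so that $u\sim_2 v$ forces $u$ and $v$ to the same representative and hence $iu=iv$. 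Partial order (acyclicity) guarantees termination, while the confluence/UMS property of $\A$ provides, whenever the bare deletion pattern is unavailable, a word over the two relevant letters that merges the two computations and licenses the corresponding state identification (mirroring the merging argument in the proof of Lemma~\ref{lemmaMain}). The main obstacle is exactly this reduction step: showing that the local generators, reinforced by confluence, suffice to equate \emph{all} $\sim_2$-equivalent words; I expect the delicate case to be words beginning with distinct first letters, where Lemma~\ref{lemma2} is needed to align the suffixes before the deletion and merging steps can be applied.

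Finally, for NL-completeness I would treat the two bounds separately. For membership, the negation of the condition asks for a letter $a$, a state $s$ reachable from $i$ by a word containing $a$ (an NL reachability query: reach some $t$, take an $a$-transition, then reach $s$), and a letter $b\in\Sigma\cup\{\eps\}$ with $sba\ne saba$; the latter inequality is decidable in logarithmic space because $\A$ is deterministic, so both target states are obtained by simply following transitions from $s$. Thus the negation is in NL, and by the Immerman--Szelepcsényi theorem the condition itself is in $\mathrm{coNL}=\mathrm{NL}$. For hardness, I would give a logarithmic-space reduction from directed $st$-reachability (as in the NL-hardness underlying Theorem~\ref{thmMainNL} and~\cite{ChoH91}), constructing a minimal, partially ordered, confluent DFA in which a designated local violation $sba\ne saba$ is present exactly when the target is reachable from the source, so that the instance is negative if and only if the constructed language is $2$-piecewise testable.
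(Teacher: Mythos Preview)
The paper does not contain a proof of Theorem~\ref{thm2ptNL}: it is quoted verbatim as a result of~\cite{dlt15} and no argument is supplied here, so there is no ``paper's own proof'' to compare your proposal against. The paper only uses this theorem as a black box in the proof of Lemma~\ref{thm2ptNFAs}, where the subset construction transfers the ptNFA condition to the minimal DFA and then Theorem~\ref{thm2ptNL} is invoked.

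On the substance of your proposal, the forward direction is correct and cleanly argued: the verification that $wba\sim_2 waba$ whenever $|w|_a\ge 1$ is right, and minimality plus Fact~\ref{mainProperty} finishes it. The NL membership argument is also fine. The backward direction, however, is where the real content lies and your sketch is too loose to be a proof. You propose to reduce every word to a $\sim_2$-canonical representative via ``deletion steps'' licensed by the generators $(wba,waba)$ together with confluence, but you never specify the representative, you never show that the available moves suffice to reach it, and the invocation of confluence (``a word over the two relevant letters that merges the two computations'') is borrowed from the proof of Lemma~\ref{lemmaMain} without explaining how it interacts with the \emph{specific} identity $sba=saba$ rather than the generic UMS merging used there. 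In particular, your deletion rule ``remove an occurrence of $a$ that has an earlier $a$ and a further $a$ within the next two positions'' does not obviously cover all the rewrites needed to equate, say, $abba$ and $bbaa$ (which are $\sim_2$-equivalent), and you would need to spell out exactly which sequence of state-preserving moves connects them. As written, the backward direction is a plausible plan rather than a proof.
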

  
  It is again sufficient for ptNFAs.
  \begin{lemma}[2-piecewise testability ptNFAs]\label{thm2ptNFAs}
    Let $\A=(Q,\Sigma,\cdot,i,F)$ be a ptNFA. If for every $a\in\Sigma$ and every state $s$ such that $i w = s$ for some $w\in\Sigma^*$ with $|w|_a\ge 1$, $s ba = s aba$ for every $b\in\Sigma\cup\{\eps\}$, then the language $L(\A)$ is 2-piecewise testable.
  \end{lemma}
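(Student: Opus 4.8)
The plan is to reduce the statement to the deterministic characterization of $2$-piecewise testability, Theorem~\ref{thm2ptNL}. Since $\A$ is a ptNFA, Theorem~\ref{thm10} tells us that $L(\A)$ is piecewise testable, so by Fact~\ref{thm:characterization} the minimal DFA $\D$ recognizing $L(\A)$ is partially ordered and confluent; hence Theorem~\ref{thm2ptNL} is applicable to $\D$, and it suffices to verify that $\D$ meets its hypothesis. Concretely: for every $a\in\Sigma$, every state $s$ of $\D$ reachable by some word $w$ with $|w|_a\ge 1$, and every $b\in\Sigma\cup\{\eps\}$, I must show $s\cdot ba=s\cdot aba$. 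Identifying the states of the minimal DFA with Myhill classes, $s=[w]$, this is exactly the claim $[wba]=[waba]$, i.e.\ $wbaz\in L(\A)\iff wabaz\in L(\A)$ for every $z\in\Sigma^*$.

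To establish this I would run a short union argument inside $\A$. Fix $a$ and a word $w$ with $|w|_a\ge 1$, and let $i\cdot w$ denote the set of states $\A$ reaches from its initial state(s) after reading $w$. Every state in $i\cdot w$ is reached by the word $w$, which contains $a$, so the hypothesis of the lemma applies to each of them and gives $s'\cdot ba=s'\cdot aba$ for all $s'\in i\cdot w$ and all $b\in\Sigma\cup\{\eps\}$. Taking the union over $s'\in i\cdot w$ and using that the NFA transition function distributes over unions, we obtain the set equality $i\cdot wba=i\cdot waba$ in $\A$. Appending an arbitrary $z$ and intersecting with the set of accepting states then yields $wbaz\in L(\A)\iff wabaz\in L(\A)$, which is precisely $[wba]=[waba]$. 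With $\D$ minimal, partially ordered, confluent, and satisfying the hypothesis of Theorem~\ref{thm2ptNL}, that theorem gives that $L(\A)=L(\D)$ is $2$-piecewise testable.

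I do not expect a genuine obstacle; the only delicate points are bookkeeping. One must line up the two notions of ``reachable by a word containing $a$'' — states of $\D$ versus Myhill classes $[w]$ with $|w|_a\ge 1$ — so that the equalities $[wba]=[waba]$ are exactly the transition identities demanded by Theorem~\ref{thm2ptNL}, and one must remember to invoke Theorem~\ref{thm10} together with Fact~\ref{thm:characterization} to license the use of Theorem~\ref{thm2ptNL} at all. A self-contained proof avoiding the detour through $\D$ is also possible: by Fact~\ref{mainProperty} it is enough to show that $u\sim_2 v$ implies $u\in L(\A)\iff v\in L(\A)$, and the same set equality $i\cdot wba=i\cdot waba$ lets one rewrite $u$ into $v$ through membership-preserving local moves. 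That route is messier, however, because $\sim_2$ does not fix the order of first occurrences of letters (for instance $abab\sim_2 baba$), so the rewriting has to be organised carefully; passing through Theorem~\ref{thm2ptNL} sidesteps exactly this difficulty, which is why I would present it that way.
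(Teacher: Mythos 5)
Your proposal is correct and follows essentially the same route as the paper: pass to the minimal DFA, invoke Theorem~\ref{thm10} together with Fact~\ref{thm:characterization} to get partial order and confluence, and verify the hypothesis of Theorem~\ref{thm2ptNL} via the union argument $i\cdot wba=\bigcup_{s\in i\cdot w} s\cdot ba=\bigcup_{s\in i\cdot w} s\cdot aba=i\cdot waba$ inside $\A$. The only cosmetic difference is that you phrase the DFA states as Myhill classes $[w]$ while the paper works with the subset-construction states $S=Iw$ directly; the underlying argument is identical.
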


  Considering Theorem~\ref{thmconp}, the lower bound for DFAs is indeed a lower bound for ptNFAs. Thus, we immediately have that the $k$-piecewise testability problem for ptNFAs is coNP-hard for $k\ge 4$. We now show that it is actually coNP-hard for every $k\ge 0$. The proof is split into two lemmas.

  The proof of the following lemma is based on the proof that the non-equivalence problem for regular expressions with operations union and concatenation is NP-complete, even if one of them is of the form $\Sigma^n$ for some fixed $n$~\cite{Hunt73,StockmeyerM73}.

  \begin{lemma}\label{0ptNFAhard}
    The 0-piecewise testability problem for ptNFAs is coNP-hard (even if the alphabet is binary).
  \end{lemma}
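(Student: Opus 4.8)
The plan is to adapt the classical argument (Hunt; Stockmeyer and Meyer) that testing $L(R)=\Sigma^n$ for a union--concatenation regular expression $R$ with $L(R)\subseteq\Sigma^n$ is coNP-complete, and to wrap its instances inside ptNFAs. Concretely I reduce from \textsc{Dnf-Tautology} (equivalently, CNF-unsatisfiability), which is coNP-complete. Given a DNF $\varphi=C_1\vee\cdots\vee C_m$ over variables $x_1,\dots,x_n$, identify an assignment with the word in $\{0,1\}^n$ listing the values of $x_1,\dots,x_n$, and for each clause $C_j$ form the pattern word $E_j=c_{j,1}c_{j,2}\cdots c_{j,n}$ over the atoms $\{0,1,(0\cup 1)\}$, where $c_{j,i}$ equals $0$ or $1$ if $C_j$ forces $x_i$ to be false resp.\ true, and equals $(0\cup 1)$ otherwise (a clause containing a literal and its negation is dropped). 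Then $L(E_1)\cup\cdots\cup L(E_m)\subseteq\{0,1\}^n$ is exactly the set of satisfying assignments of $\varphi$, so $\varphi$ is a tautology if and only if this set equals $\{0,1\}^n$, i.e.\ if and only if $B:=\{0,1\}^n\setminus\bigl(L(E_1)\cup\cdots\cup L(E_m)\bigr)$ is empty.

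The ptNFA $\A$ works over $\Gamma=\{0,1,\#\}$ and is built so that $L(\A)=\overline{\#B\#}=\Gamma^*\setminus\{\#\beta\#\mid\beta\in B\}$, even though $B$ may be exponentially large. It is the disjoint union (no shared states) of two polynomial-size ptNFAs. The first is the minimal DFA of $\overline F$, where $F=\#(0\cup 1)^n\#$; this DFA has $O(n)$ states, and since $F$ is finite, $\overline F$ is co-finite, hence piecewise testable, so by Fact~\ref{thm:characterization} its minimal DFA is partially ordered and confluent -- a ptNFA. The second is the obvious acyclic NFA for $\#\bigl(L(E_1)\cup\cdots\cup L(E_m)\bigr)\#$ -- an initial $\#$-transition, then $m$ parallel chains of length $n$ reading the patterns $E_j$, then a concluding $\#$ -- completed by adding a single non-accepting sink with $\Gamma$-self-loops and redirecting every missing transition to it. Since only the sink carries self-loops, for every non-sink state $p$ the graph $G(\A,\Sigma(p))$ has no edge at $p$, and for the sink the relevant component is its whole part with the sink as unique maximal state; hence this second automaton is partially ordered, complete and satisfies UMS. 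Taking the disjoint union preserves all three properties (each state's component lies entirely inside one part), so $\A$ is a ptNFA; it has $O(mn)$ states, is computable in polynomial time, and by construction $L(\A)=\overline F\cup\#\bigl(L(E_1)\cup\cdots\cup L(E_m)\bigr)\#=\overline{\#B\#}$.

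For correctness, note that $L(\A)$ is never empty (every word not of the form $\#v\#$ lies in $\overline F\subseteq L(\A)$), so $L(\A)$ is $0$-piecewise testable if and only if $L(\A)=\Gamma^*$, which holds if and only if $B=\emptyset$, i.e.\ if and only if $\varphi$ is a tautology. Thus $\varphi\mapsto\A$ is a polynomial-time many-one reduction from \textsc{Dnf-Tautology} to $0$-piecewise testability for ptNFAs, proving coNP-hardness. To obtain the binary-alphabet version I would re-run the same construction over $\{0,1\}$, using the delimiter $0^{2n}$ (which no assignment block can create spuriously) and encoding the two truth values by the blocks $01$ and $10$; the two component automata change only cosmetically and remain polynomial-size ptNFAs recognising the analogue of $\overline{\#B\#}$.

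The main obstacle is not the high-level logic but the bookkeeping needed to certify, cleanly, that the completed and then unioned automaton really is a ptNFA: one must check that introducing the fresh universal sink creates no new non-trivial cycle and no UMS violation, that keeping the two "dead" states separate (the one in the $\overline F$-part is accepting, the one in the $\#L\#$-part is not) is harmless, and -- in the binary case -- that the re-encoding preserves both membership equivalence with $\overline{\#B\#}$ and the structural conditions. I would package the reusable pieces as small lemmas: a disjoint union of ptNFAs is a ptNFA, and completing an acyclic NFA by a single fresh $\Sigma$-looping sink yields a ptNFA provided no other state carries a self-loop.
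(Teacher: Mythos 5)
Your reduction is essentially the paper's: both arguments take the Hunt/Stockmeyer--Meyer encoding of CNF-unsatisfiability (equivalently DNF-tautology) into length-$n$ patterns over $\{0,1,0+1\}$, wrap the union of these patterns into a polynomial-size ptNFA whose language is universal iff the formula is a tautology, and use the fact that a nonempty language is $0$-piecewise testable iff it is universal. The only substantive difference is the wrapper. The paper stays over the binary alphabet and pads with \emph{all words of length different from $n$}, realized inside one automaton by an extra $\alpha$-chain (with the unique self-looping maximal state) and an $r$-chain used to complete the automaton; this yields the ``even for a binary alphabet'' claim directly. You instead pad with delimiters over a ternary alphabet, take the disjoint union of the minimal DFA of the cofinite language $\overline{\#(0+1)^n\#}$ with a completed acyclic pattern automaton, and then reduce back to two letters by re-encoding truth values as blocks $01/10$ with delimiter $0^{2n}$. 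Your ternary construction is correct (the disjoint-union and sink-completion lemmas you isolate do hold, and cofinite languages are piecewise testable so the first component is a ptNFA), but the binary case -- which is part of the statement -- is only sketched; the block re-encoding does go through (everything is fixed-length, so no spurious parsings arise and the minimal DFA of the cofinite part stays of size $O(n)$), yet these checks would need to be written out, whereas the paper's length-based padding makes them unnecessary.
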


  It seems natural that the $(k+1)$-piecewise testability problem is not easier then the $k$-piecewise testability problem. We now formalize this intuition. We also point out that our reduction introduces a new symbol to the alphabet.
  \begin{lemma}\label{lemKtoK+1}
    For $k\ge 0$, $k$-piecewise testability is polynomially reducible to $(k+1)$-piecewise testability.
  \end{lemma}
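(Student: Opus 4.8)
The plan is to give a polynomial‑time many‑one reduction taking a ptNFA $\A=(Q,\Sigma,\cdot,I,F)$ (hence also a complete minimal DFA of a piecewise testable language, which is a ptNFA) to a ptNFA $\A'$ over $\Sigma'=\Sigma\cup\{c\}$, where $c$ is a fresh letter, such that $L(\A)$ is $k$‑piecewise testable if and only if $L(\A')$ is $(k+1)$‑piecewise testable. I would build $\A'$ from two disjoint copies of $\A$ with state set $Q\times\{0,1\}$: on letters of $\Sigma$ both copies act as $\A$; the letter $c$ leads from every state of the $0$‑copy to all states of $I\times\{1\}$ and loops in every state of the $1$‑copy; the initial states are $I\times\{0\}$ and the accepting ones $F\times\{0,1\}$. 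Let $\pi\colon(\Sigma')^*\to\Sigma^*$ erase~$c$. Using that $\A$ is complete (so the $c$‑jump is always defined), a short computation shows that $w\in L(\A')$ iff either $w$ contains no $c$ and $w\in L(\A)$, or $w$ contains a $c$ and the word formed by the letters of $\Sigma$ occurring after the first $c$ of $w$ belongs to $L(\A)$. The reduction is plainly polynomial, since $\A'$ has $2|Q|$ states.

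Next I would check that $\A'$ is a ptNFA. Completeness is clear, and $\A'$ is partially ordered because $c$ moves only forward, from the $0$‑copy into the $1$‑copy, or loops inside the $1$‑copy. For the UMS property: a state $(q,0)$ has self‑loop alphabet $\Sigma(q)\subseteq\Sigma$, so its component in $G(\A',\Sigma(q))$ stays in the $0$‑copy and is a copy of the component of $q$ in $\A$; a state $(q,1)$ has self‑loop alphabet $\Sigma(q)\cup\{c\}$, and the only edges that $c$ adds to $G(\A',\Sigma(q)\cup\{c\})$ run from $0$‑copy states into $I\times\{1\}$, so any state of $\A'$ that enters the component of $(q,1)$ through such an edge is reachable from an initial state, hence lies below the layer of $(q,1)$ and thus strictly below $(q,1)$. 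In both cases $(q,i)$ is the unique maximal state of its component, so $\A'$ is a ptNFA; in particular $L(\A')$ is piecewise testable by Theorem~\ref{thm10}.

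For correctness, assume first that $L(\A)$ is $k$‑piecewise testable and take $u\sim_{k+1}v$. If $c\notin\alp(u)$, then $c\notin\alp(v)$, both words lie in $\Sigma^*$, $u\sim_k v$, and membership in $L(\A')$ reduces to membership in $L(\A)$, which is decided by $\sim_k$. If $c\in\alp(u)$, write $u=u_0cu_1$ and $v=v_0cv_1$, splitting at the first $c$, so that $c\notin\alp(u_0v_0)$; Lemma~\ref{lemma2} yields $u_1\sim_k v_1$, hence $\pi(u_1)\sim_k\pi(v_1)$ because erasing $c$ does not change the $c$‑free subwords of length at most $k$, and then $\pi(u_1)\in L(\A)\iff\pi(v_1)\in L(\A)$ since $L(\A)$ is $k$‑piecewise testable; that is, $u\in L(\A')\iff v\in L(\A')$. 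By Fact~\ref{mainProperty}, $L(\A')$ is $(k+1)$‑piecewise testable. Conversely, assume $L(\A)$ is not $k$‑piecewise testable. By Fact~\ref{mainProperty} there are $u,v\in\Sigma^*$ with $u\sim_k v$, $u\in L(\A)$ and $v\notin L(\A)$; then $u\ne v$, so $\Gamma=\alp(u)=\alp(v)\ne\emptyset$. Pick $\delta\in\Gamma^*$ whose subwords of length at most $k+1$ are exactly all words over $\Gamma$ of length at most $k+1$ — for instance, $k+1$ concatenated copies of a fixed enumeration of $\Gamma$ — and set $u'=\delta cu$ and $v'=\delta cv$. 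The letters of $\Sigma$ after the first $c$ of $u'$ (resp.\ $v'$) form exactly $u$ (resp.\ $v$), so $u'\in L(\A')$ and $v'\notin L(\A')$. Moreover $u'\sim_{k+1}v'$: a subword of length at most $k+1$ with no $c$ already embeds into $\delta$, while one with a $c$ factorises as $w_1cw_2$ with $|w_1|\le k$ (so $w_1\preccurlyeq\delta$) and $|w_2|\le k$ (so $w_2\preccurlyeq u\iff w_2\preccurlyeq v$ because $u\sim_k v$). Hence $L(\A')$ is not $(k+1)$‑piecewise testable.

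The step I expect to be the main obstacle is the UMS verification for $\A'$: since all $c$‑transitions of the $0$‑copy point into the single bottom layer $I\times\{1\}$ of the $1$‑copy, one has to argue carefully that any $0$‑copy state reaching the $(\Sigma(q)\cup\{c\})$‑component of a state $(q,1)$ is already strictly below $(q,1)$ in the reachability order, and that the everywhere‑looping fresh letter $c$ does not merge components of the $1$‑copy in a way that spoils maximality. A minor point is the degenerate case $k=0$: there $\sim_0$ is trivial, Lemma~\ref{lemma2} is invoked with $\ell=1$, and $\delta$ is a single enumeration of $\Gamma$; the arguments above still apply, using only that a $0$‑piecewise testable language is $\emptyset$ or $\Sigma^*$.
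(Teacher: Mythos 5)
There is a genuine gap, and it sits exactly where you flagged it: the claim that $\A'$ satisfies the UMS property is false in general when $|I|\ge 2$. Because you send the fresh letter $c$ from \emph{every} $0$-copy state to \emph{all} of $I\times\{1\}$, these $c$-edges form a complete bipartite hub; in the \emph{undirected} graph $G(\A',\Sigma(q)\cup\{c\})$ one can walk from $(q,1)$ to an initial state $(p,1)$ in its $\Sigma(q)$-component, backwards along a $c$-edge to any $0$-copy state, and forwards along another $c$-edge to a different initial state $(p',1)$, thereby merging the $\Sigma(q)$-components of distinct initial states. Their maxima need not be comparable to $(q,1)$. Concretely, take $\Sigma=\{a,b\}$, $Q=\{p,p',q,r,t\}$, $I=\{p,p'\}$, with $p\xrightarrow{a}q$, $p'\xrightarrow{a}r$, $a$-self-loops at $q,r,t$, and $b$ sending every state to the top state $t$ (which has self-loops on $a,b$). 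This is a ptNFA. In your $\A'$, the component of $(q,1)$ in $G(\A',\{a,c\})$ contains all $0$-copy states, $(p,1)$, $(p',1)$, and $(r,1)$; both $(q,1)$ and $(r,1)$ are maximal in it (neither reaches the other), so UMS fails and $\A'$ is not a ptNFA. Your argument "reachable from an initial state, hence strictly below $(q,1)$" only covers the $0$-copy states; it does not address the $1$-copy states of \emph{other} initial components that get pulled in undirectedly. Since the lemma is used to transfer hardness for the problem whose instances are ptNFAs, producing a non-ptNFA breaks the reduction (piecewise testability of $L(\A')$ alone would not suffice, and in any case is not what you argue).

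The fix is essentially the paper's construction: do not duplicate $\A$; instead add, \emph{for each} initial state $i$, one private new state $i'$ with self-loops on all letters of $\Sigma$ and a single $c$-transition $i'\xrightarrow{c}i$, add $c$-self-loops to all original states, and make the new states initial. Then the $c$-edges form a matching rather than a hub, no components get merged, and UMS is preserved. With that automaton your correctness argument (split at the first $c$, apply Lemma~\ref{lemma2}, erase $c$ via the projection; and pad with a word $\delta$ containing all short subwords for the converse) is exactly the paper's. One further small slip in your converse direction: for $k=0$ the relation $u\sim_0 v$ does not give $\alp(u)=\alp(v)$, so $\Gamma=\alp(u)=\alp(v)$ is ill-defined; take $\Gamma=\alp(uv)$ (or all of $\Sigma$, as the paper does) and the embedding argument for $u'\sim_{k+1}v'$ goes through unchanged.
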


  Together, since the $k$-piecewise testability problem for NFAs is in PSPACE~\cite{dlt15}, we have the following result.
  \begin{theorem}\label{thmArbCoNP}
    For $k\ge 0$, the $k$-piecewise testability problem for ptNFAs is coNP-hard and in PSPACE.
  \end{theorem}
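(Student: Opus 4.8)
The plan is to assemble the two preceding lemmas with the known PSPACE upper bound for NFAs. For the lower bound I would argue by induction on $k$. The base case $k=0$ is exactly Lemma~\ref{0ptNFAhard}. For the inductive step, assume the $k$-piecewise testability problem for ptNFAs is coNP-hard; Lemma~\ref{lemKtoK+1} gives a polynomial-time many-one reduction from $k$-piecewise testability to $(k+1)$-piecewise testability, so composing it with the reduction witnessing coNP-hardness at level $k$ yields coNP-hardness at level $k+1$. Since $k$ is fixed, only a constant number of reductions are chained, so the overall reduction stays polynomial. For the upper bound, every ptNFA is in particular an NFA, so the $k$-piecewise testability problem for ptNFAs is a sub-problem of the $k$-piecewise testability problem for NFAs, which is in PSPACE by~\cite{dlt15}; hence the restriction to ptNFAs is in PSPACE as well.

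Two points have to be checked for the chaining to be legitimate, and these are where one must be slightly careful. First, the reduction of Lemma~\ref{lemKtoK+1} must map ptNFAs to ptNFAs, so that coNP-hardness is inherited inside the restricted class rather than only for general NFAs; this is why the lemma introduces a fresh letter to the alphabet, since adjoining a new symbol with suitably chosen transitions can be done while preserving partial order, completeness, and the UMS property. Second, each application of the reduction must increase the size of the automaton only polynomially, so that after $k$ compositions the instance is still of polynomial size; this is immediate from the construction, which adds a bounded amount of structure per step. Granting these, the composition of the base-case reduction with $k$ copies of the Lemma~\ref{lemKtoK+1} reduction is a polynomial-time reduction from a coNP-hard problem to the $k$-piecewise testability problem for ptNFAs, and combined with the PSPACE membership above this proves the theorem.

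The main obstacle — insofar as there is one, since the real work is carried out in Lemmas~\ref{0ptNFAhard} and~\ref{lemKtoK+1} — is exactly this bookkeeping around the iterated use of Lemma~\ref{lemKtoK+1}: verifying that the output of each step is again a ptNFA and that iterating a constant number of times keeps both running time and output size polynomial. I would note that this argument is genuinely tied to $k$ being a fixed constant: for the variant where $k$ is part of the input one cannot compose an unbounded number of size-increasing reductions and remain polynomial, so strengthening the hardness to that setting (if at all possible for small $k$) would require a single reduction parametrized by $k$, which I do not pursue here; for general NFAs the problem with $k$ in the input is in any case already known to be PSPACE-complete.
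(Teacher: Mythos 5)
Your proposal is correct and follows essentially the same route as the paper: coNP-hardness by taking Lemma~\ref{0ptNFAhard} as the base case and iterating the ptNFA-preserving reduction of Lemma~\ref{lemKtoK+1} a fixed number of times, and PSPACE membership by inclusion in the NFA problem from~\cite{dlt15}. Your care about the reduction staying within the class of ptNFAs and about polynomial blow-up under constantly many compositions matches what the paper's constructions guarantee.
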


  \subparagraph*{The case of a fixed alphabet.}
  The previous discussion is for the general case where the alphabet is arbitrary and considered as part of the input. In this subsection, we assume that the alphabet is fixed. In this case, it is shown in the arxiv versions v1--v4 of~\cite{Karandikar2015} that the length of the shortest representatives of the $\sim_k$-classes is bounded by the number $\left( \frac{k+2c-1}{c} \right)^c$, where $c$ is the cardinality of the alphabet. This gives us the following result for 0-piecewise testability for ptNFAs.
  
  \begin{lemma}\label{0-PTcoNPc}
    For a fixed alphabet $\Sigma$ with $c=|\Sigma|\ge 2$, the 0-piecewise testability problem for ptNFAs is coNP-complete. 
  \end{lemma}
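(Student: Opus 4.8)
The plan is to combine the coNP-hardness from Lemma~\ref{0ptNFAhard} with a matching coNP upper bound, the latter exploiting Theorem~\ref{thmMain} together with the bound, recalled just above, on the length of shortest representatives of $\sim_k$-classes over a fixed alphabet. Recall that the $0$-piecewise testable languages over $\Sigma$ are exactly $\emptyset$ and $\Sigma^*$, so the problem asks whether $L(\A)\in\{\emptyset,\Sigma^*\}$.

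For the upper bound I would show that the \emph{complement} problem is in NP, namely: given a ptNFA $\A=(Q,\Sigma,\cdot,I,F)$ over the fixed alphabet $\Sigma$, decide whether there exist words $w_1\in L(\A)$ and $w_2\notin L(\A)$. The witness $w_1$ is easy: if $L(\A)\neq\emptyset$, a shortest accepting word induces an accepting path whose states are pairwise distinct (a repeated state could be spliced out to give a shorter accepting word), hence $|w_1|<|Q|$. For $w_2$ I would use the ptNFA structure. Writing $k=\fun{depth}(\A)\le|Q|-1$, Theorem~\ref{thmMain} gives that $L(\A)$ is $k$-piecewise testable, so by Fact~\ref{mainProperty} it is a union of $\sim_k$-classes; if $L(\A)\neq\Sigma^*$, some $\sim_k$-class $C$ is disjoint from $L(\A)$, and a shortest element $w_2$ of $C$ is rejected by $\A$. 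By the bound recalled above, $|w_2|\le\left(\frac{k+2c-1}{c}\right)^c$, which, since $c=|\Sigma|$ is fixed and $k\le|Q|-1$, is $O(|Q|^c)$, that is, polynomial in the size of $\A$. A nondeterministic polynomial-time machine can therefore guess $w_1$ and $w_2$ of polynomial length and check $w_1\in L(\A)$ and $w_2\notin L(\A)$ by simulating $\A$ on each word (maintaining the reachable subset of $Q$ letter by letter, which takes polynomial time). Hence the complement is in NP, so the problem is in coNP.

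For the lower bound, Lemma~\ref{0ptNFAhard} already gives coNP-hardness for $c=2$. For $c>2$ I would pad: given a ptNFA $\A$ over $\{a,b\}$, add a self-loop under every letter of $\Sigma\setminus\{a,b\}$ at every state. The resulting automaton $\A'$ is still complete and partially ordered, and it still satisfies the UMS property, because the new self-loops create no new cycles and no new edges between distinct states, so the reachability order and all the relevant connected components are unchanged; thus $\A'$ is a ptNFA. Moreover $L(\A')=\{w\in\Sigma^*\mid h(w)\in L(\A)\}$, where $h$ erases the letters of $\Sigma\setminus\{a,b\}$, so $L(\A')=\emptyset$ iff $L(\A)=\emptyset$ and $L(\A')=\Sigma^*$ iff $L(\A)=\{a,b\}^*$; consequently $L(\A')$ is $0$-piecewise testable iff $L(\A)$ is. This yields coNP-hardness for every fixed $\Sigma$ with $|\Sigma|\ge2$, and together with the upper bound the lemma follows. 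The main obstacle is the polynomial bound on $|w_2|$: this is precisely where fixing the alphabet is indispensable, since the dependence of $\left(\frac{k+2c-1}{c}\right)^c$ on $c$ is exponential, while for fixed $c$ and $k\le|Q|-1$ it stays polynomial; the remaining steps (the bound on $|w_1|$, the NFA simulation, and the padding) are routine.
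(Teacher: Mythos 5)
Your proposal is correct and follows essentially the same route as the paper: coNP-hardness via Lemma~\ref{0ptNFAhard}, and membership in coNP by combining Theorem~\ref{thmMain} (the depth bounds $k$) with the $\left(\frac{k+2c-1}{c}\right)^c$ bound on shortest $\sim_k$-class representatives over a fixed alphabet, so a polynomial-length rejected witness can be guessed and verified. The only differences are cosmetic: you guess the accepting witness $w_1$ nondeterministically where the paper checks nonemptiness in NL, and you spell out a self-loop padding argument to lift the binary-alphabet hardness to any fixed $\Sigma$ with $|\Sigma|\ge 2$, a detail the paper leaves implicit.
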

  \begin{proof}
    The hardness follows from Lemma~\ref{0ptNFAhard}, since it is sufficient to use a binary alphabet.
    
    We now prove completeness. Let $\A$ be a ptNFA over $\Sigma$ of depth $d$ recognizing a nonempty language (this can be checked in NL). Then the language $L(\A)$ is $d$-piecewise testable by Theorem~\ref{thmMain}. This means that if $v\sim_d u$, then either both $u$ and $v$ are accepted or both are rejected by $\A$. Now, the language $L(\A)\neq\emptyset$ is not 0-piecewise testable if and only if $L(\A)$ is non-universal. Since $\Sigma$ is fixed, the shortest representative of any of the $\sim_d$-classes is of length less than $\left( \frac{d+2c-1}{c} \right)^c = O(d^c)$, which is polynomial in the depth of $\A$. Thus, if the language $L(\A)$ is not universal, then the nondeterministic algorithm can guess a shortest representative of a non-accepted $\sim_d$-class and verify the guess in polynomial time.
  \end{proof}

  We can now generalize this result to $k$-piecewise testability.
  \begin{theorem}\label{theorem27}
    Let $\Sigma$ be a fixed alphabet with $c=|\Sigma|\ge 2$, and let $k\ge 0$. Then the problem to decide whether the language of a ptNFA $\A$ over $\Sigma$ is $k$-piecewise testable is coNP-complete.
  \end{theorem}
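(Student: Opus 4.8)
The plan is to prove the two inclusions separately: that the problem lies in coNP by adapting the argument of Lemma~\ref{0-PTcoNPc} from $k=0$ to arbitrary $k$, and that it is coNP-hard by a direct reduction from the complement of SAT that, unlike the iterated reduction behind Lemma~\ref{lemKtoK+1}, stays within a binary alphabet.

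For membership, I would argue as follows. Let $\A$ be the input ptNFA over $\Sigma$ (being a ptNFA is verifiable in NL by Theorem~\ref{thmMainNL}) and put $d=\fun{depth}(\A)$, which is at most the number of states. By Theorem~\ref{thmMain}, $L(\A)$ is $d$-piecewise testable, hence by Fact~\ref{mainProperty} a union of $\sim_d$-classes. If $d\le k$ we accept, since then $L(\A)$ is already $k$-piecewise testable. Otherwise $d>k$, so $\sim_d$ refines $\sim_k$, and the key point is that a witness of non-$k$-piecewise-testability can be taken short: given $u\sim_k v$ with $|L(\A)\cap\{u,v\}|=1$, replace $u$ and $v$ by the shortest representatives $\bar u,\bar v$ of their $\sim_d$-classes; then still $\bar u\sim_k\bar v$, and since $L(\A)$ is a union of $\sim_d$-classes also $|L(\A)\cap\{\bar u,\bar v\}|=1$, while the fixed-alphabet length bound already invoked in Lemma~\ref{0-PTcoNPc} gives $|\bar u|,|\bar v|\le\left(\frac{d+2c-1}{c}\right)^c$, which is polynomial in $|\A|$ because $c$ is fixed. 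Hence the complement problem is in NP: compute $d$; reject if $d\le k$; otherwise guess $\bar u,\bar v$ of the stated length and verify $\fun{sub}_k(\bar u)=\fun{sub}_k(\bar v)$ (polynomial time, as $k$ and $c$ are fixed) together with $|L(\A)\cap\{\bar u,\bar v\}|=1$ (NFA membership is in P).

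For hardness it suffices to reduce the complement of SAT to $k$-piecewise testability for ptNFAs over $\{0,1\}$, since an arbitrary fixed $\Sigma$ with $|\Sigma|\ge2$ then follows by renaming these two letters inside $\Sigma$ and routing the remaining letters to a fresh accepting sink looping on all of $\Sigma$ (this keeps the automaton a ptNFA and, as $X\mapsto X\cap\{0,1\}^*$ is a Boolean-algebra homomorphism not increasing piece lengths, preserves the answer). Given a formula $\varphi$ with variables $x_1,\dots,x_n$ and clauses $C_1,\dots,C_m$, encode each satisfying assignment $a$ as the word $a_1\cdots a_n0^k$, let $S_\varphi$ be the set of these encodings, and build a polynomial-size ptNFA $\A_\varphi$ for the cofinite language $L_\varphi=\{0,1\}^*\setminus S_\varphi$ as a disjoint union (with several initial states, which is again a ptNFA since the components do not interact) of small partially ordered confluent DFAs: one accepting all words of length different from $n+k$, one for each clause accepting the length-$(n+k)$ words whose first $n$ letters falsify that clause, and one accepting the length-$(n+k)$ words whose last $k$ letters are not all $0$. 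If $\varphi$ is unsatisfiable then $S_\varphi=\emptyset$ and $L_\varphi=\{0,1\}^*$ is $k$-piecewise testable. If $\varphi$ is satisfiable, pick $s\in S_\varphi$; since $s$ ends with $0^k$, appending one more $0$ creates no new subword of length at most $k$, so $s\sim_k s0$, while $s\in S_\varphi$ and $s0\notin S_\varphi$, whence $L_\varphi$ is not $k$-piecewise testable by Fact~\ref{mainProperty}.

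The delicate points are, on the membership side, that the witness search can be confined to polynomially many short words — which rests on Theorem~\ref{thmMain} (to know $L(\A)$ is a union of $\sim_d$-classes) together with the fixed-alphabet bound on shortest $\sim_d$-representatives, and without the latter the argument would not even land in coNP — and, on the hardness side, the choice of the $0^k$ padding in the encoding, which is exactly what forces $s\sim_k s0$ and thereby pins the piecewise-testability threshold at $k+1$ in the satisfiable case; checking that the disjoint union of the gadget DFAs is genuinely a ptNFA (partially ordered, complete, and UMS, the last because the components share no transitions) is then routine.
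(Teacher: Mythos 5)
Your membership argument is essentially the paper's: both shrink the witnesses via Theorem~\ref{thmMain} (the language is $d$-piecewise testable for $d$ the depth, hence a union of $\sim_d$-classes by Fact~\ref{mainProperty}) together with the fixed-alphabet bound $\left(\frac{d+2c-1}{c}\right)^c$ on shortest $\sim_d$-representatives, and both verify $\sim_k$-equivalence by brute force over the constantly many pieces of length at most $k$; your explicit remark that one may replace $u,v$ by the shortest representatives of their $\sim_d$-classes while preserving both $\sim_k$-equivalence and the acceptance discrepancy is exactly what the paper's terser wording relies on. The hardness half is genuinely different: the paper reduces from its own Lemma~\ref{0ptNFAhard} via a padding gadget in the spirit of Lemma~\ref{lemKtoK+1} that introduces a fresh letter, which is why its Corollary~\ref{cor28} only records hardness for $|\Sigma|=3$ when $k\ge 1$; you instead reduce directly from UNSAT over $\{0,1\}$, encoding satisfying assignments as $a_1\cdots a_n0^k$ and exploiting that appending a further $0$ gives $s\sim_k s0$ (any piece of length at most $k$ of $s0$ splits into a piece of $a_1\cdots a_n$ followed by at most $k$ zeros, so it embeds into $s$), with exactly one of $s,s0$ in the complement language. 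Your automaton--a disjoint union of chain-like complete DFAs whose only self-looping states are their sinks--is indeed a ptNFA, and the extension to a larger fixed alphabet by routing the extra letters to an accepting sink also keeps the UMS property, since maximality there is relative to the connected component of $G(\A,\Sigma(p))$ (compare the completed $\A_3$ of Figure~\ref{fig6}, whose state $3$ lies below the completion sink globally). So your route yields coNP-hardness already for a binary alphabet and every $k\ge 0$, strengthening Corollary~\ref{cor28} and covering the $c=2$ case of the theorem directly, at the price of redoing a SAT encoding; the paper's route buys a reusable, formula-free reduction from $0$- to $k$-piecewise testability, but consumes an extra letter.
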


  Note that this is in contrast with the analogous result for DFAs, cf. Theorem~\ref{thmconp}, where the problem is in P for DFAs over a fixed alphabet. In addition, the hardness part of the previous proof gives us the following corollary, which does not follow from the hardness proof of~\cite{KKP}, since the proof there requires a growing alphabet.
  \begin{corollary}\label{cor28}
    The $k$-piecewise testability problem for ptNFAs over an alphabet $\Sigma$ is coNP-hard for $k\ge 0$ even if $|\Sigma|=3$.
  \end{corollary}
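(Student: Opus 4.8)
The plan is to give a polynomial-time reduction from the $0$-piecewise testability problem for ptNFAs over the binary alphabet $\{0,1\}$, which is coNP-hard by Lemma~\ref{0ptNFAhard}, to the $k$-piecewise testability problem for ptNFAs over the ternary alphabet $\Sigma=\{0,1,\#\}$. The ptNFAs produced by the reduction behind Lemma~\ref{0ptNFAhard} recognize nonempty languages, so there $0$-piecewise testability coincides with universality; we may therefore assume the input ptNFA $\A=(Q,\{0,1\},\cdot,I,F)$ has $L(\A)\ne\emptyset$ and that a yes-instance is one with $L(\A)=\{0,1\}^*$ (if one insists on reducing from $0$-piecewise testability in full generality, one tests $L(\A)=\emptyset$ in NL beforehand and outputs a fixed trivial instance in that case). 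From $\A$ I would build a ptNFA $\B$ over $\Sigma$ with $L(\B)$ being $k$-piecewise testable if and only if $L(\A)=\{0,1\}^*$. Here $\#$ plays the role of the single fresh letter in the $k$-to-$(k+1)$ reduction of Lemma~\ref{lemKtoK+1}, but used $k$ times simultaneously: $\B$ counts occurrences of $\#$ up to $k$ and, in parallel, runs $\A$ on the $\#$-free part of the input.

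Concretely, $\B$ has states $\{(i,q):0\le i\le k,\ q\in Q\}\cup\{\top\}$, initial states $\{(0,q):q\in I\}$, transitions $(i,q)\cdot a=\{(i,q'):q'\in q\cdot a\}$ for $a\in\{0,1\}$, $(i,q)\cdot\#=(i+1,q)$ for $i<k$, $(k,q)\cdot\#=\top$, and $\top\cdot c=\top$ for every $c\in\Sigma$; the states $(i,q)$ with $i<k$ and the state $\top$ are accepting, and $(k,q)$ is accepting if and only if $q\in F$. Writing $\rho(w)$ for the word obtained from $w$ by erasing all occurrences of $\#$, inspecting the runs of $\B$ gives
\[
  L(\B)=\{w\in\Sigma^*:|w|_\#\ne k\}\cup\{w\in\Sigma^*:|w|_\#=k,\ \rho(w)\in L(\A)\}\,.
\]

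I would then check that $\B$ is a ptNFA. It is complete because $\A$ is, and partially ordered because the $\#$-counter never decreases, so any cycle of $\B$ lies inside a single layer $\{i\}\times Q$ and hence, by partial order of $\A$, is a self-loop (and $\top$ only self-loops). For the UMS property, for a state $(i,q)$ we have $\Sigma((i,q))=\Sigma(q)\subseteq\{0,1\}$, so $G(\B,\Sigma((i,q)))$ has no $\#$-edges and its connected component through $(i,q)$ is $\{i\}\times C$, where $C$ is the component of $q$ in $G(\A,\Sigma(q))$; the unique maximal state of $C$, which is $q$ by the UMS property of $\A$, lifts to the unique maximal state $(i,q)$. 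For $\top$ we have $\Sigma(\top)=\Sigma$ and $G(\B,\Sigma)=G(\B)$, which is weakly connected because every state reaches $\top$ by reading enough $\#$'s, and $\top$ is the only maximal state of $G(\B)$. Hence $\B$ is a ptNFA and $L(\B)$ is piecewise testable by Theorem~\ref{thm10}.

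Finally, if $L(\A)=\{0,1\}^*$ then $L(\B)=\Sigma^*$, which is $0$- and hence $k$-piecewise testable. Conversely, if $z\in\{0,1\}^*\setminus L(\A)$, then $\#^k z\notin L(\B)$ while $\#^{k+1}z\in L(\B)$, yet $\fun{sub}_k(\#^k z)=\fun{sub}_k(\#^{k+1}z)$, since every subsequence of length at most $k$ of either word has the form $\#^j s$ with $s\preccurlyeq z$ and $j+|s|\le k$, forcing $j\le k$ in both cases; thus $\#^k z\sim_k\#^{k+1}z$ and $L(\B)$ is not $k$-piecewise testable by Fact~\ref{mainProperty}. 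As $\B$ is computable from $\A$ in polynomial time, this establishes coNP-hardness of $k$-piecewise testability for ptNFAs over $\Sigma=\{0,1,\#\}$. The step I expect to need the most care is verifying that $\B$ really is a ptNFA — that completing the automaton with the sink $\top$ and stacking the layered $\#$-counter preserve the UMS property — together with the routine observation that the instances from Lemma~\ref{0ptNFAhard} have nonempty languages, so that their $0$-piecewise testability is the same as universality; everything else is bookkeeping.
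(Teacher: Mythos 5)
Your proposal is correct, and it follows the same high-level plan as the paper: the paper proves Corollary~\ref{cor28} by combining the binary-alphabet hardness of Lemma~\ref{0ptNFAhard} with the hardness gadget from the proof of Theorem~\ref{theorem27}, which likewise adds a single fresh letter and forces $k$ occurrences of it before the simulation of the original ptNFA becomes relevant. The differences are in the gadget and in the correctness argument. The paper's $\M_k$ prepends, to each initial state, a chain of $k$ new states with $\Sigma_0$-self-loops and adds self-loops under the fresh letter throughout $\M_0$, so words with fewer than $k$ fresh letters are rejected and the projected suffix after the $k$-th occurrence is fed to $\M_0$; correctness then needs the congruence machinery (Lemma~\ref{lemma2} applied $k$ times for the positive direction, and a pair $(wa)^k x \sim_{k+1} (wa)^k y$ built from a subword-universal $w$ for the negative direction). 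You instead build a layered product that counts $\#$'s exactly, accepts every word whose $\#$-count differs from $k$, and on count exactly $k$ runs $\A$ on the full $\#$-erasure, with an absorbing sink $\top$; because the hard instances of Lemma~\ref{0ptNFAhard} are nonempty, you may reduce from universality, which makes the positive direction trivial ($L(\A)$ universal forces $L(\B)=\Sigma^*$) and lets the negative direction use the direct pair $\#^k z \sim_k \#^{k+1} z$, bypassing Lemma~\ref{lemma2} entirely. The price is that you must verify the ptNFA properties of the product by hand (layer-preservation of $\{0,1\}$-edges for partial order and UMS, plus the separate UMS check at $\top$), which you do correctly; the paper's route gets this almost for free from the already-verified construction in Theorem~\ref{theorem27}, while yours is self-contained and arguably more transparent about why exactly one extra letter suffices for all $k$ simultaneously. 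Your handling of the emptiness caveat (NL pre-test, or composing with the reduction of Lemma~\ref{0ptNFAhard}) is a legitimate way to justify reducing from universality.
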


  \subparagraph*{The case of a unary alphabet.}
  Since Lemma~\ref{0-PTcoNPc} (resp. Lemma~\ref{0ptNFAhard}) requires at least two letters in the alphabet to prove coNP-hardness, it remains to consider the case of a unary alphabet. We now show that the problem is simpler, unless P=coNP.
  Namely, a similar argument as in the proof of Lemma~\ref{0-PTcoNPc}, improved by the fact that the length of the shortest representatives of $\sim_k$-classes is bounded by the depth of the ptNFA, gives the following result. 
  \begin{theorem}\label{thmP}
    The $k$-piecewise testability problem for ptNFAs over a unary alphabet is decidable in polynomial time. The result holds even if $k$ is given as part of the input.
  \end{theorem}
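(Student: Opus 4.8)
The plan is to exploit the very rigid structure of $\sim_k$ over a unary alphabet together with the bound of Theorem~\ref{thmMain}. Write $\Sigma=\{a\}$. Since $a^n\preccurlyeq a^m$ iff $n\le m$, we have $\fun{sub}_k(a^n)=\{a^0,a^1,\ldots,a^{\min(n,k)}\}$, so $a^n\sim_k a^m$ iff $\min(n,k)=\min(m,k)$. Hence the classes of $\sim_k$ are the singletons $\{a^0\},\ldots,\{a^{k-1}\}$ together with the single infinite class $\{a^n\mid n\ge k\}$. By Fact~\ref{mainProperty}, a unary language $L$ is therefore $k$-piecewise testable if and only if $a^n\in L$ holds either for all $n\ge k$ or for no $n\ge k$. (For the ``if'' direction one uses that $a^n\in L\iff a^m\in L$ for all $n,m\ge k$ in fact forces $a^n\sim_L a^m$ for all such $n,m$, since $a^{n+r}\in L\iff a^{m+r}\in L$ for every $r\ge 0$ is just another instance of the same statement.)

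Next I would invoke Theorem~\ref{thmMain}. Let $\A=(Q,\{a\},\cdot,I,F)$ be a ptNFA and put $D=|Q|-1$, which bounds $\fun{depth}(\A)$ since a simple path repeats no state. By Theorem~\ref{thmMain}, $L(\A)$ is $\fun{depth}(\A)$-piecewise testable, hence also $D$-piecewise testable, because $\ell$-piecewise testability implies $\ell'$-piecewise testability for every $\ell'\ge\ell$ (as $\sim_{\ell'}\subseteq\sim_\ell$). Combined with the first paragraph, this says that the membership of $a^n$ in $L(\A)$ is constant for all $n\ge D$.

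The algorithm then proceeds as follows. First compare the input $k$ with $D$. If $k\ge D$, answer ``yes'': $L(\A)$ is $D$-piecewise testable, hence $k$-piecewise testable. If $k<D$ --- which keeps $k$ polynomially bounded even when it is given in binary --- iteratively compute $S_0=I$ and $S_{j+1}=S_j\cdot a$ for $j=0,\ldots,D-1$; each step is one application of the transition function, so this is polynomial, and $a^j\in L(\A)$ iff $S_j\cap F\neq\emptyset$. Answer ``yes'' if and only if $a^k,a^{k+1},\ldots,a^D$ all have the same membership status. Correctness follows from the first two paragraphs: $L(\A)$ is $k$-piecewise testable iff $a^n$ has constant membership for $n\ge k$; splitting that range into $k\le n\le D$ and $n>D$ (the latter agreeing with $a^D$ by the second paragraph) shows this is equivalent to $a^k,\ldots,a^D$ all agreeing. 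The whole procedure runs in polynomial time.

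I do not expect a real obstacle here. The only points needing a little care are the treatment of a binary-encoded $k$, which is handled by the trivial case $k\ge D$, and pinning down that $k$-piecewise testability of a unary language is exactly the eventual constancy of membership from position $k$ onward; both are settled above.
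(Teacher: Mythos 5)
Your proposal is correct and follows essentially the same route as the paper: invoke Theorem~\ref{thmMain} to get that $L(\A)$ is piecewise testable with parameter bounded by the depth (you use the harmless overestimate $|Q|-1$), observe that over a unary alphabet $k$-piecewise testability is exactly eventual constancy of acceptance from $a^k$ on, and then check in polynomial time whether $a^k,\ldots,a^{d}$ all have the same acceptance status, with the case $k\ge d$ answered trivially. The only differences are cosmetic (explicit handling of binary-encoded $k$ and the $|Q|-1$ bound in place of the depth), so no further comment is needed.
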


  In contrast to this, we now show that the problem is coNP-complete for general NFAs.
  \begin{theorem}\label{thm30}
    Both piecewise testability and $k$-piecewise testability problems for NFAs over a unary alphabet are coNP-complete.
  \end{theorem}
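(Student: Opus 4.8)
The plan is to prove membership in coNP and coNP-hardness simultaneously for both problems, exploiting the simple shape of piecewise testable and $k$-piecewise testable languages over a one-letter alphabet $\Sigma=\{a\}$. Write $L_{a^j}=\{a^m\mid m\ge j\}$. The boolean combinations of $L_{a^0},L_{a^1},\dots$ are exactly the finite and cofinite subsets of $a^*$, so a unary language is piecewise testable if and only if it is finite or cofinite (this also follows from Fact~\ref{thm:characterization}). Since the classes of $\sim_k$ over $\{a\}$ are $\{a^0\},\dots,\{a^{k-1}\}$ and $\{a^m\mid m\ge k\}$, Fact~\ref{mainProperty} gives that a unary language $L$ is $k$-piecewise testable if and only if $L\cap\{a^m\mid m\ge k\}$ is empty or equal to $\{a^m\mid m\ge k\}$. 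In either case, non-membership means: there are an accepted word and a rejected word, both of length $\ge k$ (respectively, both of unbounded length).

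\textbf{Upper bound.} The difficulty, and what I expect to be the main obstacle, is that the eventual period of a unary NFA with $s$ states can be exponential in $s$, so a shortest witnessing word need not be short. I would convert the input NFA $\A$ (in polynomial time) to Chrobak normal form: an initial path of length $h=O(s^2)$ followed by disjoint cycles of polynomially bounded lengths $\ell_1,\dots,\ell_r$ with marked accepting positions, so that the eventual behaviour of $L(\A)$ is governed by the set of residues modulo $P=\operatorname{lcm}(\ell_1,\dots,\ell_r)$ that are ``covered'' by a marked position of some cycle; here $P$ has only polynomially many bits. Then $L(\A)$ is infinite iff some residue is covered, and $L(\A)$ is co-infinite iff some residue is uncovered, and either fact is certified by a single residue written in binary together with the cycle and position witnessing it, verifiable by modular arithmetic in polynomial time. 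Hence ``$L(\A)$ is neither finite nor cofinite'' is in NP, so piecewise testability for unary NFAs is in coNP; the same guess-and-check, restricted to lengths $\ge k$ and treating the polynomially many lengths below $\max(k,h)$ directly, puts non-$k$-piecewise-testability in NP as well.

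\textbf{Lower bound.} I would reduce from the universality problem for unary NFAs, which is well known to be coNP-complete, after noting that its hardness already holds for \emph{purely periodic} instances: the standard reduction (e.g.\ from $3$-SAT, encoding an assignment through the residues of $n$ modulo the first few primes and letting one small cycle per clause accept exactly those $n$ that falsify that clause) yields a unary NFA $\A$ that is a disjoint union of simple cycles read from a common initial state, hence periodic from threshold $0$. Given such an $\A$ with period $P$, map it to a fixed non-piecewise-testable NFA, say one for $(aa)^*$, if $L(\A)=\emptyset$ (detectable in NL), and to $\A$ itself otherwise. For a nonempty purely periodic $\A$, $L(\A)$ is infinite, so $L(\A)\cap\{a^m\mid m\ge k\}\neq\emptyset$; and since $\{a^m\mid m\ge k\}$ meets every residue class modulo $P$, we get $L(\A)\cap\{a^m\mid m\ge k\}=\{a^m\mid m\ge k\}$ iff $L(\A)=a^*$, and likewise $L(\A)$ is cofinite iff $L(\A)=a^*$. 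Thus $L(\A)$ is piecewise testable $\iff$ $L(\A)$ is $k$-piecewise testable $\iff$ $L(\A)=a^*$, which gives coNP-hardness of both problems and, with the upper bound, coNP-completeness. (This contrasts with Theorem~\ref{thmP}: a ptNFA is partially ordered, so its reachable maximal states all carry self-loops and the eventual period is forced to be trivial.)
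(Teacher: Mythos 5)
Your proposal is correct, and your lower bound is essentially the paper's: both arguments rest on the Stockmeyer--Meyer coNP-hardness of universality for unary NFAs and on the fact that the hard instances are purely periodic, so a nonempty non-universal instance is simultaneously infinite and co-infinite, hence neither piecewise testable nor $k$-piecewise testable, while a universal instance is $0$-piecewise testable; your preliminary emptiness test with the fixed $(aa)^*$ target is a harmless robustness step the paper does not need, and your one-line sketch of the 3-SAT reduction omits the component accepting the numbers that encode no assignment, which is fine only because you invoke the known result rather than reprove it. Where you genuinely diverge is the coNP upper bound. The paper works directly on the subset construction: non-piecewise-testability is certified by lengths $\ell_1<\ell_2<\ell_3\le 2^n$ with $I\cdot a^{\ell_1}=I\cdot a^{\ell_3}\neq I\cdot a^{\ell_2}$ and differing acceptance, and non-$k$-piecewise-testability by some $k<\ell\le 2^n$ with $a^\ell$ and $a^k$ of different status, the lengths being guessed in binary and the sets $I\cdot a^\ell$ computed by repeated squaring of the transition matrix. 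You instead pass to Chrobak normal form and certify by residues modulo the least common multiple of the cycle lengths, using the clean characterization that a unary language is piecewise testable iff it is finite or cofinite and $k$-piecewise testable iff it is constant beyond length $k$ (both correct). This buys a more transparent certificate, but it silently relies on the nontrivial fact that Chrobak normal form is computable in polynomial time (Chrobak's original construction had a flaw, later repaired, and a clean polynomial-time construction is due to Gawrychowski), so you should cite that; the paper's matrix-squaring certificate avoids any normal form. Also note that lengths below $k$ are simply irrelevant to $k$-piecewise testability, and only the polynomially many lengths below the tail length $h$ ever need direct simulation, so your phrase about ``lengths below $\max(k,h)$'' should be read that way; with that reading your membership argument is sound even when $k$ is given in binary.
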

  
  The complexity of $k$-piecewise testability for considered automata is summarized in Table~\ref{table1}. Note that the precise complexity of $k$-piecewise testability for ptNFAs is not yet known in the case the alphabet is consider as part of the input even for $k=0$.
  \begin{table}
    \begin{center}
      \begin{tabular}{r|c|c|c|c|}
        & Unary alphabet & Fixed alphabet & \multicolumn{2}{c|}{Arbitrary alphabet}\\
        &                &                & $k\le 3$ & $k\ge 4$\\
        \hline
        DFA   & P
              & P \cite{KKP}
              & NL-complete \cite{dlt15}
              & coNP-complete \cite{KKP}\\
        ptNFA & P 
              & coNP-complete  
              & \multicolumn{2}{c|}{PSPACE \& coNP-hard}\\ 
          NFA & coNP-complete
              & PSPACE-complete \cite{dlt15}
              & \multicolumn{2}{c|}{PSPACE-complete \cite{dlt15}}
      \end{tabular}
    \end{center}
    \caption{Complexity of $k$-piecewise testability -- an overview}
    \label{table1}
  \end{table}

\section{Conclusion}\label{conclusion}
  In this paper, we have defined a class of nondeterministic finite automata (ptNFAs) that characterize piecewise testable languages. We have shown that their depth (exponentially) improves the known upper bound on $k$-piecewise testability shown in~\cite{KlimaP13} for DFAs. We have discussed several related questions, mainly in comparison with DFAs and NFAs, including the complexity of $k$-piecewise testability for ptNFAs. It can be noticed that the results for ptNFAs generalize the results for DFAs in the sense that the results for DFAs are consequences of the results presented here. This, however, does not hold for the complexity results.

  \subparagraph*{The length of a shortest proof over an arbitrarily alphabet.}
  It is an open question what is the complexity of $k$-piecewise testability if the alphabet is consider as part of the input. Notice that the results of~\cite{Karandikar2015} give a lower bound on the maximal length of the shortest representative of a class. Namely, let $L_k(n)$ denote the maximal length of the shortest representatives of the $\sim_k$-classes over an $n$-element alphabet. Then $(L_n(k) +1) \log n > (\frac{k}{n})^{n-1} \log (\frac{k}{n})$. Setting $k=n^2$ then gives that $L_n(n^2) > n^{n-1}$. Thus, the representative can be of exponential length with respect to the size of the alphabet. However, how many states does a ptNFA require to exclude such a representative while accepting every shorter word?

\subparagraph*{Acknowledgements.}
  We thank the authors of~\cite{HofmanM15} and~\cite{KKP} for providing us with full versions of their papers.

\bibliographystyle{plain}
\bibliography{mybib}

\section{Proofs of Section~\ref{secDepth}}

  \begin{replemma}{lemmaAi}
    Let $\A$ be a ptNFA with $I$ denoting the set of initial states. Then the language $L(\A)=\bigcup_{i\in I} L(\A_i)$, where every sub-automaton $\A_i$ is a ptNFA.
  \end{replemma}
  \begin{proof}
    Indeed, $L(\A)=\bigcup_{i\in I} L(\A_i)$ holds. It remains to show that every $\A_i$ is partially order, complete, and satisfies the UMS property. However, $\A_i$ is obtained from $\A$ by removing the states not reachable from $i$ and the corresponding transitions. Since $\A$ is complete and partially ordered, so is $\A_i$. If the UMS property was not satisfied in $\A_i$, it would not be satisfied in $\A$ either, hence $\A_i$ satisfies the UMS property.
  \end{proof}

\section{Proofs of Section~\ref{sec5}}

  \begin{reptheorem}{thmApp}
    For every $n\ge 0$, there exists a $(2n+1)$-state ptNFA $\B$ such that the depth of both the minimal DFA for $L(\B)$ and the minimal DFA for $L(\B)^R$ are exponential with respect to $n$.
  \end{reptheorem}
  \begin{proof}
    The idea of the proof is to make use of the automaton $\A_i$ constructed in Section~\ref{demo} to build a ptNFA $\B_i$ such that $L(\B_i) = L(\A_i) \cdot L(\A_i)^R$. Then $L(\B_i) = L(\B_i)^R$ and we show that the minimal DFA recognizing the language $L(\B_i)$ requires an exponential number of states compared to $\B_i$.
    Thus, for every $i\ge 0$, we define the NFA
    \[
      \B_{i}=(\{-i,\ldots,-1,0,1,\ldots,i\},\{a_0,a_1,\ldots,a_i\},\cdot,I_i,-I_i)
    \]
    with $I_i = \{0,1,\ldots,i\}$ and the transition function $\cdot$ defined so that $j \cdot a_\ell = j$ if $i \ge |j| > \ell \ge 0$, $\ell \cdot a_\ell = \{0,1,\ldots,\ell-1\}$, and $-j \cdot a_\ell = -\ell$ if $0\le j < \ell \le i$. Automaton $\B_2$ is depicted in Figure~\ref{fig7}. 
    
    Notice that $L(\B_{i-1})\subseteq L(\B_{i})$ and that $\B_i$ has $2i+1$ states. The reader can see that $L(\B_i) = L(\B_i)^R$. Moreover, making the NFA $\B_i$ complete (the dotted lines in Figure~\ref{fig7}), results in a ptNFA. Therefore, the language $L(\B_i)$ is piecewise testable by Theorem~\ref{thm10}.

    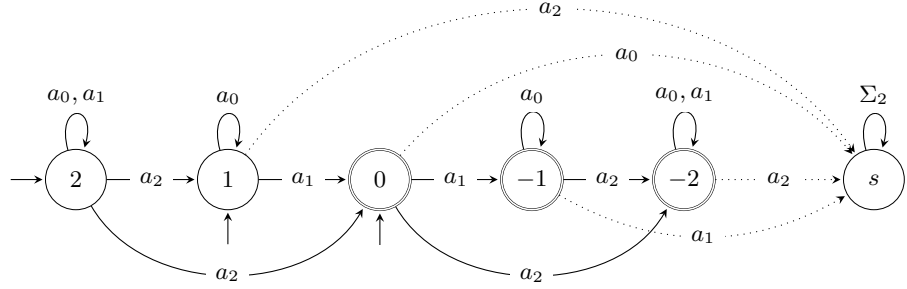
\begin{figure}
      \centering
      \begin{tikzpicture}[baseline,->,>=stealth,shorten >=1pt,node distance=2cm,
        state/.style={circle,minimum size=8mm,very thin,draw=black,initial text=},
        every node/.style={fill=white,font=\small},
        bigloop/.style={shift={(0,0.01)},text width=.8cm,align=center}]
        \node[state,initial below,accepting]    (1) {$0$};
        \node[state,initial below]              (4) [left of=1]     {$1$};
        \node[state,initial]                    (5) [left of=4]     {$2$};
        \node[state,accepting]                  (4') [right of=1]   {$-1$};
        \node[state,accepting]                  (5') [right of=4']  {$-2$};
        \node[state]                            (s) [right of=5',node distance=2.5cm]  {$s$};
        \path
          (4) edge node {$a_1$} (1)
          (4) edge[loop above] node[bigloop] {$a_0$} (4)
          (5) edge[loop above] node[bigloop] {$a_0,a_1$} (5)
          (5) edge[bend right=60] node {$a_2$} (1)
          (5) edge node {$a_2$} (4)
          (1) edge node {$a_1$} (4')
          (4') edge[loop above] node[bigloop] {$a_0$} (4')
          (5') edge[loop above] node[bigloop] {$a_0,a_1$} (5')
          (1) edge[bend right=60] node {$a_2$} (5')
          (4') edge node {$a_2$} (5')
          (4) edge[dotted,bend left=55] node {$a_2$} (s)
          (1) edge[dotted,bend left=50] node {$a_0$} (s)
          (4') edge[dotted,bend right] node {$a_1$} (s)
          (5') edge[dotted] node {$a_2$} (s)
          (s) edge[loop above] node[bigloop] {$\Sigma_2$} (s)
          ;
      \end{tikzpicture}
      \caption{Automaton $\B_2$ (without dotted transitions) and its completion (with dotted transitions)}
      \label{fig7}
    \end{figure}

    We now define a word $w_i$ inductively by $w_0=a_0$ and $w_{\ell} = w_{\ell-1} a_{\ell} w_{\ell-1}$, for $0 < \ell \le i$. Then $|w_i| = 2^{i+1}-1$ and we show that every prefix of $w_i$ of even length belongs to $L(\B_i)$ and every prefix of odd length does not. 
    
    Indeed, $\eps$ belongs to $L(\B_0)\subseteq L(\B_i)$. Let $v$ be a prefix of $w_i$ of even length. If $|v| < 2^i-1$, then $v$ is a prefix of $w_{i-1}$ and $v\in L(\B_{i-1})\subseteq L(\B_i)$ by the induction hypothesis. If $|v| > 2^i-1$, then $v = w_{i-1} a_i v'$, where $v'$ is a prefix of $w_{i-1}$ of even length. The definition of $\B_i$ and the induction hypothesis then imply that there is a path $i\xrightarrow{w_{i-1}} i\xrightarrow{a_i} (i-1) \xrightarrow{v'} 0$. Thus, $v$ belongs to $L(\B_i)$.
    
    We now show that any prefix $w$ of $w_i$ of odd length does not belong to $L(\B_i)$. Since $w$ begins and ends with $a_0$ and there is neither an $a_0$-transition to nor from state 0, it cannot be accepted either by or from state $0$. Therefore, if $w$ is accepted by $\B_i$, there must be an accepting computation starting from an initial state $q_0 \in \{1,\ldots,i\}$ and ending in an accepting state $q_f \in \{-1,\ldots,-i\}$. It means that $w$ can be written as $w = u a_\ell a_j v$, where $q_0 \xrightarrow{ua_\ell} 0 \xrightarrow{a_j v} q_f$. By the construction, both $\ell$ and $j$ are different from $0$, which is a contradiction with the structure of $w_{i}$, since $a_0$ is on every odd position.
    
    These properties imply that the prefixes of $w_i$ alternate between accepting and non-accepting states of the minimal DFA for $L(\B_i)$. Since the language $L(\B_i)$ is piecewise testable, the minimal DFA does not have any non-trivial cycles. Thus, the word $w_i$ forms a simple path in the minimal DFA recognizing the language $L(\B_i)$, which shows that the depth of the minimal DFA is of length at least $2^{i+1}-1$.
  \end{proof}

  \begin{replemma}{lemma17}
    For every $i\ge 0$, the language $L(\B_i)$ is not $2i$-piecewise testable. 
  \end{replemma}
  \begin{proof}
    Let $w_i = w_{i-1} a_i w_{i-1}$ be the word as defined in the proof of Theorem~\ref{thmMain}, and let $w_i'$ denote its prefix without the last letter, that is, $w_i = w_i' a_0$. 
    We show that $w_i'a_0(w_i')^R \sim_{2i} w_i'(w_i')^R$. Combining this with the observation that $w_i'a_0(w_i')^R$ does not belong to $L(\B_i)$ and $w_i'(w_i')^R$ belongs to $L(\B_i)$ then implies that $L(\B_i)$ is not $2i$-piecewise testable.

    Indeed, $w_i'(w_i')^R \preccurlyeq w_i'a_0(w_i')^R$, therefore we need to show that if $w \in \fun{sub}_{2i}(w_i'a_0(w_i')^R)$, then $w\in\fun{sub}_{2i}(w_i'(w_i')^R)$. If $w$ can be embedded into $w_i'a_0(w_i')^R$ without mapping $a_0$ of $w$ to the $a_0$ between $w_i'$ and $(w_i')^R$, then the claim holds. Thus, assume that $w = u a_0 v$ is such that the $a_0$ must be mapped to the $a_0$ between $w_i'$ and $(w_i')^R$. Thus, $u$ must be embedded into $w_i'$. We show by induction on $i$ that the length of $u$ must be at least $i$. It obviously holds for $i=0$. Assume that the claim holds for $i-1$ and consider $w_{i}' a_0 = w_{i-1} a_{i} w_{i-1}' a_0$. Since the $a_0$ of $w$ must be mapped to the last letter of $w_{i}' a_0$ and $\alp(w_{i-1}a_{i}) = \{a_0,a_1,\ldots,a_{i}\}$, there must be a nonempty prefix $u_1$ of $u$, i.e., $u=u_1u'$, such that $u_1$ is embedded into $w_{i-1} a_{i}$ and it forces the first letter of $u'$ to be embedded to $w_{i-1}' a_0$ in $w_{i}' a_0$. We now have that $u' a_0$ is embedded into $w_{i-1}' a_0$ such that $a_0$ must be mapped to the last letter of $w_{i-1}' a_0$. By the induction hypothesis, the length of $u'$ is at least $i-1$. Since $u_1$ is nonempty, we obtain that the length of $u=u_1u'$ is at least $i$. 
    
    Since the word $w_i' a_0$ is a palindrome, the same argument applies to $v$. Together, we have that $|w| = |u|+1+|v| \ge 2i+1$, which is a contradiction with the assumption that $|w|\le 2i$.
  \end{proof}

\section{Proofs of Section~\ref{sec4}}

  \begin{replemma}{thm1ptNFAs}[1-piecewise testability ptNFAs]
    Let $\A=(Q,\Sigma,\cdot,i,F)$ be a complete NFA. If 
    (i) for every $p\in Q$ and $a\in\Sigma$, $paa=pa$ and
    (ii) for every $p\in Q$ and $a,b\in\Sigma$, $pab=pba$,
    then the language $L(\A)$ is 1-piecewise testable.
  \end{replemma}
  \begin{proof}
    Consider the minimal DFA $\D$ constructed from $\A$ by the standard subset construction and minimization. We show that $\D$ satisfies the properties of Theorem~\ref{thm1pt}, which then implies the claim. Because every state of $\D$ is represented by a nonempty subset of states of $\A$, let $X\subseteq Q$ be a state of $\D$. Then, we have that $Xaa = \bigcup_{p\in X} paa = \bigcup_{p\in X} pa = Xa$ and, similarly, that $Xab = \bigcup_{p\in X} pab = \bigcup_{p\in X} pba = Xba$. Theorem~\ref{thm1pt} then completes the proof.
  \end{proof}

  \begin{replemma}{thm2ptNFAs}[2-piecewise testability ptNFAs]
    Let $\A=(Q,\Sigma,\cdot,i,F)$ be a ptNFA. If for every $a\in\Sigma$ and every state $s$ such that $i w = s$ for some $w\in\Sigma^*$ with $|w|_a\ge 1$, $s ba = s aba$ for every $b\in\Sigma\cup\{\eps\}$, then the language $L(\A)$ is 2-piecewise testable.
  \end{replemma}
  \begin{proof}
    Consider the minimal DFA $\D$ obtain from $\A$ by the standard subset construction and minimization. Since any ptNFA recognizes a piecewise testable language, see~Theorem~\ref{thm10}, $\D$ is confluent and partially ordered. We now show that it satisfies the properties of Theorem~\ref{thm2ptNL}. 
    
    Again, the states of $\D$ are represented by nonempty subsets of $\A$. Let $I\subseteq Q$ denote the initial state of $\D$. Let $a\in\Sigma$, and let $w\in \Sigma^*$ be such that $|w|_a \ge 1$. Denote $Iw = S$ and consider any $b\in\Sigma\cup\{\eps\}$. Then, since $sba = saba$ in $\A$, $Sba = \bigcup_{s\in S} sba = \bigcup_{s\in S} saba = Saba$.
  \end{proof}

  \begin{replemma}{0ptNFAhard}
    The 0-piecewise testability problem for ptNFAs is coNP-hard.
  \end{replemma}
  \begin{proof}
    We reduce the complement of CNF satisfiability. Let $U=\{x_1,x_2,\ldots,x_n\}$ be a set of variables and $\varphi = \varphi_1 \land \varphi_2 \land \ldots \land \varphi_m$ be a formula in CNF, where every $\varphi_i$ is a disjunction of literals. Without loss of generality, we may assume that no clause $\varphi_i$ contains both $x$ and $\neg x$. Let $\neg \varphi$ be the negation of $\varphi$ obtained by the de Morgan's laws. Then $\neg\varphi = \neg\varphi_1 \lor \neg\varphi_2 \lor \ldots \lor \neg\varphi_m$ is in DNF. For every $i=1,\ldots,m$, define $\beta_i = \beta_{i,1}\beta_{i,2}\ldots\beta_{i,n}$, where 
    \[
      \beta_{i,j} = \left\{
        \begin{array}{ll}
          0+1 & \text{ if } x_j \text{ and } \neg x_j \text{ do not appear in } \neg\varphi_i\\
          0   & \text{ if } \neg x_j \text{ appears in } \neg\varphi_i\\
          1   & \text{ if } x_j \text{ appears in } \neg\varphi_i
        \end{array}
        \right.
    \]
    for $j=1,2,\ldots,n$. Let $\beta = \bigcup_{i=1}^{m} \beta_{i}$. Then $w\in L(\beta)$ if and only if $w$ satisfies some $\neg\varphi_i$. That is, $L(\beta) = \{0,1\}^n$ if and only if $\neg\varphi$ is a tautology, which is if and only if $\varphi$ is not satisfiable. Note that by the assumption, the length of every $\beta_{i}$ is exactly $n$.

    We construct a ptNFA $\M$ as follows (the transitions are the minimal sets satisfying the definitions). The initial state of $\M$ is state $0$. For every $\beta_{i}$, we construct a deterministic path consisting of $n+1$ states $\{q_{i,0},q_{i,1},\ldots,q_{i,n}\}$ with transitions $q_{i,\ell+1} \in q_{i,\ell} \cdot \beta_{i,\ell}$ and $q_{i,0} = 0$. In addition, we add $n+1$ states $\{\alpha_1,\alpha_2,\ldots,\alpha_{n+1}\}$ and transitions $\alpha_{\ell+1} \in \alpha_\ell \cdot a$, for $\ell < n+1$ and $\alpha_0 = 0$, and $\alpha_{n+1} \in \alpha_{n+1} \cdot a$, where $a\in \{0,1\}$. This path is used to accept all words of length different from $n$. Finally, we add $n$ states $\{r_1,\ldots,r_n\}$ and transitions $r_{i+1} \in r_i \cdot a$, for $i<n$, and $\alpha_{n+1} \in r_n \cdot a$, where $a\in \{0,1\}$. These states are used to complete $\M$ by adding a transition from every state $q$ to $r_1$ under $a$ if $a$ is not defined in $q$. They ensure that any word of length $n$ that does not belong to $L(\beta)$ is not accepted by $\M$. The accepting states of $\M$ are the states $\{0,q_{1,n},\ldots,q_{m,n}\}\cup \{\alpha_1,\ldots\alpha_{n+1}\}\setminus\{\alpha_n\}$. Notice that $\M$ is partially ordered, complete and satisfies the UMS property. Indeed, the UMS property is satisfied since the only state with self-loops is the unique maximal state $\alpha_{n+1}$. The automaton accepts the language $L(\M) = L(\beta) \cup \{w \in \{0,1\}^* \mid |w| \neq n\}$.
    
    By Theorem~\ref{thm10}, the language is piecewise testable. It is 0-piecewise testable if and only if $L(\M)=\{0,1\}^*$, which is if and only if $L(\beta) = \{0,1\}^n$.
  \end{proof}

  \begin{replemma}{lemKtoK+1}
    For $k\ge 0$, $k$-piecewise testability is polynomially reducible to $(k+1)$-piecewise testability.
  \end{replemma}
  \begin{proof}
    Let $L_k$ over $\Sigma_k$ be a piecewise testable language recognized by a ptNFA $\M_k$ with the set of initial states $I_k=\{i_1,\ldots,i_\ell\}$. We construct the language $L_{k+1}$ over the alphabet $\Sigma_{k+1} = \Sigma_k \cup \{a_{k+1}\}$, where $a_{k+1} \notin \Sigma_k$, as depicted in Figure~\ref{reduction}. Namely, $\M_{k+1}$ recognizing the language $L_{k+1}$ is constructed from $\M_k$ by adding self-loops under $a_{k+1}$ to every state of $\M_k$ and adding, for every initial state $i$ of $\M_k$, a new state $i'$ that contains self-loops under all letters from $\Sigma_k$ and goes to the initial states $i$ of $\M_k$ under $a_{k+1}$. The initial states of $\M_{k+1}$ are the new states $i'$, the accepting states are the accepting states of $\M_k$. Notice that the automaton $\M_{k+1}$ is a ptNFA.
    \begin{figure}
      \centering
      \includegraphics{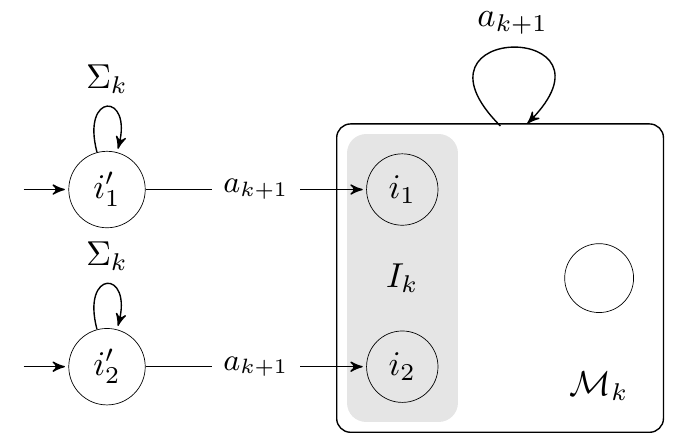}
      \caption{The ptNFA $\M_{k+1}$ constructed from the ptNFA $\M_k$ with two initial states $I_k=\{i_1,i_2\}$}
      \label{reduction}
    \end{figure}
    We now prove that $L_k$ is $k$-piecewise testable if and only if $L_{k+1}$ is $(k+1)$-piecewise testable.
    
    Assume that $L_k$ is $k$-piecewise testable. Let $x,y\in \Sigma_{k+1}^*$ be two words such that $x\sim_{k+1} y$. Since $k+1\ge 1$, we have that $\alp(x) = \alp(y)$. If $a_{k+1} \notin \alp(x)$, then neither $x$ nor $y$ belongs to $L_{k+1}$. Thus, assume that $a_{k+1}$ appears in $x$ and $y$. Then $x=x' a_{k+1} x''$ and $y=y' a_{k+1} y''$, where $a_{k+1}\notin \alp(x'y')$. By Lemma~\ref{lemma2}, $x'' \sim_k y''$. By construction, the words $x''$ and $y''$ are read in $\M_k$ extended with the self-loops under $a_{k+1}$. Let $p:\Sigma_{k+1}^* \to \Sigma_k^*$ denote a morphism such that $p(a_{k+1})=\eps$ and $p(a)=a$ for every $a\in\Sigma_k$. Since no $a_{k+1}$-transition changes the state in any computation of $\M_k$, the sets of states reachable by $x$ and $y$ in $\M_{k+1}$ are exactly those reachable by $p(x'')$ and $p(y'')$ in $\M_k$. Since $L_k$ is $k$-piecewise testable, either both contain an accepting state or neither does. Hence $x$ is accepted if and only if $y$ is accepted, which shows that $L_{k+1}$ is $(k+1)$-piecewise testable.
    
    On the other hand, assume that $L_k$ is not $k$-piecewise testable. Then there exist words $x$ and $y$ such that $x \sim_k y$ and $|L_k\cap\{x,y\}| = 1$. Let $w\in\Sigma_k^*$ be such that $\fun{sub}_{k+1}(w)=\{ u\in\Sigma_k^* \mid |u| \le k+1\}$. Then, $w a_{k+1} x \sim_{k+1} w a_{k+1} y$ and $|L_{k+1} \cap \{w a_{k+1} x, w a_{k+1} y\}| = 1$. This shows that $L_{k+1}$ is not $(k+1)$-piecewise testable.
  \end{proof}

  \begin{remark}[Parallel composition]
    A morphism $p:\Sigma^* \to \Sigma_o^*$, for $\Sigma_o\subseteq \Sigma$, defined as $p(a)=a$, for $a\in\Sigma_o$, and $p(a)=\eps$, otherwise, is called a (natural) projection. Arguments similar to those used in the proof of Lemma~\ref{lemKtoK+1} show that piecewise testable languages are closed under inverse projection. A parallel composition of languages $(L_i)_{i=1}^{n}$ over the alphabets $(\Sigma_i)_{i=1}^{n}$ is defined as $\parallel_{i=1}^{n} L_i = \bigcap_{i=1}^{n} p^{-1}(L_i)$, where $p: (\bigcup_{i=1}^{n} \Sigma_i)^* \to \Sigma_i^*$ is a natural projection. As a consequence, piecewise testable languages are closed under parallel composition. On the other hand, note that piecewise testable languages are not closed under natural projection.
  \end{remark}
  
  \begin{reptheorem}{theorem27}
    Let $\Sigma$ be a fixed alphabet with $c=|\Sigma|\ge 2$, and let $k\ge 0$. Then the problem to decide whether the language of a ptNFA $\A$ over $\Sigma$ is $k$-piecewise testable is coNP-complete.
  \end{reptheorem}
  \begin{proof}
    Let $d$ denote the depth of $\A$. Then the language $L(\A)$ is $d$-piecewise testable. If $k \ge d$, then the answer is {\sc Yes}. Thus, assume that $k < d$. Notice that if $u \sim_d v$, then $u \sim_k v$, but the opposite does not hold. If $L(\A)$ is not $k$-piecewise testable, then there exist two words $x\in L(\A)$ and $y\notin L(\A)$ such that $x \sim_k y$. This means that $x \not\sim_d y$, hence we can guess the minimal representatives of the $x/_{\sim_d}$ and $y/_{\sim_d}$ classes that are of length $O(d^c)$, see the discussion above, which is polynomial in the depth of $\A$, and check that $x\in L(\A)$ and $y\notin L(\A)$, and that $x\sim_k y$. The last step requires to test all words up to length $k$ for embedding in both words. However, it is at most $kc^k$ words, which is a constant.
    
    To prove hardness, we reduce 0-piecewise testability to $k$-piecewise testability, $k\ge 1$. First, notice that the proof of Lemma~\ref{lemKtoK+1} cannot be used, since the alphabet there grows proportionally to $k$. However, the proof here is a simple modification of that proof. Let $\M_0$ over $\Sigma_0$ be a ptNFA. Construct the ptNFA $\M_{k}$ over the alphabet $\Sigma=\Sigma_0 \cup \{a\}$, where $a\notin\Sigma_0$, as depicted in Figure~\ref{reduction2}. Namely, $\M_{k}$ is constructed from $\M_0$ by adding self-loops under $a$ to every state of $\M_0$, and by adding $k$ new states $i_{j,1},\ldots,i_{j,k}$ for every initial state $i_j$ of $\M_k$. Every $i_{j,\ell}$ contains self-loops under all letters from $\Sigma_0$ and $i_{j,\ell}$ goes to $i_{j,\ell+1}$ under $a$, for $1\le \ell < k-1$, and $i_{j,k}$ goes to the initial states $i_j$ of $\M_0$ under $a$. The initial states of $\M_{k}$ are the states $i_{j,1}$, the accepting states are the accepting states of $\M_0$. Note that $\M_k$ is a ptNFA.
    \begin{figure}
      \centering
      \includegraphics{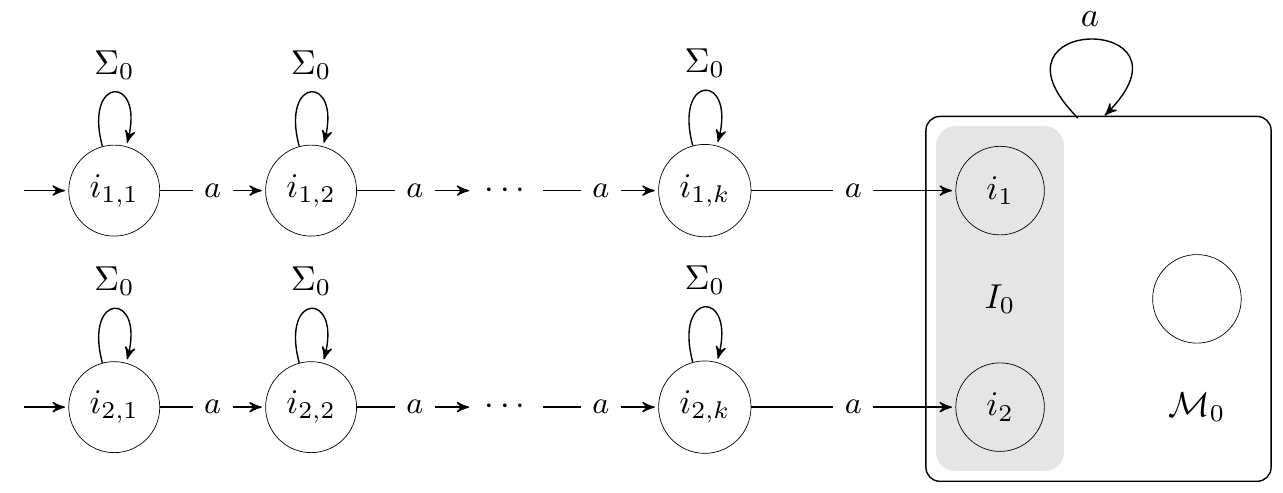}
      \caption{The ptNFA $\M_{k}$ constructed from a ptNFA $\M_0$ with two initial states}
      \label{reduction2}
    \end{figure}
    We now prove that $L(\M_k)$ is $k$-piecewise testable if and only if $L(\M_{0})$ is $0$-piecewise testable.
    
    Assume that $L(\M_0)$ is $0$-piecewise testable. Let $x,y\in \Sigma^*$ be two words such that $x\sim_{k} y$. If $a^k \not\preccurlyeq x$, then $x \notin L(\M_k)$, and $x\sim_k y$ implies that $a^k \not\preccurlyeq y$, hence $y \notin L(\M_{k})$ either. Thus, assume that $a^k\preccurlyeq x$ and $a^k\preccurlyeq y$. Then $x=x_1ax_2a\ldots x_k a x''$ and $y=y_1 a y_2 a \ldots y_k a y''$, where $a\notin \alp(x_1\cdots x_ky_1\cdots y_k)$. By Lemma~\ref{lemma2} applied $k$-times, $x'' \sim_0 y''$. Notice that, by construction, the words $x''$ and $y''$ are read in $\M_0$ extended with the self-loops under $a$ and the sets of states reachable by $x$ and $y$ in $\M_{k}$ are exactly those reachable by $x''$ and $y''$ in $\M_0$. Let $p:\Sigma^* \to \Sigma_0^*$ denote a morphism such that $p(a)=\eps$ and $p(b)=b$ for $b\in\Sigma_0$. Since no $a$-transition changes the state in any computation of $\M_0$, the sets of states reachable by $x''$ and $y''$ in $\M_0$ are exactly those reachable by $p(x'')$ and $p(y'')$, respectively. Since $L(\M_0)$ is $0$-piecewise testable, either both contain an accepting state or neither does. Together, $x$ is accepted by $\M_k$ if and only if $y$ is accepted by $\M_k$, which shows that $L(\M_{k})$ is $k$-piecewise testable.
    
    On the other hand, assume that $L(\M_0)$ is not $0$-piecewise testable. Then there are two words $x \in L(\M_0)$ and $y\notin L(\M_0)$. Let $w\in\Sigma_0^*$ be such that $\fun{sub}_{k}(w)=\{ u\in\Sigma_0^* \mid |u| \le k\}$. Then, we have that $(w a)^k x \sim_{k} (wa)^k y$ and $|L(\M_{k}) \cap \{ (wa)^{k} x, (wa)^{k} y\}| = 1$. This shows that $L(\M_{k})$ is not $k$-piecewise testable.
  \end{proof}

  \begin{repcorollary}{cor28}
    The $k$-piecewise testability problem for ptNFAs over an alphabet $\Sigma$ is coNP-hard for $k\ge 0$ even if $|\Sigma|=3$.
  \end{repcorollary}
  \begin{proof}
    It is shown in Lemma~\ref{0ptNFAhard} that 0-piecewise testability for ptNFAs is coNP-hard for a binary alphabet. The hardness proof of the previous theorem then shows that, for any $k\ge 1$, $k$-piecewise testability is coNP-hard for a ternary alphabet.
  \end{proof}

  \begin{reptheorem}{thmP}
    The $k$-piecewise testability problem for ptNFAs over a unary alphabet is decidable in polynomial time. The result holds even if $k$ is part of the input.
  \end{reptheorem}
  \begin{proof}
    Let $\A$ be a ptNFA of depth $d$. Then the language $L(\A)$ is $d$-piecewise testable by Theorem~\ref{thmMain} and the minimal representatives of $\sim_d$-classes are of length at most $d$; there are at most $d+1$ equivalence classes. If $k\ge d$, then the language $L(\A)$ is $k$-piecewise testable, since every $d$-piecewise testable language is also $(d+1)$-piecewise testable. If $k < d$, then the language $L(\A)$ is not $k$-piecewise testable if and only if there are two words of length at most $d$ that are $\sim_k$-equivalent and only one of them is accepted. Since all words of length less than $k$ are $\sim_k$-equivalent only with itself and all unary words of length at least $k$ are $\sim_k$-equivalent, it can be checked in polynomial time whether there is a word of length at least $k+1$ and at most $d$ with a different acceptance status than $a^k$.
  \end{proof}

  \begin{reptheorem}{thm30}
    Both piecewise testability and $k$-piecewise testability problems for NFAs over a unary alphabet are coNP-complete.
  \end{reptheorem}
  \begin{proof}
    We first show that to check piecewise testability for NFAs over a unary alphabet is in coNP. To do this, we show how to check non-piecewise testability in NP. By Fact~\ref{thm:characterization}, we need to check that the corresponding DFA is partially ordered and confluent. However, confluence is trivially satisfied because there is no branching in a DFA over a single letter. Partial order is violated if and only if there exist three words $a^{\ell_1}$, $a^{\ell_2}$ and $a^{\ell_3}$ with $\ell_1 < \ell_2 < \ell_3$ such that $I\cdot a^{\ell_1} = I\cdot a^{\ell_3} \neq I \cdot a^{\ell_2}$ and one of these sets is accepting and the other is not (otherwise they are equivalent). The lengths are bounded by $2^n$, where $n$ denotes the number of states of the NFA, and can be guessed in binary. The fast matrix multiplication can then be used to compute resulting sets of those transitions in polynomial time.
    
    Thus, we can check in coNP whether the language of an NFA is piecewise testable. If so, then it is $2^n$-piecewise testable, since the depth of the minimal DFA is bounded by $2^n$, where $n$ is the number of states of the NFA. Let $M$ be the transition matrix of the NFA. To show that it is not $k$-piecewise testable, we need to find two $\sim_k$-equivalent words such that exactly one of them belongs to the language of the NFA. Since every class defined by $a^\ell$, for $\ell < k$, is a singleton, we need to find $k< \ell \le 2^n$ such that $a^k \sim_k a^\ell$ and only one of them belongs to the language. This can be done in nondeterministic polynomial time by guessing $\ell$ in binary and using the matrix multiplication to obtain the reachable sets in $M^k$ and $M^\ell$ and verifying that one is accepting and the other is not.

    We now show that both problems are coNP-hard. To do this, we use the proof given in~\cite{StockmeyerM73} showing that universality is coNP-hard. We recall it here for convenience. 
    
    Let $\varphi$ be a formula in 3CNF with $n$ distinct variables, and let $C_k$ be the set of literals in the $k$th conjunct, $1 \le k \le m$. The assignment to the variables can be represented as a binary vector of length $n$. Let $p_1,p_2,\ldots,p_n$ be the first $n$ prime numbers. For a natural number $z$ congruent with 0 or 1 modulo $p_i$, for every $1\le i \le n$, we say that $z$ satisfies $\varphi$ if the assignment $(z \bmod p_1, z \bmod p_2,\ldots, z \bmod p_n)$ satisfies $\varphi$. Let 
    \[
      E_0 = \bigcup_{k=1}^{n} \bigcup_{j=2}^{p_k-1} 0^j\cdot (0^{p_k})^*
    \]
    that is, $L(E_0) = \{ 0^z \mid \exists k \le n, z \not\equiv 0 \bmod p_k \text{ and } z \not\equiv 1 \bmod p_k \}$ is the set of natural numbers that do not encode an assignment to the variables. For each conjunct $C_k$, we construct an expression $E_k$ such that if $0^z \in L(E_k)$ and $z$ is an assignment, then $z$ does not assign the value 1 to any literal in $C_k$. For example, if $C_k = \{x_{1,r}, \neg x_{1,s}, x_{1,t}\}$, for $1 \le  r,s,t \le n$ and $r,s,t$ distinct, let $z_k$ be the unique integer such that $0\le z_k < p_rp_sp_t$, $z_k \equiv 0 \bmod p_r$, $z_k \equiv 1 \bmod p_s$, and $z_k \equiv 0 \bmod p_t$. Then
    \[
      E_k = 0^{z_k} \cdot (0^{p_rp_sp_t})^*\,.
    \]
    Now, $\varphi$ is satisfiable if and only if there exists $z$ such that $z$ encodes an assignment to $\varphi$ and $0^z \notin L(E_k)$ for all $1\le k \le m$, which is if and only if $L(E_0 \cup \bigcup_{k=1}^{m} E_k) \neq 0^*$.
    
    The proof shows that universality is coNP-hard for NFAs over a unary alphabet. Let $p_n\# = \Pi_{i=1}^{n} p_i$. If $z$ encodes an assignment of $\varphi$, then, for any natural number $c$, $z+c\cdot p_n\#$ also encodes an assignment of $\varphi$. Indeed, if $z \equiv x_i \bmod p_i$, then $z + c\cdot p_n\# \equiv x_i \bmod p_i$, for every $1\le i\le n$. This shows that if  $0^z \notin L(E_k)$ for all $k$, then $0^z (0^{p_n\#})^* \cap L(E_0 \cup \bigcup_{k=1}^{m} E_k) = \emptyset$. Since both languages are infinite, the minimal DFA recognizing the language $L(E_0 \cup \bigcup_{k=1}^{m} E_k)$ must have a non-trivial cycle. Therefore, if the language is universal, then it is $k$-piecewise testable, for any $k\ge 0$, and if it is non-universal, then it is not piecewise testable. This proves coNP-hardness of $k$-piecewise testability for every $k\ge 0$.
  \end{proof}

\end{document}